\documentclass[a4paper,USenglish,cleveref,thm-restate,numberwithinsect,pdfa]{lipics-v2021}

\hideLIPIcs

\nolinenumbers

\crefname{observation}{Observation}{Observations}

\title{Exploring Minimal Blocking Sets for Kernelization Lower Bounds for \texorpdfstring{$\mathcal{F}$}{F}-Minor Deletion}
\titlerunning{Exploring Minimal Blocking Sets for Kernelization Lower Bounds}

\author{Ahmed Ghazy}{CISPA Helmholtz Center for Information Security, Saarbrücken, Germany \and Saarland University, Saarbrücken, Germany}{ahmed.ghazy@cispa.de}{https://orcid.org/0009-0009-7414-5871}{}
\author{Jakob Greilhuber}{CISPA Helmholtz Center for Information Security, Saarbrücken, Germany \and Saarland University, Saarbrücken, Germany}{jakob.greilhuber@cispa.de}{https://orcid.org/0009-0001-8796-6400}{}
\author{Tim A.\ Hartmann}{CISPA Helmholtz Center for Information Security, Saarbrücken, Germany}{hartmann@algo.rwth-aachen.de}{https://orcid.org/0000-0002-1028-6351}{}

\authorrunning{A.~Ghazy, J.~Greilhuber, and T.~A.~Hartmann}

\Copyright{Ahmed Ghazy, Jakob Greilhuber, Tim A.\ Hartmann}

\ccsdesc[500]{Theory of computation~Parameterized complexity and exact algorithms}

\keywords{Kernelization Lower Bounds, Minimal Blocking Sets, \texorpdfstring{$\mathcal{F}$}{F}-Minor Deletion, Pathwidth Deletion}

\relatedversion{This is the full version of an IPEC 2026 contribution.}

\acknowledgements{We thank Marin Bougeret, Eric Brandwein, and Ignasi Sau for helpful and constructive communication.}

\usepackage[noadjust]{cite}
\usepackage[skins]{tcolorbox}

\newcommand{\np}{\mathsf{NP}}
\newcommand{\conppoly}{\mathsf{coNP}/\mathsf{poly}}

\newcommand{\range}[2][1]{[#1,#2]}

\newcommand{\problemname}[1]{\textnormal{\textsc{#1}}}
\newcommand{\upsf}[1]{\textnormal{\textsf{#1}}}

\newcommand{\BB}{\mathcal{B}}
\newcommand{\CC}{\mathcal{C}}
\newcommand{\FF}{\mathcal{F}}
\newcommand{\GG}{\mathcal{G}}
\newcommand{\HH}{\mathcal{H}}
\newcommand{\SSS}{\mathcal{S}}
\newcommand{\TT}{\mathcal{T}}
\newcommand{\XX}{\mathcal{X}}
\newcommand{\LL}{\mathcal{L}}
\newcommand{\RR}{\mathcal{R}}

\DeclareMathOperator{\pw}{pw}
\DeclareMathOperator{\tw}{tw}
\DeclareMathOperator{\leafblock}{LeafBlocks}

\newcommand{\defparproblem}[4]{\begin{tcolorbox}[
            enhanced,
            colback=white,
            colframe=black,
            boxrule = 0.5pt,
            coltitle=black,
            title=#1,
            rounded corners,
            attach boxed title to top left={yshift=-10pt, xshift=6pt},
            bottom=-0.5mm,
            boxed title style={
                    interior style={fill=white},
                    frame hidden
                }
        ]\begin{tabularx}{12.5cm}{l r X p {0.5cm}}
             & Input:     & #2 \\
             & Parameter: & #3 \\
             & Question:  & #4
        \end{tabularx}
    \end{tcolorbox}
}

\newcommand{\twEtaDeletion}{\problemname{Tw-$\eta$-Deletion}}

\newcommand{\vcDistToG}{\problemname{Vertex Cover}/\textnormal{\texttt{dist-to-$\mathcal{G}$}}}

\newcommand{\pwEtaDeletion}{\problemname{Pw-$\eta$ Deletion}}

\newcommand{\fDeletion}{\problemname{$\mathcal{F}$-Minor Deletion}}

\newcommand{\pwEtaDeletionEtaPlusOne}{\problemname{Pw-$\eta$ Deletion}/\textnormal{\texttt{dist-to-$\pwLeqEtaClass[\eta+1]$}}}

\newcommand{\fDeletionDistToG}[1][$\GG$]{$\FF$-\problemname{Minor Deletion}/\textnormal{\texttt{dist-to-#1}}}

\newcommand{\OPT}{\textnormal{\texttt{OPT}}}
\newcommand{\opt}{\OPT}

\newcommand{\prune}[2]{#1\textnormal{\textsf{-prune}}(#2)}

\newcommand{\pwLeqEtaClass}[1][\eta]{\GG_{\pw \leq #1}}

\begin{document}

\maketitle

\begin{abstract}
  For \textsc{Vertex Cover}, a minimal blocking set of a graph $G$ is an inclusion-wise minimal vertex set contained in no minimum vertex cover of $G$.
Bougeret et al.~[SIDMA 2022] showed that, for all minor-closed graph classes $\GG$, \textsc{Vertex Cover} has a polynomial kernel parameterized by the distance to $\GG$ if and only if $\GG$ has bounded minimal blocking sets.\footnote{We assume $\np \not \subseteq \conppoly$ throughout the abstract and introduction.}

In this work, we explore how far this connection between minimal blocking sets and kernelization extends to \fDeletion{} parameterized by the deletion distance to $\GG$.
For a fixed nonempty finite family $\mathcal{F}$ of graphs, given a graph $G$ and integer $k$, \fDeletion{} asks whether $G$ can be made free of minors from $\FF$ by deleting at most $k$ vertices.

A minimal $\FF$-blocking set of $G$ is an inclusion-wise minimal vertex set not contained in any optimal \fDeletion{} solution of $G$.
For families $\mathcal{F}$ of connected graphs, we show that if some graph class $\GG$ closed under disjoint union has unbounded minimal $\FF$-blocking sets, then \fDeletion{} parameterized by the deletion distance to $\GG$ has no polynomial kernel.
This generalizes a result of Hols et al.~[SIDMA 2022] for \textsc{Vertex Cover}.

We then show that this criterion does not capture all kernelization lower bounds for \fDeletion{}, even for minor-closed graph classes.
Motivated by the work of Dekker and Jansen~[Discrete Appl. Math. 2024], we study the more general notion of extended $\FF$-blocking sets.
Under additional conditions, we show that $\GG$ having unbounded minimal extended $\FF$-blocking sets rules out polynomial kernels.

For every fixed integer $\eta \geq 1$, \pwEtaDeletion{} is the problem of deleting at most $k$ vertices to reduce the pathwidth of a given graph to at most $\eta$.
We show that it has no polynomial kernel parameterized by the distance to graphs with pathwidth at most $\eta+1$.
Moreover, for all families $\FF$ of graphs, each connected and having at least two vertices, we show that \fDeletion{} parameterized by the distance to graphs with pathwidth at most $\max_{F \in \FF} \pw(F) + 1$ has no polynomial kernel.
These lower bounds rely in part on a pathwidth analysis of a minor modification of a construction due to Donkers and Jansen [J. Comput. Syst. Sci. 2021]. \end{abstract}

\newpage

\section{Introduction}
\label{sec:introduction}
\subparagraph*{Kernelization.}
Kernelization~\cite{cyganParameterizedAlgorithms2015,fominKernelizationTheoryParameterized2019} is the formal study of preprocessing algorithms for parameterized problems.
Given an instance $(I,k)$ of a parameterized problem (with parameter $k$), the goal is to output an equivalent instance $(I',k')$ in polynomial time, where both $|I'|$ and $k'$ are bounded by $f(k)$ for some computable function $f$.
If $f$ is a polynomial function, then such an algorithm is called a \emph{polynomial kernel}.
It is well-known that a decidable parameterized problem is fixed-parameter-tractable (\textsf{FPT}) if and only if it has a kernel, but not necessarily one of polynomial size.
One of the main goals in the field is understanding which problems admit polynomial kernels.

\subparagraph*{Vertex cover.}
\problemname{Vertex Cover} is perhaps the most-studied problem in this paradigm.
There are many fundamental and different techniques to obtain polynomial kernels for the problem when parameterizing by the solution size~\cite{bussNondeterminism1993,fominKernelizationTheoryParameterized2019,cyganParameterizedAlgorithms2015,fellowsBlowupsWinWins2003,chorLinearKernelsLinear2004,chenVertexCoverFurther2001,lampisKernelOrder22011}.
When going beyond the solution size as the parameter, a next natural step is considering the (vertex-deletion) distance to a graph class $\GG$ as the parameter, such that the problem is polynomial-time-solvable on $\GG$.
This follows the paradigm of parameterizing by ``distance from triviality''~\cite{JansenVertexCoverKernelization2013,DBLP:conf/iwpec/GuoHN04,bougeretKernelizationDichotomiesHitting2026STACS}.

Formally, a problem instance of \vcDistToG{} consists of a graph $G$, integer $k$, and set $M \subseteq V(G)$ (a \emph{modulator} to $\GG$) such that $G - M \in \mathcal{G}$.
The parameter is $|M|$, and the question is whether $G$ has a vertex cover of size at most $k$.
One of the most important works in this direction is the result of Jansen and Bodlaender~\cite{JansenVertexCoverKernelization2013} that shows the problem admits a polynomial kernel when $\GG$ is the graph class of forests.
This result has led to an array of further kernelization results for \vcDistToG{}~\cite{bougeretHowMuchDoes2019,cyganHardnessLosingWidth2014,fominVertexCoverStructural2016,hols2022elimination,majumdarPolynomialKernelsVertex2018} showing the existence and non-existence of polynomial kernels for the considered parameterization.
Eventually, the question for which minor-closed graph classes $\mathcal{G}$ the problem admits a polynomial kernel was settled: Bougeret et al.~\cite{bougeretBridgedepthCharacterizesWhich2022} show that, when $\mathcal{G}$ is minor-closed, \vcDistToG{} admits a polynomial kernel if and only if $\mathcal{G}$ has bounded bridgedepth.

\subparagraph*{Minimal blocking sets.}
There is a central concept which kept (sometimes implicitly) appearing in these works: \emph{minimal blocking sets}.
For \textsc{Vertex Cover}, a blocking set of a graph $G$ is a set $X \subseteq V(G)$ such that no minimum vertex cover of $G$ contains all of $X$.
A blocking set is minimal if no proper subset of it is a blocking set.
See \cref{fig:blocking_set_example} for a small example.
Already the work of Jansen and Bodlaender~\cite{JansenVertexCoverKernelization2013} exhibits that bounding the size of minimal blocking sets is useful for kernelization, and shows that any minimal blocking set of a forest has size at most $2$.
The same concept similarly plays a central role in subsequent papers about the problem~\cite{hols2022elimination,bougeretBridgedepthCharacterizesWhich2022,fominVertexCoverStructural2016,holsSmallerParametersVertex2017,majumdarPolynomialKernelsVertex2018,DBLP:conf/wg/AntipovK25,bougeretHowMuchDoes2019}.\footnote{The parameterized complexity of computing the maximum size of a
minimal blocking set has also been studied~\cite{DBLP:journals/algorithmica/AraujoBCS23}.}

\suppressfloats[t]
\begin{figure}[t]
	\centering
	\begin{subfigure}[t]{0.31\linewidth}
		\centering
		\includegraphics[page=1]{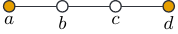}
		\subcaption{Candidate $X$ (orange).}
	\end{subfigure}
	\hfill
	\begin{subfigure}[t]{0.31\linewidth}
		\centering
		\includegraphics[page=2]{figures/blocking-set-example.pdf}
		\subcaption{Minimum cover $\{a,c\}$.}
	\end{subfigure}
	\hfill
	\begin{subfigure}[t]{0.31\linewidth}
		\centering
		\includegraphics[page=3]{figures/blocking-set-example.pdf}
		\subcaption{Minimum cover $\{b,d\}$.}
	\end{subfigure}
	\caption{A minimal blocking set for \textsc{Vertex Cover} ($\FF = \{K_2\}$) on the path $a,b,c,d$. Unselected vertices of $X$ are orange, selected cover vertices are red, and dashed edges are incident to a selected vertex.
	No minimum cover contains both $a$ and $d$, so $X = \{a,d\}$ is a blocking set; but the displayed covers show that $a$ and $d$ each lie in some minimum cover, so $X$ is a minimal blocking set.}
	\label{fig:blocking_set_example}
\end{figure}
Interestingly, minimal blocking sets are powerful both for the design of kernels and for showing kernelization lower bounds.
In fact, under further (in some sense mild) assumptions on $\GG$, it is possible to reduce the number of connected components of $G - M$ to a polynomial in $|M|$ when there is some constant $c$ bounding the size of any minimal blocking set of any graph in $\GG$~\cite{hols2022elimination}.
In contrast, if $\GG$ is closed under disjoint union  
and minimal blocking sets of graphs in $\GG$ can be arbitrarily large, then no polynomial kernel for \vcDistToG{} exists~\cite{hols2022elimination}.
In fact, the concept of minimal blocking sets is so central to the problem that it yields a complete kernelization dichotomy for minor-closed graph classes $\GG$: Bougeret et al.~\cite{bougeretBridgedepthCharacterizesWhich2022} show that minor-closed $\GG$ has bounded minimal blocking sets if and only if $\GG$ has bounded bridgedepth. 
Therefore, there is a polynomial kernel for \vcDistToG{} if and only if $\GG$ has bounded minimal blocking sets.

Motivated by this importance of minimal blocking sets for \textsc{Vertex Cover}, we explore the same (and related) concepts for the family of \fDeletion{} problems.
For a fixed finite family of graphs $\FF$, an instance consists of a graph $G$ and integer $k$, and the question is whether there is a set $S \subseteq V(G)$ of size at most $k$ such that $G - S$ has no graph of $\FF$ as a minor.
When setting $\FF = \{P_2\}$, one obtains \problemname{Vertex Cover}; $\FF = \{K_3\}$ yields \problemname{Feedback Vertex Set}, and there are many further fundamental problems which are captured by this framework; see e.g., Bougeret et al.~\cite{bougeretKernelizationDichotomiesHitting2026STACS}.
Parameterized by the solution size the problem is known to be \textsf{FPT} for all families $\FF$~\cite{DBLP:conf/icalp/SauST20}. 
When $\FF$ contains a planar graph, Fomin et al.~\cite{fominPlanarFdeletionApproximation2012} give a randomized polynomial kernel, which can be derandomized using an algorithm due to Gupta et al.~\cite{guptaLosingTreewidthSeparating2019}, as noted by Bougeret et al.~\cite{bougeretKernelizationDichotomiesHitting2026STACS}.
But it remains a major open question whether the problem admits a polynomial kernel for all families $\FF$.

It is also known that, for all fixed families $\FF$ the problem \fDeletion{} parameterized by the size of a vertex cover admits a polynomial kernel~\cite{DBLP:journals/jcss/FominJP14}. 
For every fixed finite family $\FF$ of connected graphs and every
fixed integer $d \geq 1$, the problem even admits a
polynomial kernel when parameterized by the distance to the class
of graphs of treedepth at most $d$~\cite{jansen2020polynomial}.

When parameterizing by the distance to a minor-closed graph class $\GG$, some complete dichotomies going beyond the one for \textsc{Vertex Cover} are known.
When $\FF = \{K_3\}$ a polynomial kernel exists if and only if $\GG$ has bounded elimination distance\footnote{The elimination distance~\cite{bulianFixedparameterTractableDistances2017,bulianGraphIsomorphismParameterized2016} is a parameter generalizing treedepth.} to the class of forests~\cite{dekkerKernelizationFeedbackVertex2024}.
This result has been generalized to show that, when all graphs in $\FF$ are biconnected on at least three vertices and at least one graph is planar, then \fDeletion{} parameterized by the distance to $\GG$ admits a polynomial kernel if and only if $\GG$ has bounded elimination distance to the class of $\FF$-minor free graphs~\cite{bougeretKernelizationDichotomiesHitting2026STACS}.
The latter result is obtained by combining a new kernel with known lower bounds due to Dekker and Jansen~\cite{dekkerKernelizationFeedbackVertex2024}.
Recently, in an update to the arXiv version of~\cite{bougeretKernelizationDichotomiesHitting2026STACS}, new results have appeared that give a similar characterization also for some non-biconnected families~\cite{bougeret2026kernelizationdichotomieshittingminorsv2}.

The main objects of study in this paper are blocking set notions for \fDeletion{} problems and the relation to \fDeletion{} parameterized by the vertex-deletion distance to some graph class $\mathcal{G}$, specifically to obtain kernelization lower bounds.
An $\FF$-blocking set of a graph $G$ is a set $X \subseteq V(G)$ with the property that no minimum set $S$ such that $G - S$ is $\FF$-minor free contains all of $X$.
Our aim is to investigate the usefulness of this and related notions for kernelization lower bounds.
Related blocking set notions already play an important role in kernelization for \fDeletion{} and related problems~\cite{jansen2020polynomial,bougeretKernelizationDichotomiesHitting2026STACS,dekkerKernelizationFeedbackVertex2024,bougeretKernelizationDichotomiesHittingSubgraphs2024}.
Extending the earlier result for \textsc{Vertex Cover}~\cite{hols2022elimination}, Greilhuber and Sharma~\cite{greilhuber2024component} derive quantitative kernelization lower bounds for $d$-\textsc{Component Order Connectivity} from the size of minimal $\FF$-blocking sets.
However, there is no general result linking large minimal blocking sets to kernelization lower bounds for \fDeletion{} problems.

\subsection{Our Results}

As our first main result, we obtain kernelization lower bounds for \fDeletion{} from large minimal $\FF$-blocking sets.
Given a graph $G$ and integer $c$, the graph $c \cdot G$ is the graph created by taking the disjoint union of $c$ copies of $G$.
The problem \fDeletionDistToG{} is \fDeletion{} parameterized by the size of a set $M$ given as part of the input such that $G - M \in \GG$.

\begin{restatable}{theorem}{mainThmBlockingSetsLB}
    \label{main_thm:blocking_sets_kernel_lb}
    Let $\FF$ be a nonempty finite family of connected graphs on at least two vertices, and $\GG$ a graph class.
    Assume there is some $G \in \GG$ with a minimal $\FF$-blocking set $B$, such that for any integer $c \geq 1$ the graph $c \cdot G$ is in $\GG$.
    Then, \fDeletionDistToG{} has no compression of size $O(|M|^{|B| - \varepsilon})$, for any $\varepsilon > 0$, unless $\np \subseteq \conppoly$.
\end{restatable}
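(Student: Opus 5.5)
We follow the approach of Hols et al.\ for \textsc{Vertex Cover} and give a linear-parameter transformation (polynomial-parameter reduction with controlled degree) from a problem known to lack compressions of size $O(N^{d-\varepsilon})$. Let $d = |B|$. The natural source problem is $d$-\textsc{Hitting Set}, or equivalently $d$-\textsc{Set Cover}/\textsc{Vertex Cover} on $d$-uniform hypergraphs parameterized by the number $N$ of vertices. By Dell and van Melkebeek, this problem has no compression of size $O(N^{d-\varepsilon})$ unless $\np \subseteq \conppoly$. The plan is to encode an instance on universe $U$ ($|U| = N$) with hyperedges $E_1,\dots,E_m$, each of size exactly $d$, as an \fDeletion{} instance whose modulator $M$ has size $O(N)$ and whose remainder $G' - M$ is $m \cdot G$ (possibly with a few extra components that are themselves copies of $G$). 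This gives $G' - M \in \GG$, and the lower bound transfers directly.

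\textbf{Construction.} Write $B = \{b_1,\dots,b_d\}$. Since $B$ is a minimal blocking set, every $B \setminus \{b_i\}$ lies in some optimal solution, while $B$ itself lies in none. Let $\opt = \opt(G)$. For each universe element $u \in U$, create gadget vertices in $M$. For each hyperedge $E_j = \{u_1,\dots,u_d\}$, create a copy $G_j$ of $G$ and connect the copy of $b_i$ to the gadget of $u_i$. The goal is to ``attach'' the modulator so that choosing $u$ into the hitting set (deleting $u$'s gadget) frees $b_i$, while not choosing $u$ effectively forces $b_i$ into the solution of $G_j$. To make this forcing work, I would attach to each $b_i$ in $G_j$ a pendant structure built from a graph $F \in \FF$, arranged so that when the gadget vertex of $u$ is not deleted, a fresh $F$-minor through $b_i$ appears that can be cheaply killed only by deleting $b_i$. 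Connectivity of the graphs in $\FF$ and the requirement that they have at least two vertices are what make this local forcing argument possible. The gadget for $u$ will be $|V(F)|$-sized or consist of $k+1$ parallel copies, to make deleting it expensive-but-countable. Set the budget to $k' = m\cdot \opt + k\cdot w$, where $w$ is the cost of selecting one element.

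\textbf{Correctness.} In the forward direction, given a hitting set $H$, delete the gadgets of $H$. Each hyperedge $E_j$ then contains some $u_i \in H$, so in $G_j$ only $B \setminus \{b_i\}$ is forced, and this set is contained in an optimal solution of size $\opt$. In the backward direction, we argue that the solution restricted to each $G_j$ has size at least $\opt$, by disjointness and because $F$-minors in components are hit. By tightness of the budget, each copy uses exactly $\opt$ vertices. If some $E_j$ were unhit, all of $B$ would be forced inside $G_j$, so the copy would need more than $\opt$ vertices since $B$ is blocking. This is a contradiction, so the deleted gadgets form a hitting set of size at most $k$. The modulator has size $O(N\cdot \text{const})$, giving the bound $O(|M|^{d-\varepsilon})$.

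\textbf{Main obstacle.} The delicate step is designing the attachment so that neither direction leaks. Not deleting $u$ must force $b_i$ (or an equivalent cost) without creating $\FF$-minors spanning several copies $G_j$ through $M$, since such minors could be hit more cheaply in other ways. Conversely, deleting $u$ must really free $b_i$. I would first try attaching, for each $b_i$ of each copy, a private copy of some $F \in \FF$ of minimum size, glued at one vertex to $b_i$ and at another vertex to the modulator. If the cross-copy interference proves problematic, I would instead fall back to making each element gadget a set of $k'+1$ twin vertices.
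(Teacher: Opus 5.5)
\textbf{Verdict: there is a genuine gap.} Your skeleton matches the paper's; the statement is obtained there as the case $Y=\emptyset$ of \cref{thm:lb_from_extended_blocking_sets}. It is a linear parameter transformation from $d$-\textsc{Hitting Set} parameterized by the universe size, with one copy $G_j$ of $G$ per set so that $G^\star - M \cong m\cdot G\in\GG$, constant-size element gadgets forming $M$, and a backward direction that uses the blocking property of $B$. However, the element gadget, where all the difficulty lies, is never specified, and the options you sketch fail.

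\emph{Private copies of $F$.} A private copy for each attached $b_i^j$ means $md$ copies. If their private vertices go into $M$, then $|M|=\Omega(m)$, which can be $\Theta(N^d)$. If they stay outside $M$, the components of $G^\star-M$ are no longer copies of $G$, and the hypothesis only gives $c\cdot G\in\GG$. The minor analysis also leaks. Take $\FF=\{P_3\}$ glued at its two ends. If $g_u$ is kept and every $b_i^j$ is deleted, two private middle vertices and the hub $g_u$ still form a $P_3$. If $g_u$ is deleted, a pendant remains at $b_i^j$ and can form a $P_3$ with $G_j-S$, so $b_i$ is not freed.

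\emph{Cost accounting.} With one copy of $G$ per set, selecting an element may cost only one extra vertex. The reason is that any set gadget can be satisfied with no hit element by paying $\opt(G)+1$ inside it: take an optimal solution containing $B\setminus\{b_1\}$ and add $b_1$. With your budget $m\opt(G)+sw$ and $w\ge 2$, a solution can instead overpay in $sw>s$ set gadgets. Already two disjoint sets with $s=1$ then map to a yes-instance. The $(k+1)$- or $(k'+1)$-fold gadgets also make $|M|$ superlinear in $N$, which destroys the exponent. The $(k'+1)$-fold twins cannot be deleted at all, contradicting your idea that selecting means deleting the gadget. Relatedly, ``by tightness each copy uses exactly $\opt$'' is false. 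You need an exchange argument: let $\SSS'$ be the sets whose gadget receives more than $\opt(G)$ vertices; then the selected elements number at most $s-|\SSS'|$ and hit every set outside $\SSS'$.

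\emph{The missing idea: the universe gadget $(H,v,t)$.} Here $v$ has exactly two neighbors, $\opt_\FF(H)=t$, and this optimum is attained both by a solution containing $v$ and by one containing $N_H(v)$.
\begin{itemize}
\item Element $i$ becomes $U_i=H-v$. The vertex of $G_j$'s blocking set assigned to $i$ is made adjacent to both vertices of $N_H(v)$, so together with $U_i$ it induces a copy of $H$. The budget is $m\opt(G)+n(t-1)+s$.
\item Not selecting $i$ costs $t-1$ in $U_i$ and forces every attached blocking-set vertex into the solution.
\item Selecting $i$ costs $t$ and deletes $N_H(v)$. This severs \emph{all} edges between $U_i$ and the set gadgets, so in the forward direction no cross-gadget minor has to be analyzed; connectivity of $\FF$ suffices.
\end{itemize}
Building such gadgets for every connected $\FF$ on at least two vertices needs two separate constructions: one when all graphs of $\FF$ are biconnected, one when some are not.

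\emph{The case $|B|=1$.} Your reduction gives nothing here, since $1$-\textsc{Hitting Set} is trivial. The paper handles it separately: it reduces $2$-\textsc{Hitting Set} to \fDeletion{} with $O(n^2)$ vertices, then takes a disjoint union with any graph of $\GG$ and sets $M=V(G)$.
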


Our theorem generalizes the known quantitative kernelization lower bounds obtained from minimal $\FF$-blocking sets for \textsc{Vertex Cover}~\cite{hols2022elimination} and $d$-\textsc{Component Order Connectivity}~\cite{greilhuber2024component}.
Note that the assumption that $c \cdot G$ is in $\GG$ is always fulfilled when $\GG$ is closed under disjoint union, which is the case for many natural graph classes.

To exhibit the usefulness of \cref{main_thm:blocking_sets_kernel_lb}, let us explain a common approach for showing kernelization lower bounds in the literature.
The known kernelization lower bounds for \textsc{Vertex Cover}~\cite{bougeretBridgedepthCharacterizesWhich2022,hols2022elimination,majumdarPolynomialKernelsVertex2018,cyganHardnessLosingWidth2014}, and also the lower bound for \fDeletion{} parameterized by the distance to graphs of bounded treewidth~\cite{donkersTuringKernelizationDichotomy2021} essentially function this way (for connected $\mathcal{F}$).
The lower bound is typically shown by a reduction from \textsc{CNF-SAT} parameterized by the number of variables (or \textsc{Hitting Set} parameterized by the universe size) for which compression lower bounds are known~\cite{DellM2014}.
For each variable a variable gadget of constant size is created, and, similarly, for each clause a clause gadget is added.
Interestingly, all these reductions share a common core: the central property of each clause gadget is that it has a minimal $\FF$-blocking set $X$, where $|X|$ matches the number of literals of the clause.
\cref{main_thm:blocking_sets_kernel_lb} shows that this ``gold-standard'' approach for \vcDistToG{} extends to all \fDeletion{} problems with connected families $\FF$.

Given \cref{main_thm:blocking_sets_kernel_lb} and the dichotomy for \textsc{Vertex Cover} by Bougeret et al.~\cite{bougeretBridgedepthCharacterizesWhich2022}, it is a natural question whether all kernelization lower bounds for \fDeletion{} problems with minor-closed $\GG$ can be obtained from $\GG$ having unbounded minimal $\FF$-blocking sets.\footnote{When $\GG$ is not minor-closed, the work of Hols et al.~\cite{hols2022elimination} establishes this is not even true for \textsc{Vertex Cover}.}
We show that this is not the case: for $\FF = \{K_3\}$, that is \problemname{Feedback Vertex Set}, there is a minor-closed graph class $\mathcal{G}$ with bounded minimal $\{K_3\}$-blocking sets, such that there is no polynomial kernel for the problem parameterized by the distance to $\GG$.
In fact, gadgets used for the lower bound of Dekker and Jansen~\cite[Theorem 2]{dekkerKernelizationFeedbackVertex2024} give rise to such a graph class.

\begin{restatable}{theorem}{thmBlockingSetsDoNotSuffice}
    \label{thm:fvs_minimal_blocking_sets_do_not_suffice}
    Let $\FF = \{K_3\}$.
    There is a minor-closed graph class $\mathcal{G}$ that has minimal $\FF$-blocking sets of size at most two, such that \fDeletionDistToG{} has no polynomial kernel, unless $\np \subseteq \conppoly$.
\end{restatable}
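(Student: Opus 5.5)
We obtain the class from the gadgets of Dekker and Jansen~\cite[Theorem 2]{dekkerKernelizationFeedbackVertex2024}. Their reduction shows that \problemname{Feedback Vertex Set} parameterized by the deletion distance to some specific family of gadget graphs $\HH$ has no polynomial kernel unless $\np \subseteq \conppoly$. In the constructed instances, $G - M$ is a disjoint union of gadgets, each with bounded elimination distance to forests failing, or more precisely with unbounded elimination distance as the family grows.

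Let $\GG$ be the minor-closure of the class of all disjoint unions of graphs in $\HH$. Then $\GG$ is minor-closed by definition. Since the reduced instances have $G - M \in \GG$, the kernelization lower bound transfers directly: a polynomial kernel for \fDeletionDistToG{} would give one for the restricted instances of Dekker and Jansen. Some care is needed here. If their modulator makes $G-M$ a union of gadgets, containment in $\GG$ is immediate. Otherwise I would pass to the appropriate restriction of their construction, or recheck that their instances lie in this minor-closure.

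The main work is to show that every graph $H \in \GG$ has all minimal $\{K_3\}$-blocking sets of size at most two. I plan three steps.

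\textbf{Step 1: reduce to connected graphs.} I would first show that a minimal blocking set meets a single connected component. Optimal feedback vertex sets of a disjoint union are exactly unions of optimal solutions of the components. So if $X$ meets several components and $X\cap C$ is contained in some optimum of $C$ for each component $C$, then combining these optima gives an optimum containing $X$. Hence $X$ is not blocking, and by minimality $X$ lies within one component.

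\textbf{Step 2: describe connected minors of gadgets.} A connected graph in $\GG$ is a minor of a single gadget. I would describe the gadgets structurally. Roughly, they are trees with a specific attached cycle structure, chosen so that optimal solutions are rigid: each gadget has feedback vertex number that is small relative to its structure, and its cycles pairwise interact in a controlled way.

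\textbf{Step 3: the blocking set argument.} For a connected minor $H$ with $\opt(H)=t$ and a set $X$ with $|X|\ge 3$ such that every $2$-subset of $X$ lies in some optimum, I would show that $X$ lies in an optimum. The approach is an exchange argument using the gadget structure. The likely tool is a ``cactus-like'' or bounded-feedback-structure property: the cycles of $H$ decompose into blocks where optimal solutions are chosen locally, say one vertex per independent cycle cluster. A blocking set then records conflicting local choices, and conflicts occur only pairwise within one block.

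I expect Step 3 to be the main obstacle, since it must handle all minors and not only the gadgets themselves. My first attempt would be to prove that the relevant gadgets, and hence their minors, have all biconnected components simple (cycles or edges) and arranged so that $\opt$ is the maximum number of vertex-disjoint cycles, with a matroid-like packing–hitting duality. Given such duality, blocking reduces to pairs of vertices that ``cover the same cycle slot''. This yields blocking sets of size at most two. If this fails, I would carry out a direct case analysis on the gadget structure under edge deletions and contractions.
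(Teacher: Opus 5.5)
\textbf{There is a genuine gap in Step~3.} Your frame matches the paper's: take the minor-closed class generated by the Dekker--Jansen gadgets, inherit their lower bound, and bound minimal blocking sets through connected members. Concretely, the gadgets are triangle necklaces, so the class is $\TT$. The lower bound then needs no inspection of instances, because the Dekker--Jansen result covers every minor-closed class with unbounded elimination distance to forests.

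Step~3 carries essentially all the content, and the mechanism you propose for it is false. Knowing that all blocks are cycles or edges, together with packing--hitting duality, does not bound minimal blocking sets by two. Take a triangle $c_1c_2c_3$ and glue a triangle $c_ia_ib_i$ at each $c_i$. This graph is a bridgeless cactus whose blocks are all triangles. Its optimum is $3$, which equals its triangle-packing number. Yet $\{a_1,a_2,a_3\}$ is a minimal $\{K_3\}$-blocking set:
\begin{itemize}
\item the only size-$3$ solution containing it is itself, and that misses the central triangle;
\item $\{a_1,a_2,c_3\}$ and its symmetric variants are optimal, so every proper subset is non-blocking.
\end{itemize}
The conflict spans three different blocks, so it is not a ``pairwise conflict within one block''.

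\textbf{The missing idea is the linear structure of $\TT$.} Take a minor-minimal counterexample $T$ with a minimal blocking set $B$, $|B|\ge 3$. It is connected by your Step~1. It is bridgeless, because deleting a bridge changes no solution. Connected bridgeless graphs in $\TT$ are triangle paths: a path $v_1,\dots,v_{\ell+1}$ plus vertices $w_i$ adjacent to $v_i$ and $v_{i+1}$. In the order $v_1,w_1,v_2,w_2,\dots$, pick $z\in B$ with some $b_1\in B$ before it and some $b_2\in B$ after it. If $z=v_i$, it is a cut vertex separating $b_1$ from $b_2$. If $z=w_i$, then $v_iv_{i+1}$ is a bridge of $T-z$, and removing it separates $b_1$ from $b_2$ without affecting any cycle.

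Now take optima $S_1\supseteq B\setminus\{b_2\}$ and $S_2\supseteq B\setminus\{b_1\}$. Both contain $z$, so $S_1\setminus\{z\}$ and $S_2\setminus\{z\}$ are optimal for $T-z$. After dropping the cycle-irrelevant bridge, their restrictions to each component are optimal for that component. Using $S_1$ on the component of $b_1$ and $S_2$ everywhere else gives an optimum containing $B$, a contradiction. This is your Step~1 splicing, applied after deleting a vertex common to both optima. It fails in the example above precisely because no element of $\{a_1,a_2,a_3\}$ separates the other two.
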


Thus, the basic definition of minimal $\FF$-blocking sets is, while being useful, not the right notion for all \fDeletionDistToG{} problems, even when $\GG$ is minor-closed.

\subparagraph*{Beyond basic blocking sets.}
To mitigate this situation somewhat, we consider a generalization of minimal $\FF$-blocking sets.
This notion goes back to Dekker and Jansen~\cite[Lemma 1]{dekkerKernelizationFeedbackVertex2024} (and even Jansen and Pieterse~\cite[Lemma 3]{jansen2020polynomial} in a more general form), and was algorithmically exploited in their kernel for \textsc{Feedback Vertex Set}.
We use $\opt_{\FF}{(G)}$ to denote the size of a minimum set that makes $G$ free of an $\mathcal{F}$-minor.

\begin{restatable}[Extended Blocking Sets]{definition}{defExtendedBlockingSets}
    \label{def:extended_blocking_sets}
    Given a family of connected graphs $\FF$ and a graph $G$, an \emph{extended $\FF$-blocking set} of $G$ is a pair $(X,Y)$, where $X, Y \subseteq V(G)$ are two disjoint sets, such that there is no set $S \subseteq V(G)$ such that:
    \begin{enumerate}
        \item $|S| = \opt_{\FF}{(G)}$,
        \item $G - S$ is $\FF$-minor free,
        \item $X \subseteq S$, and
        \item no two vertices of $Y \setminus S$ are in the same connected component of $G - S$.
    \end{enumerate}
    An extended $\FF$-blocking set $(X,Y)$ is \emph{minimal} if, for all $X' \subsetneq X$ the pair $(X',Y)$ is not an extended $\FF$-blocking set, and for all $Y' \subsetneq Y$ the pair $(X,Y')$ is not an extended $\FF$-blocking set.

    A graph class $\GG$ has bounded minimal extended $\FF$-blocking sets if there is some constant $c$ such that for each graph $G \in \GG$ and each minimal extended $\FF$-blocking set $(X,Y)$ of $G$ we have $|X \cup Y| \leq c$. 
    If no such constant $c$ exists, then $\GG$ has unbounded minimal extended $\FF$-blocking sets. 
\end{restatable}

For an \fDeletion{} problem, any vertex of a graph is a ``partial obstruction'' in the sense that it may be extended to a graph that is a forbidden minor when combining it with further edges and vertices.
Similarly, any path might be a partial obstruction that corresponds to an edge of a forbidden minor.
Hence, \cref{def:extended_blocking_sets} is a natural extension of blocking sets which takes not just the simplest partial obstructions, i.e., individual vertices, but also slightly more complicated partial obstructions, in this case paths, into account.
See \cref{fig:extended_blocking_set_example} for a small example illustrating the two roles played by $X$ and $Y$.

\begin{figure}[t]
	\centering
	\begin{subfigure}[t]{0.46\linewidth}
		\centering
		\includegraphics[page=1]{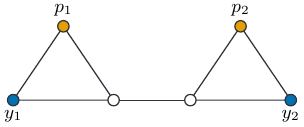}
		\subcaption{The graph with $X = \{p_1,p_2\}$ (orange) and $Y = \{y_1,y_2\}$ (blue).}
	\end{subfigure}
	\hfill
	\begin{subfigure}[t]{0.46\linewidth}
		\centering
		\includegraphics[page=2]{figures/extended-blocking-set-example.pdf}
		\subcaption{Deleting $X = \{p_1,p_2\}$ leaves $y_1,y_2$ in one component.}
	\end{subfigure}
	\caption{A minimal extended blocking set for \textsc{Feedback Vertex Set} ($\FF = \{K_3\}$) on two triangles with a connecting edge ($\opt_\FF = 2$). Selected solution vertices are red, and dashed edges are incident to a deleted vertex.
	For $(X,Y) = (\{p_1,p_2\}, \{y_1,y_2\})$ the only minimum solution containing $X$ is $X$ itself, but then $y_1,y_2$ stay in one component (b), so $(X,Y)$ is an extended $\FF$-blocking set.
	Note $X = \{p_1,p_2\}$ alone is not an $\FF$-blocking set as it is a minimum solution; the additional requirement by $Y$ creates the obstruction.}
	\label{fig:extended_blocking_set_example}
\end{figure}
In fact, even more general notions of minimal blocking sets emerge from the work of Jansen and Pieterse~\cite[Lemma 3]{jansen2020polynomial} that come with great algorithmic power (see also~\cite{bougeretKernelizationDichotomiesHitting2026STACS}), but their usefulness for kernelization lower bounds beyond the notions we study here is not clear (to us).\footnote{In fact, it can be shown that, in general, just from the assumption that such minimal blocking sets are unbounded one cannot obtain kernelization lower bounds. This is essentially a consequence of the fact that this is also already true for minimal extended $\FF$-blocking sets, as we show later.}

Inspired by a reduction of Dekker and Jansen~\cite{dekkerKernelizationFeedbackVertex2024}, we show that a result similar to \cref{main_thm:blocking_sets_kernel_lb} is possible when the graphs in $\FF$ are all biconnected with at least three vertices each.
While complete dichotomies are known for all such families $\FF$ containing a planar graph when $\GG$ is minor-closed~\cite{bougeretKernelizationDichotomiesHitting2026STACS}, our next theorem might still yield new applications in the future, because it does not require $\GG$ to be minor-closed, can also be used to get compression lower bounds for problems that admit polynomial kernels, and does not require $\FF$ to contain a planar graph.

\begin{restatable}{theorem}{thmLBBiconnectedExtendedBlockingSets}
    \label{thm:LB_from_EMBS_biconnected}
    Let $\FF$ be a nonempty family of biconnected graphs on at least three vertices each, and $\GG$ be a graph class.
    Assume that there is some $G \in \GG$ that has a minimal extended $\FF$-blocking set $(X,Y)$ with $|Y| = 2$, such that for any integer $c \geq 1$ the graph $c \cdot G$ is in $\GG$.
    Then, there is no compression for \fDeletionDistToG{} of size $O(|M|^{|X| + 1 - \varepsilon})$, for any $\varepsilon > 0$, unless $\np \subseteq \conppoly$.
\end{restatable}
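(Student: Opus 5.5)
We reduce from \textsc{$d$-CNF-SAT} with $d = |X|+1$, parameterized by the number of variables $n$. This problem has no compression of size $O(n^{d-\varepsilon})$ unless $\np \subseteq \conppoly$~\cite{DellM2014}. The plan mirrors the proof of \cref{main_thm:blocking_sets_kernel_lb}, but treats the two vertices of $Y=\{y_1,y_2\}$ as a single extra ``literal slot''. Write $X=\{x_1,\dots,x_{d-1}\}$.

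\textbf{Construction.} For each variable $v$, create a constant-size variable gadget with two terminal vertices $t_v,f_v$. Its optimum is $\opt_v$, and every optimal solution contains exactly one of $t_v,f_v$. One option is a graph $F\in\FF$ with both terminals placed on it, padded so that deleting either terminal is forced; all these gadget vertices go into $M$, so $|M| = O(n)$. Each clause gets one copy $G_C$ of $G$, so $G-M = m\cdot G \in \GG$. Attach $x_i$ to the literal vertex of the $i$-th literal of $C$ (the vertex deleted when that literal is false), using a fresh copy of some $F\in\FF$ in which $x_i$ and the literal vertex are two adjacent vertices; biconnectivity ensures that surviving both vertices leaves an $F$-minor. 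For the last literal $\ell$, we instead route $y_1$ and $y_2$ through the literal vertex of $\ell$: add a path (or a copy of $F$ minus an edge) between $y_1$ and $y_2$ that passes through that literal vertex. If the literal vertex survives and $y_1,y_2$ lie in one component of $G_C-S$, this creates a cycle through an $F$-model; if the literal vertex is deleted, nothing happens. The budget is $k=\sum_v \opt_v + m\cdot \opt_\FF(G) + (\text{padding})$. If needed, pad short clauses with duplicate literals.

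\textbf{Correctness.} Given a satisfying assignment, fix a true literal of each clause. If it is one of the first $d-1$ literals, say the $i$-th, minimality of $(X,Y)$ supplies an optimal solution of $G_C$ avoiding $x_i$ that contains $X\setminus\{x_i\}$ and separates $y_1$ from $y_2$. If it is the last literal, minimality in $Y$ (with $Y'=\{y_1\}$, where the separation condition is vacuous) supplies an optimal solution containing all of $X$. In both cases the attached $F$-copies are destroyed. Conversely, a solution of size $k$ must be tight on each gadget. In a clause whose literals are all false, every attachment forces $X\subseteq S$ and forces $y_1,y_2$ to be separated, which is impossible by the blocking property.

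\textbf{Main obstacle.} The hardest step is proving that new $\FF$-minors cannot straddle several gadgets, that is, that the optimum decomposes exactly over gadgets. For the $x_i$ attachments, this follows from the connectivity argument of \cref{main_thm:blocking_sets_kernel_lb}. For the $Y$-attachment, the biconnectivity of every $F\in\FF$ is essential: after deleting the literal vertex, each gadget hangs off the rest of the graph by cut vertices. Since every $F$-model lies inside a single block, and the blocks of the construction are exactly those of the individual gadgets, no cross-gadget minor appears. This block-decomposition argument, together with making sure the $y_1$–$y_2$ path does not accidentally lower $\opt$ inside $G_C$, is the step I expect to need the most care. I would first verify it for $\FF=\{K_3\}$ along the lines of the Dekker–Jansen gadget~\cite{dekkerKernelizationFeedbackVertex2024}. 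Finally, $|M|=O(n)$ together with $d=|X|+1$ gives the stated bound.
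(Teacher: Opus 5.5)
Your proposal has two genuine gaps: the encoding of $Y$ does not work for general $\FF$, and $d$-\textsc{CNF-SAT} cannot give the claimed exponent when $|X|\le 1$.

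\textbf{The $Y$-attachment.} With a plain path $y_1\cdots p\cdots y_2$, a surviving literal vertex plus a $y_1$--$y_2$ path in $G_C-S$ only closes a cycle. A cycle is in general not an $\FF$-minor; this works only for $\FF=\{K_3\}$.
\begin{itemize}
\item \emph{The converse fails.} Take $\FF=\{K_4\}$ and $G$ a $K_4$-necklace with the blocking set of \cref{thm:biconnected_graphs_large_extended_blocking_sets}. Then $G-X$ is a chain of triangles, and closing it into a ring through $p$ gives a series-parallel, hence $K_4$-minor-free, block. So in an all-false clause a tight solution can contain $X$, keep $y_1,y_2$ connected, and create no obstruction.
\item \emph{The forward direction fails for the variant with a copy of $F$ minus an edge.} Deleting the single vertex $p$ does not separate $y_1$ from $y_2$ in $F-e$ when $F$ is $3$-connected. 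When the true literal is the last one, you must use an optimal $S\supseteq X$, and by the blocking property this leaves $y_1,y_2$ connected in $G_C-S$. The leftover outside $y_1$--$y_2$ connection then merges blocks and may create new $\FF$-minors.
\item \emph{Your convention is reversed.} If the literal vertex is the one deleted when the literal is \emph{false}, an all-false clause kills all its attachments and forces nothing.
\end{itemize}
What is missing is a gadget with asymmetric cost in which one optimal choice removes \emph{both} attachment points. This is the universe gadget $(H,v,t)$ of \cref{thm:universe_gadgets_exist}: $v$ has exactly two neighbors $z,z'$, and $H$ has optimal solutions containing $v$ and containing $\{z,z'\}$. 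After deleting $v$, either of the following plays the role of $v$:
\begin{itemize}
\item a vertex of $X$ made adjacent to both $z,z'$;
\item a $y_1$--$y_2$ path, with $y_1$ adjacent to $z$ and $y_2$ adjacent to $z'$.
\end{itemize}
This forces a $t$-th deletion in the gadget unless the element is selected. If it is selected, $z,z'$ are deleted and the set gadget has no outside neighbors left.

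\textbf{The source problem.} The $O(n^{d-\varepsilon})$ bound of Dell and van Melkebeek for $d$-\textsc{CNF-SAT} requires $d\ge 3$, and $2$-\textsc{CNF-SAT} is polynomial-time solvable. So nothing follows for $|X|=1$ (target $O(|M|^{2-\varepsilon})$) or for $|X|=0$ (target $O(|M|^{1-\varepsilon})$), and your final sentence does not give the stated bound in these cases.

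The paper instead reduces from $b$-\textsc{Hitting Set} parameterized by the universe size, with $b=|X|+1$; this is hard already for $b=2$ (\cref{thm:d_hitting_set_lower_bound}). This choice also removes the need for truth-consistent variable gadgets:
\begin{itemize}
\item each element gets one selection gadget, costing $t$ if selected and $t-1$ otherwise;
\item the budget term $s$ counts the selected elements;
\item the largest element of each set is routed through $Y$, and the other elements through $X$.
\end{itemize}
The remaining case $b\le 1$ is handled separately in \cref{thm:blocking_sets_1_case}. There, a classical NP-hardness reduction (Lewis and Yannakakis, refined by Dell and van Melkebeek) is combined with a disjoint union with a graph of $\GG$.
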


\cref{thm:LB_from_EMBS_biconnected} follows from a more general theorem that also captures some families that are not biconnected.
For this purpose, we introduce the notion of a \emph{good} minimal extended $\FF$-blocking set.
Following the convention of Donkers and Jansen~\cite{donkersTuringKernelizationDichotomy2021}, a graph is biconnected if it is connected and contains no cut vertex (thus, $K_1$ and $K_2$ are biconnected). 
A leaf block of a graph $F$ is a maximal biconnected subgraph of $F$ that contains at most one cut vertex of $F$.
We denote the set of leaf blocks of graph $F$ by $\leafblock(F)$.

\begin{restatable}[Good Minimal Extended $\FF$-Blocking Sets]{definition}{defGoodExtendedMBS}
    \label{def:good_minimal_extended_blocking_sets}
    Let $\FF$ be a family of connected graphs, and $H$ a graph with a minimal extended $\FF$-blocking set $(X,Y)$.
    Moreover, let $\mathcal{L}$ be the set of leaf blocks of the graphs of $\FF$; that is, $\LL = \bigcup_{F \in \FF} \leafblock(F)$.
    If $Y = \emptyset$, then $(X,Y)$ is a \emph{good} minimal extended $\FF$-blocking set of $H$.

    Otherwise, if $Y \neq \emptyset$, then $(X,Y)$ is a \emph{good} minimal extended $\FF$-blocking set of $H$ if
    each graph of $\LL$ contains at least three vertices, and for each $v \in X$ there exists a set $S_v \subseteq V(H)$ such that:
    \begin{bracketenumerate}
        \item $|S_v| = \opt_{\FF}(H)$,
        \item $H - S_v$ is $\FF$-minor free,
        \item $X \setminus \{v\} \subseteq S_v$,
        \item no two vertices of $Y \setminus S_v$ are in the same connected component of $H - S_v$,
        \item for each $y \in Y \setminus S_v$ the connected component of $H - S_v$ containing $y$ is $\LL$-minor free.
    \end{bracketenumerate}
\end{restatable}

\cref{def:good_minimal_extended_blocking_sets} is inspired by the lower bound of Donkers and Jansen~\cite{donkersTuringKernelizationDichotomy2021}, in which leaf blocks were also central to arguing about the non-existence of forbidden minors.
Observe that for biconnected families $\FF$ containing only graphs with at least three vertices any minimal extended $\FF$-blocking set is automatically good.
Using this notion, we give the following result.

\begin{restatable}{theorem}{thmLBFromExtendedBlockingSets}
    \label{thm:lb_from_extended_blocking_sets}
    Let $\FF$ be a nonempty family of connected graphs on at least two vertices each, and $\GG$ be a graph class.
    Assume that there is some $G \in \GG$ that has a good minimal extended $\FF$-blocking set $(X,Y)$ with $|Y| \in \{0,2\}$, such that for any integer $c \geq 1$ the graph $c \cdot G$ is in $\GG$.
    Then, there is no compression for \fDeletionDistToG{} of size $O(|M|^{|X| + |Y|/2 - \varepsilon})$, for any $\varepsilon > 0$, unless $\np \subseteq \conppoly$.
\end{restatable}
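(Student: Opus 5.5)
We prove \cref{thm:lb_from_extended_blocking_sets} by a linear-parameter transformation from a \textsc{CNF-SAT}-type problem with known compression lower bounds. Set $d = |X| + |Y|/2$. We reduce from $d$-\textsc{Hitting Set} (equivalently $d$-\textsc{CNF-SAT}) parameterized by the universe size $n$, which has no compression of size $O(n^{d-\varepsilon})$ unless $\np \subseteq \conppoly$~\cite{DellM2014}. The case $Y = \emptyset$ is exactly \cref{main_thm:blocking_sets_kernel_lb}, since then $X$ is a minimal $\FF$-blocking set (we invoke that theorem, or run the same argument). So the real work is the case $|Y| = 2$, where $d = |X| + 1$.

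\textbf{Construction.} Take a $(|X|+1)$-CNF formula on $n$ variables with $m$ clauses. For each variable $x_i$ create a constant-size gadget in $M$, e.g.\ two vertices $t_i, f_i$ joined by a copy of some $F \in \FF$, so that any minimum solution deletes exactly one of them. This encodes a truth assignment, and the budget is forced to be tight across the $n$ gadgets.

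For each clause $C = (\ell_0,\dots,\ell_{|X|})$, add a fresh copy $G_C$ of $G$. For $v \in X$, attach the vertex $v$ of $G_C$ to the literal vertex of $\ell_v$ via a copy of some $F$ (hanging it on a cut vertex), so that $v$ must be deleted whenever $\ell_v$ is false. The remaining literal $\ell_0$ is handled with $Y = \{y_1,y_2\}$: we connect $y_1$ and $y_2$ through $\ell_0$'s vertex by a structure built from a leaf block $L \in \LL$. The effect is that if $\ell_0$'s vertex survives and $y_1,y_2$ share a component of $G_C - S$, an $\FF$-minor appears. This uses a leaf block on each side and hooks the rest of $F$ through $M$.

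We set $k = n + m \cdot \opt_\FF(G)$. The union of the clause copies is $m \cdot G \in \GG$, and $M$ has size $O(n)$, so this is a valid instance of the parameterized problem.

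\textbf{Correctness.} Suppose some minimum solution uses exactly $\opt_\FF(G)$ vertices inside each $G_C$ and one vertex per variable gadget. Then, because $(X,Y)$ is an extended blocking set, every clause must have some literal ``true'' (a literal that does not force its constraint). This gives a satisfying assignment.

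Conversely, given a satisfying assignment, pick for each clause a satisfied literal. If that literal corresponds to $v \in X$, use the set $S_v$ from goodness. If it is $\ell_0$, use a minimum solution containing $X$, which exists by minimality with respect to $Y$.

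\textbf{Main obstacle.} The hardest step is proving $G' - S$ is $\FF$-minor free. Minors could be assembled across several clause gadgets through $M$. Here goodness is essential: each component containing a surviving $y$ is $\LL$-minor free. Every $F \in \FF$ has leaf blocks of at least three vertices, and any $F$-minor crossing a cut structure must place a whole leaf block on one side. A block-decomposition argument in the style of Donkers and Jansen should therefore show that no $F$-minor can use the attachment paths.

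I would first try to prove a lemma stating that any $F$-minor model in $G' - S$ lives within a single biconnected piece: one clause copy minus $S$, or one attached gadget. Forbidden minors inside gadgets are then excluded by the choice of $S_v$ and by deleting the appropriate gadget vertices.

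The reduction is polynomial with parameter $O(n)$. A compression of size $O(|M|^{d-\varepsilon})$ would therefore compress \textsc{CNF-SAT} below the Dell–van Melkebeek bound.
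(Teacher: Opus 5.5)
\textbf{Verdict: genuine gap.} Your overall architecture is the paper's: one copy of $G$ per clause, the vertices of $X$ encoding $|X|$ literals, the pair $Y$ encoding one more, the sets $S_v$ from goodness in the forward direction, and the blocking property in the backward direction. But the construction as you describe it does not go through, and the missing ingredient is the paper's central gadget.

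\textbf{The construction.} Your per-clause attachments (``a copy of some $F$'' for every $v\in X$, and the leaf-block structure hooking $y_1,y_2$ to $\ell_0$) must live somewhere. If they are in $M$, then $|M|=\Theta(n+m)$; since $m$ can be $\Theta(n^{d})$, a compression of size $O(|M|^{d-\varepsilon})$ no longer transfers. If they are outside $M$, then $G'-M$ is not $m\cdot G$ and need not lie in $\GG$. Also, ``$t_i,f_i$ joined by a copy of $F$'' does not force a minimum solution to delete $t_i$ or $f_i$, because it may delete another vertex of that copy.

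The paper instead uses a \emph{universe gadget} $(H,v,t)$ (\cref{thm:universe_gadgets_exist}). This is a constant-size graph with $\opt_\FF(H)=t$, one optimal solution containing $v$, one containing $N(v)$, and $|N(v)|=2$. The reduction works as follows.
\begin{itemize}
\item Element $i$ gets $U_i=H-v\subseteq M$.
\item Each vertex of $X_j$ is made \emph{directly} adjacent to both vertices of $N(v)$ in its $U_i$, so that $U_i$ plus that vertex is a copy of $H$.
\item $y_1,y_2$ are made adjacent to the two vertices of $N(v)$ in $U_{\max S_j}$, so that any surviving $y_1$--$y_2$ path contracts onto $v$.
\end{itemize}
Hence the only per-set vertices are those of $G$, and $|M|=O(n)$. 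Constructing such gadgets for every connected family is real work, with separate constructions for biconnected and non-biconnected $\FF$. With them, one reduces from $b$-\textsc{Hitting Set} with $k=m\cdot\opt_\FF(G)+n(t-1)+s$. An unselected element costs $t-1$ in $U_i$, and a selected one costs $t$ and deletes $N(v)$. In the backward direction, set gadgets on which the solution overspends are charged against $s$.

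\textbf{Small cases.} $d$-CNF-SAT is not interchangeable with $d$-\textsc{Hitting Set} here. Its $O(n^{d-\varepsilon})$ bound only holds for $d\ge 3$, since $2$-SAT is in P. So your tight SAT-style budget loses the cases $(|X|,|Y|)\in\{(1,2),(2,0)\}$. The case $d=1$, i.e.\ $(1,0)$ or $(0,2)$, needs a separate argument whatever the source problem. The paper (\cref{thm:blocking_sets_1_case}) uses the Dell--van Melkebeek reduction from $2$-\textsc{Hitting Set} to \fDeletion{}, which produces $O(n^2)$ vertices. It then takes a disjoint union with a graph of $\GG$ and puts the whole original graph into $M$.

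Deferring $Y=\emptyset$ to \cref{main_thm:blocking_sets_kernel_lb} is circular, because that theorem is derived as a corollary of the present one.

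\textbf{The minor-freeness lemma.} Your proposed lemma, that every minor model lives in a single biconnected piece, is false for non-biconnected $F$. Your leaf-block intuition is right, but the argument has to run as follows.
\begin{itemize}
\item A surviving vertex of $X_j$ has all its outside neighbors deleted. So connections between gadgets leave $G_j-S$ only at some $y\in Y_j$, which separates its component $C$ of $G_j-S$ from the rest.
\item A minimal model using more of $C$ than $y$ can be shrunk by \cref{thm:kill_minors_by_leaf_blocks}, because $C$ is $\LL$-minor free by goodness.
\item A model meeting $C$ only in $y$ either does not need $y$, or has $\{y\}$ as a degree-one branch set, since $y$ has a single neighbor outside $C$. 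This contradicts that all leaf blocks have at least three vertices.
\end{itemize}
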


We additionally observe and formally show that the proofs of known kernelization dichotomies~\cite{bougeretBridgedepthCharacterizesWhich2022,bougeretKernelizationDichotomiesHitting2026STACS} yield the following alternative characterization for when \fDeletionDistToG{} admits a polynomial kernel when all graphs in $\FF$ are biconnected and at least one graph of $\FF$ is planar.
Although this result is not difficult to obtain from a careful analysis of the literature, it has not yet been explicitly noted.

\begin{restatable}{theorem}{thmBiconnectedDichotomy}
    \label{main_thm:biconnected_graphs_characterization_blocking_sets}
    Let $\FF$ be a family of biconnected graphs containing at least one planar graph, and $\GG$ a minor-closed graph class. Assume $\np \not \subseteq \conppoly$.
    Then, \fDeletionDistToG{} admits a polynomial kernel if and only if $\GG$ has bounded minimal extended $\FF$-blocking sets.
\end{restatable}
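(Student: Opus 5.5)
Both directions come from existing results, using \cref{thm:lb_from_extended_blocking_sets} for the lower bound and the kernel of Bougeret et al.~\cite{bougeretKernelizationDichotomiesHitting2026STACS} for the upper bound. The link between them is the elimination distance of $\GG$ to the class $\mathcal{C}_\FF$ of $\FF$-minor-free graphs. First, if some graph of $\FF$ is $K_1$ or $K_2$, the problem is trivial or \textsc{Vertex Cover}, and the statement reduces to the bridgedepth dichotomy~\cite{bougeretBridgedepthCharacterizesWhich2022}. There one only has to check that extended blocking sets are bounded iff ordinary ones are; for $\FF=\{K_2\}$, components of $G-S$ are single vertices, so the $Y$-condition is essentially vacuous. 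So assume every graph of $\FF$ is biconnected on at least three vertices. Then every minimal extended $\FF$-blocking set is good, and \cref{thm:LB_from_EMBS_biconnected} applies.

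\textbf{Kernel direction.} Assume $\GG$ has bounded minimal extended $\FF$-blocking sets. I would show, contrapositively, that unbounded elimination distance to $\mathcal{C}_\FF$ forces unbounded minimal extended blocking sets. Since $\GG$ is minor-closed, unbounded elimination distance means $\GG$ contains the gadget family used in the lower bounds of Dekker and Jansen~\cite{dekkerKernelizationFeedbackVertex2024} and Bougeret et al. Concretely, these are graphs of elimination distance $d$ built recursively, where the top vertex connects copies of lower-level structures. I would extract from these graphs explicit extended blocking sets $(X,Y)$ with $|Y|=2$ and $|X|$ growing with $d$; the reductions in those papers essentially certify exactly such a structure. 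Once elimination distance is bounded, the kernel of~\cite{bougeretKernelizationDichotomiesHitting2026STACS} (which uses the planar graph of $\FF$) gives a polynomial kernel.

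\textbf{Lower bound direction.} Assume $\GG$ has unbounded minimal extended $\FF$-blocking sets. Minor-closed classes are not automatically closed under disjoint union, so I would instead argue via elimination distance. I would show that bounded elimination distance $d$ to $\mathcal{C}_\FF$ implies minimal extended blocking sets of size bounded by a function of $d$ and $\FF$. This should follow from the Jansen--Pieterse style lemma~\cite[Lemma 3]{jansen2020polynomial} as used in the kernel correctness proof of~\cite{bougeretKernelizationDichotomiesHitting2026STACS}. That proof bounds how many vertices of $M$-adjacent structure one needs to witness a non-extendable partial solution, by induction on the elimination forest. Hence unbounded extended blocking sets give unbounded elimination distance. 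The existing lower bounds~\cite{dekkerKernelizationFeedbackVertex2024,bougeretKernelizationDichotomiesHitting2026STACS} then exclude a polynomial kernel unless $\np\subseteq\conppoly$.

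\textbf{Main obstacle.} The crux is the quantitative equivalence: bounded elimination distance to $\mathcal{C}_\FF$ holds iff $\GG$ has bounded minimal extended $\FF$-blocking sets, for minor-closed $\GG$. The difficult half is boundedness from elimination distance. I would try to extract it from the kernel's proof by checking that its reduction rules exactly bound the size of minimal $(X,Y)$ pairs. Failing that, I would prove it directly by induction on $d$. The inductive step removes the root vertex of a component, charges it to $X$ or to $Y$, and uses that in each component at most two vertices of $Y$ interact, by biconnectivity of $\FF$.
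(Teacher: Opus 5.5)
Your architecture matches the paper's: both the kernel and boundedness of minimal extended $\FF$-blocking sets are tied to bounded elimination distance to $\FF$-minor-free graphs. The kernel of Bougeret et al., the lower bound of Dekker and Jansen, and the $K_1$/$K_2$ reduction are used as you describe. However, the step you call the crux, bounded elimination distance $\Rightarrow$ bounded minimal extended blocking sets, lacks its key idea, and your fallback does not reach it. In the actual argument, the fact that each component sees at most two $Y$-related labels comes from choosing $S$ via minimality of $(X,Y)$, not from biconnectivity of $\FF$; the bound in fact holds for every connected $\FF$. Neither the kernel's reduction rules nor an induction on $d$ is needed. The missing idea is to apply \cref{thm:labeled_blocking_sets_bounded_ed} as a black box to $G$ itself, with an artificial labeling:
\begin{itemize}
\item give each $v\in X\cup Y$ the label set $\{v\}$;
\item let $Q_0$ consist of a single vertex labeled $\{x\}$ for each $x\in X$, and an edge with end labels $\{y\},\{y'\}$ for each pair $y\neq y'$ in $Y$;
\item for saturation, add single vertices carrying $\min_F|V(F)|\geq 3$ labels.
\end{itemize}
That $(X,Y)$ is an extended blocking set says precisely that, for every optimal $S$, the graph $G-S$ has some graph of $Q_0$ as a labeled minor. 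The lemma therefore yields $Q^*\subseteq Q$ of size $\alpha$, depending only on $\FF$ and $\eta$ and not on $|X\cup Y|$. If $|X\cup Y|>3\alpha+1$, then either some $x\in X$ has its singleton outside $Q^*$, or some $u\in Y$ lies on no edge of $Q^*$. Minimality of $(X,Y)$ then gives an optimal $S$ containing $X\setminus\{x\}$ (resp.\ $X$) that separates $Y$ (resp.\ $Y\setminus\{u\}$). Every component of $G-S$ then carries at most two labels, so no graph of $Q^*$ is a labeled minor of $G-S$, a contradiction.

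The other direction is closer to complete but is still only asserted. What Lemma~7 of Dekker and Jansen~\cite{dekkerKernelizationFeedbackVertex2024} provides (and where the planar graph of $\FF$ is used) is $\FF$-necklaces of every length in $\GG$. You must exhibit the blocking set yourself. Take beads $V_1,\dots,V_t$ with attachment vertices $\ell_i,r_i$ and a third vertex $z_i\in V_i$, and set $X=\{z_1,\dots,z_t\}$ and $Y=\{\ell_1,r_t\}$. The following checks then finish it:
\begin{itemize}
\item $\OPT_{\FF}=t$, by the bead packing and because one vertex per bead suffices by minor-minimality and biconnectivity;
\item $X$ is the only optimum containing $X$, and it leaves an $\ell_1$--$r_t$ path through the biconnected beads;
\item replacing $z_i$ by $\ell_i$ witnesses minimality in $X$, and optimality of $X$ witnesses minimality in $Y$.
\end{itemize}
Finally, drop your initial appeal to \cref{thm:LB_from_EMBS_biconnected}. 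As you note later, a minor-closed $\GG$ need not contain $c\cdot G$, and the paper likewise invokes Dekker and Jansen's lower bound directly.
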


Given \cref{main_thm:biconnected_graphs_characterization_blocking_sets}, one might hope that the notion of minimal extended $\FF$-blocking sets is the central concept for all problem families (and minor-closed $\GG$).
Crushing this hope, with the help of an unpublished result personally communicated to us by Bougeret, Brandwein, and Sau [June 2026], we show that this is not the case.
Concretely, in \cref{thm:k13_extended_blocking_sets_insufficient} we show that the family of caterpillar forests has unbounded minimal extended $\{K_{1,3}\}$-blocking sets. 
However, the upcoming result of Bougeret et al.\ will show that the problem does have a polynomial kernel parameterized by the distance to caterpillar forests.
Consequently, this shows that one cannot simply drop the assumption that the extended $\FF$-blocking set $(X,Y)$ is good from \cref{thm:lb_from_extended_blocking_sets}.

\subparagraph*{New lower bounds.}
We then use our framework to provide new kernelization lower bounds.
The main problem we study is \pwEtaDeletion{}. 
Here the input is a graph $G$ and integer $k$, and the question is whether there is a set $S \subseteq V(G)$ of size at most $k$ such that $G - S$ has pathwidth at most $\eta$.
This is an \fDeletion{} problem obtained by setting $\FF$ to the class of minor-minimal graphs with pathwidth $\eta+1$ (a family $\FF$ is minor-minimal if no graph of $\FF$ is a proper minor of another).
Since the corresponding family $\FF$ contains a planar graph a polynomial kernel exists when parameterizing by the solution-size for each fixed $\eta$~\cite{fominPlanarFdeletionApproximation2012,guptaLosingTreewidthSeparating2019}.
The uniform kernelization complexity of this class of problems was the focus of recent work~\cite{DBLP:conf/esa/GhazyGHS26}.
When $\eta = 0$, \pwEtaDeletion{} is the same as \textsc{Vertex Cover}. 
Kernelization for the special case $\eta = 1$ has also been studied in the literature~\cite{DBLP:conf/wg/PhilipRV10,DBLP:journals/algorithmica/CyganPPW12,DBLP:journals/ipl/Tsur24}.
We consider the parameterization by the distance to $\GG_{\pw \leq \beta}$ where $\GG_{\pw \leq \beta}$ is the class of graphs with pathwidth bounded by some constant $\beta$.
For each $\eta \geq 1$, we show that \pwEtaDeletion{} has no polynomial kernel parameterized by the distance to $\GG_{\pw \leq \eta + 1}$.
We combine this lower bound with known upper bounds~\cite{fominPlanarFdeletionApproximation2012,JansenVertexCoverKernelization2013} and a lower bound for the case when $\eta = 0$~\cite{cyganHardnessLosingWidth2014} to obtain the following kernelization dichotomy.\footnote{Cygan et al.~\cite{cyganHardnessLosingWidth2014} give their lower bound for \textsc{Vertex Cover} parameterized by the distance to the class of graphs with treewidth at most $2$. As observed by Jansen and Pieterse~\cite{jansen2020polynomial}, the proof also works when parameterizing by the distance to the class of graphs with pathwidth at most $2$.}

\begin{restatable}{theorem}{thmPwEtaDichotomy}
    \label{thm:pwEtaDichotomy}
    When $\eta = 0$, \pwEtaDeletion{} admits a polynomial kernel when parameterized by the distance to $\GG_{\pw \leq 1}$, but no polynomial kernel when parameterized by the distance to $\GG_{\pw \leq 2}$, unless $\np \subseteq \conppoly$.

    When $\eta \geq 1$, \pwEtaDeletion{} admits a polynomial kernel when parameterized by the distance to $\GG_{\pw \leq \eta}$, but no polynomial kernel when parameterized by the distance to $\GG_{\pw \leq \eta+1}$, unless $\np \subseteq \conppoly$.
\end{restatable}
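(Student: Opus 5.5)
The dichotomy splits into four claims: an upper and a lower bound for $\eta = 0$, and an upper and a lower bound for $\eta \geq 1$. Three of them follow from the literature. The new lower bound for $\eta \geq 1$ will come from \cref{thm:lb_from_extended_blocking_sets}.

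\emph{Case $\eta = 0$.} Here \pwEtaDeletion{} is \textsc{Vertex Cover}. The class $\pwLeqEtaClass[1]$ consists of caterpillar forests, so it is a subclass of forests. Any modulator to $\pwLeqEtaClass[1]$ is therefore a modulator to forests, and the polynomial kernel of Jansen and Bodlaender~\cite{JansenVertexCoverKernelization2013} applies directly. For the lower bound we invoke Cygan et al.~\cite{cyganHardnessLosingWidth2014}, using the observation of Jansen and Pieterse~\cite{jansen2020polynomial} that their construction has pathwidth at most $2$ after removing the modulator.

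\emph{Upper bound for $\eta \geq 1$.} The obstruction family $\FF$ for pathwidth at most $\eta$ contains a planar graph (a tree). Hence \pwEtaDeletion{} parameterized by the solution size $k$ has a polynomial kernel~\cite{fominPlanarFdeletionApproximation2012,guptaLosingTreewidthSeparating2019}. If $G - M$ has pathwidth at most $\eta$, then $M$ is itself a solution. So whenever $k \geq |M|$ we may output a trivial yes-instance. Otherwise $k < |M|$, and the solution-size kernel has size polynomial in $|M|$.

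\emph{Lower bound for $\eta \geq 1$.} This is the main obstacle. I would apply \cref{thm:lb_from_extended_blocking_sets} with $\FF$ the minor-minimal graphs of pathwidth $\eta+1$; each of them is connected and has at least two vertices. The class $\pwLeqEtaClass[\eta+1]$ is closed under disjoint union. So it suffices to exhibit, for every $n$, a graph of pathwidth at most $\eta+1$ with a minimal (good) $\FF$-blocking set of size at least $n$. Then \cref{main_thm:blocking_sets_kernel_lb} (or \cref{thm:lb_from_extended_blocking_sets} with $Y=\emptyset$) excludes compressions of size $O(|M|^{n-\varepsilon})$ for every $n$, and in particular excludes polynomial kernels.

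For the construction, I would follow the Donkers--Jansen clause gadget, slightly modified. Take a large structure whose minimum deletion must remove one vertex out of a set $X = \{x_1,\dots,x_n\}$: for instance, arrange copies of an obstruction of pathwidth $\eta+1$ so that every $x_i$ lies in some optimal solution but $X$ as a whole lies in none. Concretely, I would try a "cycle-like" chain of obstructions sharing vertices, where deleting all of $X$ forces one extra deletion; this is the analogue of the odd cycle used for \textsc{Vertex Cover}. Two things then need checking. First, the gadget has pathwidth at most $\eta+1$, which I expect to show by an explicit path decomposition that sweeps along the chain with one extra bag slot. Second, the blocking set is minimal, which requires computing $\opt_{\FF}$ exactly on the gadget via a matching lower bound from disjoint obstructions. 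The delicate part is keeping the pathwidth at $\eta+1$ while every proper subset of $X$ still extends to an optimal solution. If that fails, the fallback is a good extended blocking set with $|Y|=2$, trading the $Y$-connectivity requirement for the leaf-block condition.
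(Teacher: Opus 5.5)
Your treatment of $\eta=0$ and of the upper bound for $\eta\ge 1$ matches the paper. So does your reduction of the lower bound to exhibiting, for every $n$, a graph of pathwidth at most $\eta+1$ with a minimal $\FF$-blocking set of size $n$, followed by \cref{main_thm:blocking_sets_kernel_lb}.

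The gap is that this lower bound for $\eta\ge 1$, the only new content of the theorem, is never proved. You describe a construction you ``would try'' and leave open the step you yourself call delicate. Moreover, the plan as stated misses the pathwidth budget:
\begin{itemize}
\item Every graph of $\FF$ already has pathwidth exactly $\eta+1$. A sweep along a chain of obstructions ``with one extra bag slot'' therefore gives width $\eta+2$.
\item Closing the chain into a cycle makes this worse, since the natural decomposition keeps one vertex in every bag.
\item Even for a linear chain, forcing each piece's decomposition to start and end at prescribed glue vertices costs $+1$ in general (\cref{thm:pathwidth_starting_ending_decomp}).
\end{itemize}
You also never address that $X$ must be minimal for the whole family $\FF$. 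The witnesses $S_v$ must leave no member of $\FF$ as a minor, not just your chosen obstruction, and $\opt_{\FF}$ must equal the value you compute for it. Your fallback is unavailable as well: $\FF$ contains a tree, whose leaf blocks are copies of $K_2$. Hence no extended blocking set with $Y\neq\emptyset$ is good, and \cref{thm:lb_from_extended_blocking_sets} does not apply.

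What is missing is the right choice of obstruction and the reason the width does not grow.
\begin{itemize}
\item Take a tree $T\in\FF$; a minor-minimal obstruction below a tree of pathwidth larger than $\eta$ is itself a tree. Its leaf block $L\cong K_2$ is a smallest leaf block among all graphs of $\FF$.
\item $L$ is tidy. Every vertex of a minor-minimal obstruction is pathwidth critical, and $\{c\},\{c,b\},\{b\}$ decomposes $L$ with width $1\le\pw(T)$.
\item Criticality of $b$ gives decompositions of $R=T-(V(L)\setminus\{c\})$ and of $T$ that start and end in the required singleton bags with width at most $\pw(T)$. This holds because $\pw(R)+1\le\pw(T)$ and $\pw(T-b)+1=\pw(T)$.
\item Hence the linear Donkers--Jansen chain of $B_1$/$B_2$ gadgets in \cref{thm:nonBiconnectedPW} has pathwidth at most $\pw(T)=\eta+1$.
\item In that chain, $X=\{t^i\}$ is a minimal $\{T\}$-blocking set. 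This is certified by a packing of $3\ell-2$ copies of $T$ in $G^\star$, a packing of $2\ell-1$ copies in $G^\star-X$, and the witnesses $S_v$.
\item Since $T$ has a smallest leaf block in all of $\FF$, the prune argument of \cref{thm:pwModulatorBlockingSet} shows each $G^\star-S_v$ is $\FF$-minor free. So $X$ is a minimal $\FF$-blocking set.
\end{itemize}
Without these ingredients your proposal does not establish the lower bound.
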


A similar result is already known for \twEtaDeletion{} when $\GG$ has bounded treewidth~\cite{cyganHardnessLosingWidth2014,fominPlanarFdeletionApproximation2012,JansenVertexCoverKernelization2013}.
The new results given in \cref{thm:pwEtaDichotomy} are a corollary of a more general and technical result about \fDeletion{} problems parameterized by the distance to graph classes of bounded pathwidth (\cref{thm:pwModulatorBlockingSet}).
We do not mention the formal statement of this result here, but state one more interesting consequence.

\begin{restatable}{theorem}{thmPWMaxPlusOneLB}
    \label{corollary:paramDistPW}
    Let $\FF$ be a nonempty minor-minimal family of connected graphs on at least two vertices each, and set $\eta = \max_{F \in \FF} \pw(F) + 1$.
    Then, \fDeletionDistToG[$\GG_{\pw \leq \eta}$]{} admits no polynomial kernel, unless $\np \subseteq \conppoly$.
\end{restatable}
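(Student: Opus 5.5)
The plan is to apply \cref{main_thm:blocking_sets_kernel_lb}. The class $\GG_{\pw \leq \eta}$ is closed under disjoint union, since the pathwidth of a disjoint union is the maximum over its components. It therefore suffices to show that $\GG_{\pw \leq \eta}$ has unbounded minimal $\FF$-blocking sets. Concretely, for every $t$ we want a graph $G_t$ with $\pw(G_t) \leq \eta$ and a minimal $\FF$-blocking set of size at least $t$. Given such graphs, \cref{main_thm:blocking_sets_kernel_lb} rules out compressions of size $O(|M|^{t-\varepsilon})$ for every $t$, and hence rules out any polynomial kernel.

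Fix $F \in \FF$ with $\pw(F) = \eta - 1$ maximal. I would build $G_t$ in the spirit of the clause gadget of Donkers and Jansen, possibly in the slightly modified form alluded to in the abstract, and roughly as follows. Take a path $x_1, \dots, x_t$ and attach to each edge or vertex of it a copy of $F$ arranged so that:
\begin{itemize}
\item the copies of $F$ (or of minors of members of $\FF$) are pairwise vertex-disjoint except at the $x_i$;
\item every minimum solution must hit each copy;
\item the copies can be hit via the $x_i$ in at most one "defect" position.
\end{itemize}
The intended behaviour is a "choose all but one" structure: $\opt_\FF(G_t)$ equals some value $s$, and for each $i$ there is a minimum solution containing $\{x_j : j \neq i\}$. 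However, containing all of $X = \{x_1, \dots, x_t\}$ forces at least $s+1$ deletions, because one leftover obstruction, for instance a copy of $F$ anchored at the ends, is then not hit. This gives exactly the properties that $X$ is blocking while each $X \setminus \{x_i\}$ is contained in an optimum, so $X$ is a minimal blocking set of size $t$.

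The argument would proceed in three steps:
\begin{enumerate}
\item Verify the value $\opt_\FF(G_t)$ by exhibiting a packing of $s$ disjoint $F$-models, which gives the lower bound.
\item For each $i$, exhibit an explicit minimum solution containing $X \setminus \{x_i\}$. This needs an argument that the remainder is $\FF$-minor free. I would use connectivity of $\FF$ and minor-minimality: each remaining component is a proper subgraph of a copy of $F$ or a tree-like gluing of such pieces, and a connected member of $\FF$ cannot appear as a minor there. This is where the leaf-block reasoning of Donkers and Jansen is used.
\item Bound the pathwidth. Each copy of $F$ has pathwidth $\eta - 1$, and gluing them along the path $x_1, \dots, x_t$ by adding the current $x_i$ to every bag of the decomposition of the attached copy raises the width by at most one. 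This gives $\pw(G_t) \leq \eta$.
\end{enumerate}

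I expect step~2, certifying $\FF$-minor freeness after deletion, to be the main obstacle for a general connected family. The safest route is to take $F$ minor-minimal in $\FF$ and make each surviving piece a proper minor of $F$ attached by single cut vertices, so that every block of the remainder is a block of a proper minor of $F$. That structure is compatible with a leaf-block counting argument.

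If this direct construction fails to keep the width at $\eta$, I would instead invoke the more general technical result \cref{thm:pwModulatorBlockingSet} on the Donkers--Jansen gadget. I would then derive the corollary by checking its hypotheses with $\beta = \max_F \pw(F)+1$.
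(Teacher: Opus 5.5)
\textbf{There is a genuine gap.} Your top-level plan is right: $\GG_{\pw\le\eta}$ is closed under disjoint union, and a lower-bound theorem fed by arbitrarily large blocking sets does the rest. But the construction meant to supply the blocking sets is missing, and the shape you sketch cannot produce them.

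Suppose $X=\{x_1,\dots,x_t\}$ are the gluing vertices of a chain whose copies of $F$ meet only at the $x_i$. Then each inner $x_i$ separates $x_{i-1}$ from $x_{i+1}$. With an extra copy anchored at both ends, two non-consecutive $x_i,x_j$ do the same. Apply \cref{thm:blocking_sets_and_obstr_avoiders} with $B'=\{x_i\}$ (or $\{x_i,x_j\}$) and an empty obstruction-avoider. It recombines the optimal solutions containing $X\setminus\{x_{i-1}\}$ and $X\setminus\{x_{i+1}\}$ into one containing all of $X$, so $X$ is not minimal once $t\ge 4$.

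In the construction actually used (\cref{thm:nonBiconnectedPW}), the blocking vertices $t^i$ lie in pendant copies $H_2^i$ hanging off the chain, not on it. Blocking and minimality come from two ingredients: a packing of $3\ell-2$ copies of $H$ in $G^\star$ compared with a packing of $2\ell-1$ copies in $G^\star-X$, and the explicit sets $S_v$.

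Your selection rule is also wrong. Choosing $F$ of maximum pathwidth does not control the other members of $\FF$. Your step~2 criterion, that every block of the remainder is a block of a proper minor of $F$, does not stop a non-biconnected $F'\in\FF$ from being assembled across several glued pieces. The paper instead takes $H\in\FF$ with a smallest leaf block over all of $\FF$. Pruning together with \cref{thm:kill_minors_by_leaf_blocks} then forces any $F'$-minor of the remainder to be a minor of $H$. The width bound $\pw(G_\ell)\le\pw(H)+1\le\eta$ then comes for free, which is exactly why $\eta=\max_{F\in\FF}\pw(F)+1$.

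Second, plain blocking sets and \cref{main_thm:blocking_sets_kernel_lb} alone do not cover all families. If $H$ is biconnected (e.g.\ $\FF=\{K_3\}$), the available graphs are the necklaces of \cref{thm:biconnectedPW}. These give only good minimal \emph{extended} blocking sets with $|Y|=2$, and necklace-like graphs need not have large plain ones: all graphs in $\TT$ have minimal $\{K_3\}$-blocking sets of size at most two by \cref{thm:T_bounded_minimal_blocking_sets}. So your fallback through \cref{thm:pwModulatorBlockingSet} must use \cref{thm:lb_from_extended_blocking_sets} in that case, not \cref{main_thm:blocking_sets_kernel_lb}.

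Moreover, \cref{thm:pwModulatorBlockingSet} needs graphs on at least three vertices. So $\FF=\{K_2\}$, which minor-minimality forces whenever $K_2\in\FF$, must be handled separately, via the Cygan et al.\ lower bound for \textsc{Vertex Cover} parameterized by distance to pathwidth at most $2$. With these two additions, your fallback becomes exactly the paper's proof: the $K_2$ case via Cygan et al., and otherwise \cref{thm:pwModulatorBlockingSet} combined with \cref{thm:lb_from_extended_blocking_sets}.
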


There are \fDeletion{} problems that do admit a polynomial kernel parameterized by the distance to $\GG_{\pw \leq \eta}$ when $\eta = \max_{F \in \FF} \pw(F)$, concretely the case $\mathcal{F} = \{P_3\}$~\cite{DBLP:conf/mfcs/GreilhuberS26}, and hence we cannot, in general, drop the additive constant of $1$ in the specification of $\eta$.
However, it may be possible to obtain an analogous lower bound when $\eta = \min_{F \in \FF} \pw(F) + 1$.

\subparagraph*{A remark on concurrent related work.}
Largely concurrently with this work, in an update to the arXiv version of~\cite{bougeretKernelizationDichotomiesHitting2026STACS}, Bougeret et al.~\cite{bougeret2026kernelizationdichotomieshittingminorsv2} have generalized the lower bounds of Dekker and Jansen~\cite{dekkerKernelizationFeedbackVertex2024} to also capture some non-biconnected families $\mathcal{F}$.

The proof of these new lower bounds, given in Section 7.2 and Appendix B of~\cite{bougeret2026kernelizationdichotomieshittingminorsv2}, actually also uses gadgets that have unbounded minimal extended $\FF$-blocking sets $(X,Y)$, with $|Y| = 2$, and in general functions similarly to that of Dekker and Jansen~\cite{dekkerKernelizationFeedbackVertex2024}.
In fact, also for those families $\FF$ and graph classes $\GG$ for which the new lower bounds were shown it can be observed that there is a polynomial kernel for \fDeletionDistToG{} if and only if $\GG$ has bounded minimal extended $\FF$-blocking sets.
They additionally provide results (lower and upper bounds) about the size of the kernels in Section~7.3 and Appendix~A. 

However, the assumptions we make in \cref{thm:lb_from_extended_blocking_sets} are not suitable to obtain these new lower bounds as an application from our result.
That is, the minimal extended $\FF$-blocking sets used in the constructions of Bougeret et al.~\cite{bougeret2026kernelizationdichotomieshittingminorsv2} are not necessarily good.
At the same time, the results we give are orthogonal: \cref{thm:lb_from_extended_blocking_sets} can give lower bounds for families $\FF$ that are not in the scope of the new lower-bound results of Bougeret et al.~\cite{bougeret2026kernelizationdichotomieshittingminorsv2}.
For example, if $H$ is the graph consisting of two disjoint triangles connected by an edge, then our lower bound framework (as shown by \cref{corollary:paramDistPW}) can be used to show lower bounds when $\FF = \{H\}$, but this family is out of the scope of the lower-bound results of~\cite{bougeret2026kernelizationdichotomieshittingminorsv2}.

\subsection{Conclusion}

In this work we study the notions of $\FF$-blocking sets and extended $\FF$-blocking sets.
We provide a high-level theorem that shows that, when $\GG$ is closed under disjoint union, $\GG$ having unbounded minimal $\FF$-blocking sets immediately rules out polynomial kernels for \fDeletionDistToG{}, and a similar theorem that applies for unbounded minimal extended $\FF$-blocking sets in some cases.
We hope that these theorems can simplify future lower-bound proofs.

As a new application, we show that, for each integer $\eta \geq 1$, \pwEtaDeletion{} has no polynomial kernel parameterized by the distance to graphs with pathwidth $\eta + 1$.
Together with known algorithmic results and lower bounds, for any $\eta$ and $\beta$ the kernelization complexity of \pwEtaDeletion{} parameterized by the distance to $\GG_{\pw \leq \beta}$ is mapped out.
We additionally make explicit that the kernelization dichotomies for biconnected families $\FF$ depend on $\GG$ having bounded minimal extended $\FF$-blocking sets.

To the best of our knowledge, all currently known dichotomies for \fDeletion{} with connected families $\mathcal{F}$ parameterized by the distance to minor-closed $\GG$ can be shown to depend on the minimal blocking set notions we study.
At the same time, it is highly unclear if the considered notions of minimal blocking sets are the right notions for all \fDeletionDistToG{} problems with connected families $\FF$ and minor-closed $\GG$, or if such a ``catch all'' notion even exists.
Investigating this further is an exciting and promising topic for future work.

The generic lower bounds presented in this paper only work for connected families $\FF$, but at least the lower bound of Donkers and Jansen~\cite{donkersTuringKernelizationDichotomy2021} does not require this assumption.
It seems difficult to get a proof similar to that of \cref{main_thm:blocking_sets_kernel_lb} to work for families $\FF$ that are not connected, but it could be fruitful to investigate this in more detail.
\subsection{Article Organization}

We give a technical overview of our results in \cref{sec:tech_overview}.
\cref{sec:prelims} contains the preliminaries.
\Cref{sec:lower_bounds_from_blocking_sets} introduces our blocking set
notions, proves the lower bound based on good minimal extended
$\FF$-blocking sets (\cref{thm:lb_from_extended_blocking_sets}),
and discusses the necessity of the goodness assumption.
The proof of the dichotomy for biconnected families, \cref{main_thm:biconnected_graphs_characterization_blocking_sets}, is given in \cref{sec:biconnected_dichotomy}.
Finally, \cref{sec:applications} contains the applications of our framework for the parameterization by the distance to graphs of bounded pathwidth, in particular establishing \cref{thm:pwEtaDichotomy,corollary:paramDistPW}.
 
\section{Technical Overview}
\label{sec:tech_overview}
Let us now give a more detailed overview of the used techniques.

\subsection{From Extended Blocking Sets to Lower Bounds}

First, let us sketch the proof of \cref{thm:lb_from_extended_blocking_sets}.
This proof uses gadgets we call universe gadgets, specified as follows.

\begin{restatable}[Universe Gadget]{definition}{defUnivGadget}
    Let $\FF$ be a family of graphs.
    A \emph{universe gadget} is a triple $(H,v,t)$ consisting of a graph $H$, distinguished vertex $v \in V(H)$, and integer $t$, such that:
    \begin{enumerate}
        \item the smallest vertex set $S \subseteq V(H)$ such that $H - S$ is $\FF$-minor free has size $t$,
        \item there is a set $S \subseteq V(H)$ such that $H - S$ is $\FF$-minor free with $|S| = t$, and $v \in S$,
        \item there is a set $S \subseteq V(H)$ such that $H - S$ is $\FF$-minor free with $|S| = t$, and $N(v) \subseteq S$,
        \item and $v$ has exactly two neighbors in $H$.
    \end{enumerate}
\end{restatable}

We show that such a gadget exists for every nonempty family $\mathcal{F}$ of connected graphs on at least two vertices each.
The construction is different depending on whether all graphs in $\FF$ are biconnected: if they are, we adapt a construction of Dekker and Jansen~\cite{dekkerKernelizationFeedbackVertex2024}; otherwise we give a separate construction.
Then, we exploit these gadgets in a linear parameter transformation from $b$-\textsc{Hitting Set} parameterized by the size of the universe $n$, for which no compression of size $O(n^{b-\varepsilon})$ exists for any $\varepsilon > 0$ and any fixed $b \geq 2$, unless $\np \subseteq \conppoly$~\cite{DellM2014}.

\defparproblem{$b$-\problemname{Hitting Set}/$n$}{
    Nonnegative integers $n, s$, and $m$ sets $S_1,\dots,S_m \in \binom{[n]}{b}$.}{
    $n$}{
    Is there a set $S \subseteq [n]$ of size at most $s$
    such that, for every $i \in [m]$, we have $S \cap S_i \neq \emptyset$?}

The reduction we use is similar to (and adapted from) the one used for \textsc{Vertex Cover} by Hols et al.~\cite{hols2022elimination}, for $d$-\textsc{Component Order Connectivity}~\cite{greilhuber2024component}, and a reduction by Dekker and Jansen~\cite{dekkerKernelizationFeedbackVertex2024} for biconnected families $\FF$.
This also makes it similar to further reductions for \textsc{Vertex Cover} such as the ones given in~\cite{fominVertexCoverStructural2016,cyganHardnessLosingWidth2014,majumdarPolynomialKernelsVertex2018}; indeed as mentioned before many reductions exploit unbounded minimal blocking sets similarly, which is one of the main motivations for providing one unified reduction that works for all \fDeletion{} problems with connected $\FF$.

Of course, the proof is adjusted so that it functions correctly with the notion of good minimal extended $\FF$-blocking sets.
In contrast to the reduction of Dekker and Jansen~\cite{dekkerKernelizationFeedbackVertex2024} we also use the set $Y$ of the minimal extended $\FF$-blocking set to encode information if the set is nonempty, and therefore we obtain a slightly larger (and therefore stronger) lower bound than one would get otherwise.

\thmLBFromExtendedBlockingSets*{}

\subsection{A Dichotomy for Biconnected Families}

Now, let us quickly elaborate on \cref{main_thm:biconnected_graphs_characterization_blocking_sets}, the characterization based on the notion of extended $\FF$-blocking sets for when \fDeletionDistToG{} with minor-closed $\GG$ and biconnected graphs $\FF$ admits a polynomial kernel.

\thmBiconnectedDichotomy*{}

If $K_1 \in \FF$, the problem is trivial, and all minor-closed $\GG$ have bounded minimal extended $\FF$-blocking sets.
Otherwise, if $K_2 \in \FF$, the notion of minimal extended $\FF$-blocking sets collapses to the notion of minimal $\FF$-blocking sets, and then this result follows from the work of Bougeret et al.~\cite{bougeretBridgedepthCharacterizesWhich2022}.
It remains to consider the case when all graphs in $\FF$ have at least three vertices.
In this case, we analyze the lower bound by Dekker and Jansen~\cite{dekkerKernelizationFeedbackVertex2024}.
They show that, when $\FF$ contains a planar graph,\footnote{Dekker and Jansen state the result for the case when all graphs in $\FF$ are planar, but as explained in Bougeret et al.~\cite{bougeretKernelizationDichotomiesHitting2026STACS} the proof works as long as there is at least one planar graph in $\FF$.} any minor-closed graph class $\GG$ that has unbounded elimination distance to the class of $\FF$-minor free graphs contains arbitrarily long $\FF$-necklaces.

\begin{restatable}[{$\FF$-Necklace~\cite[Definition 7]{dekkerKernelizationFeedbackVertex2024}}]{definition}{defnecklace}
    \label{def:necklace_graph}
    Let $\FF$ be a nonempty family of connected graphs, and $\ell$ a positive integer.
    An $\FF$-necklace of length $\ell$ is a graph $G$ where $V(G)$ can be partitioned into sets $V_1,\dots,V_\ell$, such that:
    \begin{enumerate}
        \item for each $i \in [\ell]$ the graph $G[V_i]$ is isomorphic to some graph in $\FF$,
        \item there is exactly one edge between $V_i$ and $V_{i+1}$ for each $i \in [\ell-1]$,
        \item there are no edges between $V_i$ and $V_{i'}$ when $i' \notin \{i-1,i,i+1\}$.
    \end{enumerate}
\end{restatable}

We observe that the main property exploited by Dekker and Jansen~\cite{dekkerKernelizationFeedbackVertex2024} for their lower bound is that, for minor-minimal families $\FF$ of biconnected graphs on at least three vertices, the length of an $\FF$-necklace is directly tied to the size of good minimal extended $\FF$-blocking sets, yielding one direction of the characterization.
For the other direction, we show that a recent result of Bougeret et al.~\cite[Lemma 6]{bougeretKernelizationDichotomiesHitting2026STACS}, which bounds the size of certain more complex blocking sets, can be exploited to show that, when $\GG$ has bounded elimination distance to the class of $\FF$-minor free graphs, then $\GG$ also has bounded minimal extended $\FF$-blocking sets.

\subsection{Lower Bounds for Distance to Bounded Pathwidth}

Finally, let us give some details about the new kernelization lower bounds we obtain when parameterizing by the distance to graphs of bounded pathwidth.
For this purpose, we first consider families $\FF$ containing only a single connected graph $F$.
If $F$ is biconnected on at least three vertices, then we observe that for any $\ell \geq 1$ there is a graph with pathwidth $\pw(F)$ that has a good minimal extended $\FF$-blocking set with $|X| = \ell$ and $|Y| = 2$.
In fact, the graphs we use here are again specific $\FF$-necklaces.

\begin{restatable}{lemma}{thmBiconnectedPW}
    \label{thm:biconnectedPW}
    Let $\FF$ be a minor-minimal family of biconnected graphs on at least three vertices.
    Then, for each $F \in \FF$ and for each integer $\ell \geq 1$ there is a graph $G_{\ell}^F$ with pathwidth $\pw(F)$, such that $G_\ell^F$ has a good minimal extended $\FF$-blocking set $(X,Y)$ with $|X| = \ell$ and $|Y| = 2$.
    Moreover, each set $S$ that makes $G_\ell^F$ free of an $F$-minor is such that each biconnected component of $G_\ell^F - S$ has fewer than $|V(F)|$ vertices.
\end{restatable}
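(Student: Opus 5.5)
We take $G_\ell^F$ to be an $\FF$-necklace of length $\ell+1$ (or $\ell$) built from copies $F_1,\dots,F_{\ell+1}$ of $F$. Consecutive copies $F_i$ and $F_{i+1}$ are joined by a single edge $u_iw_{i+1}$, where $w_i$ and $u_i$ are two chosen vertices of $F_i$. We choose them so that a path decomposition of $F$ of width $\pw(F)$ has $w$ in its first bag and $u$ in its last bag. To realise this, take an optimal path decomposition $(P_1,\dots,P_r)$ of $F$, with $w\in P_1$ and $u\in P_r$. We may assume $u\neq w$, or at least that the construction works either way. We then concatenate the decompositions of the copies, inserting the bag $\{u_i,w_{i+1}\}$ between copies $i$ and $i+1$. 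Since $|V(F)|\ge 3$ and $F$ is biconnected, we have $\pw(F)\ge 1$, so this bag has size at most $\pw(F)+1$. Hence $\pw(G_\ell^F)\le \pw(F)$. Equality holds because $F$ is a subgraph.

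Next we determine the structure of solutions. Because $\FF$ consists of biconnected graphs, any $\FF$-minor of a graph lies in a single block of that graph. The blocks of the necklace are the copies $F_i$ together with the bridges between them. A copy $F_i$ has no $F'$-minor for $F'\in\FF\setminus\{F\}$ with $F'\neq F$, by minor-minimality, except that $F_i$ itself is $F$. Therefore $\opt_\FF=\ell+1$: exactly one vertex is deleted from each copy.

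For the "moreover" claim, let $S$ make $G_\ell^F$ free of $F$-minors. Every block of $G_\ell^F-S$ is a bridge or a biconnected subgraph of some $F_i-S$. Such a subgraph is a proper subgraph of $F$ whenever $S$ hits $F_i$, which it must do, since otherwise $F_i$ would remain as an $F$-minor. So each block has fewer than $|V(F)|$ vertices.

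We set $Y=\{y_1,y_{\ell+1}\}$, where $y_1$ lies in $F_1$ and $y_{\ell+1}$ lies in $F_{\ell+1}$, chosen away from the bridge endpoints, for instance $y_1\neq u_1$ if possible. We set $X=\{x_2,\dots,x_{\ell}\}$ plus one further vertex so that $|X|=\ell$. The intended picture is a necklace of length $\ell+1$ with $X$ containing one vertex per copy $F_2,\dots,F_{\ell}$ together with something in $F_1$ or $F_{\ell+1}$; we would tune this count.

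The cleaner choice is probably to put the $X$-vertices in copies whose deletion does not disconnect the copy from its neighbours. For example, take $x_i\notin\{u_i,w_i\}$ in $F_i$, which needs $|V(F)|\ge 3$. Deleting all of $X$ then uses up the budget for those copies and leaves a path of connected pieces from $y_1$ to $y_{\ell+1}$. This holds because $F_i-x_i$ is connected by biconnectivity, so it still links $w_i$ to $u_i$. Hence $(X,Y)$ is blocking.

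Minimality and goodness come from dropping one $x_v$ and instead deleting the bridge endpoint $u_v$ (or $w_v$) of that copy. This disconnects $y_1$ from $y_{\ell+1}$. The components containing the $y$'s are proper subgraphs of copies of $F$ glued along bridges. Their leaf blocks are proper subgraphs of $F$, and the same holds after each deletion. We must argue that these components are $\LL$-minor free, where $\LL=\FF$ here since the graphs are biconnected: a biconnected minor lies inside one block, and every block is a proper subgraph of $F$, so it contains neither $F$ nor, by minor-minimality, any other member of $\FF$. Minimality in $Y$ is immediate, because with $|Y|\le 1$ the pair is satisfied by deleting $X$.

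The main obstacle is getting the count exactly $|X|=\ell$ while keeping the first and last copies, which hold $Y$, consistent. It also requires verifying that minor-minimality indeed excludes every other $F'\in\FF$ from proper subgraphs of $F$: if $F'$ were a minor of a proper subgraph of $F$, then $F'$ would be a proper minor of $F$, contradicting minor-minimality. The pathwidth bookkeeping is the other delicate point, in particular the choice of $u$ and $w$ in the end bags.
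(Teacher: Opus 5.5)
You have essentially the paper's construction: a necklace of copies of $F$ joined by single bridges, one $X$-vertex per copy avoiding the bridge endpoints, $Y$ at the two ends, and pathwidth by concatenating decompositions. However, the count you actually commit to breaks the blocking property, so this is a genuine (if easily repaired) gap. Suppose the necklace has length $\ell+1$ and $|X|=\ell$. Then some copy $F_j$ contains no vertex of $X$. Every optimal solution takes exactly one vertex per copy, so $X$ plus any one vertex of $F_j$ is an optimal solution containing $X$. Choose that vertex to be the endpoint $w_j$ of the bridge entering $F_j$, or, if $F_j$ is an end copy, its $Y$-vertex. This separates $y_1$ from $y_{\ell+1}$ (or deletes one of them), so $(X,Y)$ is not an extended $\FF$-blocking set at all. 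Your argument that ``deleting $X$ uses up the budget'' only works if $X$ hits every copy.

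The fix is to take length exactly $\ell$. Put one $x_i\in V(F_i)\setminus\{u_i,w_i\}$ in every copy, including $F_1$ and $F_\ell$. Take $y_1\in V(F_1)\setminus\{x_1\}$ and $y_\ell\in V(F_\ell)\setminus\{x_\ell\}$; for $\ell=1$, take two distinct vertices of $F_1-x_1$. The obstacle you anticipate does not exist: since $|V(F)|\ge 3$, an end copy has room for a bridge endpoint, a $Y$-vertex and an $X$-vertex. Then $X$ is the unique optimal solution containing $X$, and each $F_i-x_i$ is connected, so $Y$ stays connected. For minimality at $x_i$, delete a bridge endpoint of that copy instead, or a vertex of $Y$ when $\ell=1$. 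With this correction, the rest of what you wrote goes through: the value of $\opt_\FF$, the block argument for the ``moreover'' part, goodness via $\LL=\FF$, and minimality in $Y$. Your pathwidth bound via the inserted bag $\{u_i,w_{i+1}\}$ is a harmless variant of the paper's addition of $w_i$ to the singleton first bag of the next copy. The corrected argument coincides with the paper's proof, which gets the blocking set from \cref{thm:biconnected_graphs_large_extended_blocking_sets} and \cref{obs:biconnected_families_extended_blocking_sets_good}.
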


If $F$ is not biconnected, and contains at least three vertices, we instead use a slight simplification of a construction of Donkers and Jansen~\cite{donkersTuringKernelizationDichotomy2021}.
Although not made explicit by them, for each integer $\ell \geq 1$ they give a construction yielding a graph with a minimal $\{F\}$-blocking set of size $\ell$.
This construction is easily shown to have the same treewidth as $F$, which (with some considerable additional work) also yielded one main result of~\cite{donkersTuringKernelizationDichotomy2021}.
Luckily for us, due to its linear nature it can also be used for the parameter pathwidth.
We show that, in some specific cases the pathwidth of the construction is the same as $\pw(F)$; in other cases it is at most $\pw(F)+1$.
As mentioned before, one cannot avoid the additive constant of $1$ in this bound for each graph $F$.

Concretely, for our analysis we define the notions of pathwidth criticality and of tidy leaf blocks that we present next.

\begin{restatable}[Pathwidth Critical]{definition}{defPWCritical}
    Given a graph $G$, a vertex $v \in V(G)$ is \emph{pathwidth critical} if $\pw(G) > \pw(G - v)$.
\end{restatable}

\begin{restatable}[Tidy Leaf Blocks]{definition}{defTidyLeafBlock}
    Let $G$ be a connected graph that is not biconnected.
    Then, $G$ has a \emph{tidy leaf block} if there is a smallest (in terms of number of vertices) leaf block $L$ of $G$ such that there is a path decomposition $\XX = X_1,\dots,X_t$ of $L$ of width at most $\pw(G)$, where $X_1 = \{c\}$ and $X_t = \{b\}$, where $c$ is the cut vertex of $G$ contained in $L$, and $b \neq c$ is a vertex of $L$ that is pathwidth critical for $G$.
\end{restatable}

We also exploit a further notion of Donkers and Jansen~\cite{donkersTuringKernelizationDichotomy2021}, which is later needed to obtain the lower bound for families containing more than one graph, the $\beta$-prune of a graph.

\begin{restatable}[Prune {\cite{donkersTuringKernelizationDichotomy2021}}]{definition}{defPrune}
	\label{def:prune}
	Let $\beta$ be an integer and $H$ a graph.
	Then, $\prune{\beta}{H}$ is the possibly empty graph obtained from $H$ by repeatedly choosing a leaf block $L$ of the current graph containing fewer than $\beta$ vertices, and deleting the vertices of $L$ that are not cut vertices of the current graph.
\end{restatable}

We then obtain the following result by analyzing the pathwidth of the construction.

\begin{restatable}{lemma}{thmOneHPW}
    \label{thm:nonBiconnectedPW}
    Let $H$ be a connected graph that is not biconnected.
    If $H$ has a tidy leaf block, set $\omega = \pw(H)$; otherwise set $\omega = \pw(H) + 1$.

    Then, for any integer $\ell \geq 1$ there is a graph $G^\star_\ell$ with pathwidth at most $\omega$ that has a minimal $\{H\}$-blocking set $X$ of size $\ell$.
    Moreover, for any $v \in X$ there is a set $S_v \subseteq V(G^\star_\ell)$ of size $\OPT_{\{H\}}(G^\star_\ell)$ containing $X \setminus \{v\}$, such that $G^\star_\ell - S_v$ is $H$-minor free, and each connected component of $\prune{|V(L)|}{G^\star_\ell - S_v}$ is a minor of $H$, where $L$ is a smallest leaf block of $H$.
\end{restatable}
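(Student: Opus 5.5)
We will build $G^\star_\ell$ as a linear chain of modified copies of $H$, following Donkers and Jansen. Let $L$ be a smallest leaf block of $H$, with $c$ the unique cut vertex of $H$ in $L$. If $H$ has a tidy leaf block, we choose $L$ to be the tidy one and let $b \neq c$ be the pathwidth-critical vertex from the definition; otherwise $b \in V(L)\setminus\{c\}$ is arbitrary.

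\emph{Construction.} Let $H^- = H - (V(L)\setminus\{c\})$ be $H$ with the leaf block $L$ cut off at $c$. Take $\ell+1$ copies $L_0,\dots,L_\ell$ of $L$ and $\ell$ copies $H^-_1,\dots,H^-_\ell$ of $H^-$. For $i \in [\ell]$, identify the $c$-vertex of $H^-_i$ with the $b$-vertex of $L_{i-1}$, and add an edge (or identify, as appropriate) between the $c$-vertex of $H^-_i$ and the $c$-vertex of $L_i$. Then copy $i$ of $H$ is $H^-_i \cup L_i$, and consecutive copies share only one vertex or one edge. The blocking set is $X = \{x_1,\dots,x_\ell\}$, where $x_i$ is a vertex of $H^-_i$ chosen away from the junctions (say a non-junction vertex of $H^-_i$).

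\emph{Blocking property.} First we show $\OPT_{\{H\}}(G^\star_\ell) = \ell$. For the lower bound we pack $\ell$ vertex-disjoint $H$-models, one per copy $H^-_i \cup L_i$. For the upper bound, deleting the $i$-th junction vertex for every $i$ leaves components, each a proper minor of $H$ or pruned to one. Deleting all of $X$ leaves the chain $L_0,\dots,L_\ell$ glued in sequence, which we arrange to contain an $H$-minor; hence $X$ is blocking.

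For minimality, fix $v = x_j$. Set $S_v$ to be $x_i$ for $i<j$, the junction for copy $j$, and the junctions shifted on the other side for $i>j$. This is a "switch" at position $j$. For each component of $G^\star_\ell - S_v$ we check that $\prune{|V(L)|}{\cdot}$ is a minor of $H$. The point is that pruning strips the dangling copies of $L$, which are leaf blocks of fewer than $|V(L)|$ vertices only after accounting for the cut. The fine details of the gluing are chosen precisely to make this pruning claim work. We verify $H$-minor freeness via the minimality of $L$ and a counting argument on leaf blocks, as Donkers and Jansen do.

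\emph{Pathwidth.} This is the step we expect to be the main obstacle. We concatenate decompositions along the chain. For each copy, take an optimal path decomposition of $H^-$ of width at most $\pw(H)$ and a decomposition of $L$ that starts at $c$ and ends at $b$. In the general case we add the single connecting vertex to every bag of a segment, which gives width $\pw(H)+1$. In the tidy case, the decomposition of $L$ from $\{c\}$ to $\{b\}$ of width at most $\pw(H)$ lets us hand over through singleton bags. We must also show that $H^-$ admits a decomposition of width at most $\pw(H)$ ending in a bag from which we can attach. Here we use that $b$ is pathwidth critical: $\pw(H - b) < \pw(H)$, so we have slack to carry one extra vertex through the decomposition of $H^-$. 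This saving of $1$ is the delicate part. We would first try to interleave the two pieces so that the carried vertex is exactly the one whose removal drops the width.
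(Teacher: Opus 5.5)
\paragraph{Verdict.} There is a genuine gap, and it is not in the pathwidth step. It is in the blocking-set part, which you left vague ("we arrange", "as Donkers and Jansen do"). Your chain does not produce a minimal $\{H\}$-blocking set, for three reasons.

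\begin{enumerate}
\item \textbf{The packing is not disjoint.} You need to identify the $c$-vertices so that $H^-_i\cup L_i$ is a copy of $H$. But then copy $i$ contains $b(L_i)$, which is also the $c$-vertex of $H^-_{i+1}$. So consecutive copies intersect, and the claimed packing of $\ell$ disjoint copies does not give $\OPT_{\{H\}}(G^\star_\ell)\geq\ell$.
\item \textbf{Your $S_v$ is not a valid witness.} The set you describe ($x_i$ for $i<j$ plus junctions) does not contain $x_i$ for $i>j$. There is also no budget to repair this. If $\OPT=|X|=\ell$, then $S_v$ has room for only one vertex besides $X\setminus\{v\}$. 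But the $x_i$ are private vertices, so deleting them leaves the whole spine of copies of $L$, with the $H^-_i-x_i$ attached, intact.
\item \textbf{A concrete counterexample.} Take $H=P_3$ with centre $c$, so $L=H^-=K_2$ and $x_i$ is the pendant of $H^-_i$. For $\ell=3$ your graph is the caterpillar with spine $c_0,j_1,j_2,j_3,b_3$ and a pendant $x_i$ at each $j_i$.
\begin{itemize}
\item $\OPT=2<\ell$: the sets $\{c_0,j_1,x_1\}$ and $\{x_3,j_3,b_3\}$ are disjoint $P_3$'s, and $\{j_1,j_3\}$ is a solution.
\item $\{x_1\}$ alone is already blocking: in $G-x_1$ both $j_2$ and $j_3$ have degree $3$, and no single deletion brings both to degree at most $1$. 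So $X$ is not minimal.
\item The ``add an edge'' variant fails in the same way. There $\OPT=3$, but $G-\{x_1,x_3\}$ still contains a path on $8$ vertices. Hence $\{x_1,x_3\}$ is already blocking.
\end{itemize}
\end{enumerate}

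\paragraph{What is missing.} You need a mechanism that lets an optimum-size solution ``switch'' around the released vertex. The paper takes this from Donkers and Jansen.
\begin{itemize}
\item \textbf{Gadgets.} It chains $B_1^1,B_2^1,B_1^2,\dots,B_2^{\ell-1},B_1^\ell$. Here $B_1$ is two copies of $H$ with $b_2$ identified with $c_1$. $B_2$ is two copies of $H$ plus an extra copy of $L$ whose $c$ and $b$ are identified with $c_1$ and $c_2$.
\item \textbf{Optimum.} A packing of $3\ell-2$ disjoint copies of $H$ gives $\OPT\geq 3\ell-2$.
\item \textbf{Blocking set.} $X=\{t^i\}$, where $t^i$ is the vertex $c_2$ of the second copy in $B_1^i$. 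Since $G^\star-X$ still packs $2\ell-1$ copies, every solution containing $X$ has size at least $3\ell-1$.
\item \textbf{Witnesses $S_v$ and pruning.} These come from restricting Donkers and Jansen's solutions of size $3\ell-1$ in their slightly larger graph. Exactly one vertex of such a solution lies outside $G^\star$. One then uses that $\beta$-pruning is monotone under minors.
\end{itemize}
With this construction the pathwidth bound is routine. You concatenate three kinds of decompositions: one of $R=H-(V(L)\setminus\{c\})$ with singleton end bags, one of the copy of $H$ hanging at its $b$-vertex with $b$ in every bag (width $\pw(H-b)+1\le\omega$), and the decomposition of $L$ from $\{c\}$ to $\{b\}$. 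The fact that $b\notin V(R)$, together with pathwidth-criticality of $b$, keeps every piece within width $\omega$.
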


The remaining case when $F = K_2$ is already handled by the work of Cygan et al.~\cite{cyganHardnessLosingWidth2014}.

To go from families $\FF$ that contain a single graph to minor-minimal families containing multiple graphs, in the spirit of the result of Donkers and Jansen~\cite{donkersTuringKernelizationDichotomy2021}, we exploit the additional guarantees given by these constructions.
There is always a graph $F \in \FF$ such that for this $F$ either the construction for a biconnected graph, or the construction for a non-biconnected graph has a large good minimal extended $\FF$-blocking set.
Concretely, one can safely use the construction based on any graph $F \in \FF$ that has a smallest leaf block among all graphs in $\FF$.
Crucially, for \pwEtaDeletion{} when $\eta \geq 1$, the respective family $\FF$ contains at least one tree on at least three vertices.
Moreover, any tree $T \in \FF$ has a tidy leaf block, and therefore in this case the lower bound we obtain rules out polynomial kernels parameterized by the distance to graphs with pathwidth $\pw(T) = \eta + 1$.
Apart from this result, we obtain the lower bound in \cref{corollary:paramDistPW} from the fact that, since the construction works for some graph $F \in \FF$, it has pathwidth at most $\max_{F \in \FF} \pw(F) + 1$.
 
\section{Preliminaries}
\label{sec:prelims}
For integers $x,z$ we write $\range[x]{z} = \{y \in \mathbb{Z} \mid x \leq y \leq z \}$.
We use $[n]$ as a shorthand for $\range[1]{n}$.
For a set $S$ and integer $b$, we use $\binom{S}{b}$ to denote the set of all subsets of $S$ of size exactly $b$.

\subparagraph*{Graph minors and leaf blocks.}

We use standard graph notation, and will only repeat notions here that are especially relevant for the paper.
All graphs are finite and simple.
For a graph $G$ and integer $c \geq 0$ the graph $c \cdot G$ is the graph obtained by taking the disjoint union of $c$ copies of $G$.
A graph $H$ is a minor of a graph $G$ if $H$ can be obtained from $G$ by repeatedly deleting edges, deleting vertices, or contracting edges.
If $H$ is a minor of $G$, we also say that $G$ has an $H$-minor.
We call a set of graphs $\mathcal{F}$ minor-minimal if there are no two distinct graphs $H_1,H_2 \in \mathcal{F}$ such that $H_1$ is a minor of $H_2$.

Given graphs $G$ and $H$, $G$ is $H$-minor free if $H$ is not a minor of $G$. 
Similarly, for any family of graphs $\FF$, graph $G$ is $\FF$-minor free if no $F \in \FF$ is a minor of $G$.
Note that for any family $\mathcal{F}$ there is a finite family $\mathcal{F}' \subseteq \mathcal{F}$ such that a graph is $\mathcal{F}$-minor free if and only if it is $\mathcal{F}'$-minor free; this is a consequence of the famous graph minor theorem~\cite{DBLP:journals/jct/RobertsonS04}.
Hence, when required for the application of a lemma, we may assume that $\mathcal{F}$ is finite.
Next, we define some notions related to biconnectivity, following the convention of Donkers and Jansen~\cite{donkersTuringKernelizationDichotomy2021}.

\begin{definition}[Biconnectivity and Blocks]
    For a graph $G$ a vertex $v \in V(G)$ is a \emph{cut vertex} if $G - v$ has more connected components than $G$.
    A \emph{bridge} of $G$ is an edge $uv \in E(G)$ such that $G - uv$ has more connected components than $G$.

    A graph is \emph{biconnected} if it is connected and contains no cut vertex.
    A \emph{biconnected component} or \emph{block} of a graph is a maximal biconnected subgraph.
    A block $B$ is a \emph{leaf block} if $B$ contains at most one cut vertex of $G$.
    In particular, a biconnected graph is a leaf block.
\end{definition}

In particular, the used convention gives us that $K_1$ and $K_2$ are biconnected graphs.

Given graphs $H$ and $G$, a \emph{minor-model} of $H$ in $G$ is a function $\varphi: V(H) \to 2^{V(G)}$, such that (1) for distinct $u,v \in V(H)$ the sets $\varphi(v)$ and $\varphi(u)$ are disjoint, (2) for each $v \in V(H)$ the graph $G[\varphi(v)]$ is nonempty and connected, (3) for each edge $uv \in E(H)$ there is an edge from $\varphi(u)$ to $\varphi(v)$ in $G$.
We denote $\varphi(H) = \bigcup_{v \in V(H)} \varphi(v)$.
A minor-model $\varphi$ of $H$ is minimal if there is no minor-model $\varphi'$ of $H$ with $\varphi'(H) \subsetneq \varphi(H)$.
This minor-model notation was also used by Dekker and Jansen~\cite{dekkerKernelizationFeedbackVertex2024}.

The following proposition given by Dekker and Jansen~\cite{dekkerKernelizationFeedbackVertex2024} will also be useful.
\begin{proposition}[{\cite[Proposition 8]{dekkerKernelizationFeedbackVertex2024}}]
    Let $G$ be a graph that has some biconnected graph $H$ as a minor, and let $\varphi$ be a minimal minor-model of $H$.
    Then, $G[\varphi(H)]$ is biconnected, and $G[\varphi(H)]$ is a minor of a biconnected component of $G$.
\end{proposition}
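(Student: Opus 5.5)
The plan is to show biconnectivity of $G[\varphi(H)]$ directly from minimality of $\varphi$, and then use the standard fact that every biconnected subgraph of $G$ lies inside a single block of $G$.

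\textbf{Small cases.} If $|V(H)| = 1$, a minimal model is a single vertex, since any one vertex of the branch set already gives a model. If $H = K_2$, a minimal model is a single edge, since the two endpoints of any edge between the branch sets suffice. Both $K_1$ and $K_2$ are biconnected under the paper's convention, so from now on I assume $|V(H)| \geq 3$. Then $H$ is connected and $H - u$ is connected for every $u \in V(H)$.

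\textbf{Connectivity.} $G[\varphi(H)]$ is connected: every branch set induces a connected subgraph, and the branch sets of adjacent vertices of $H$ are joined by an edge, so connectivity of $H$ carries over.

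\textbf{No cut vertex.} Suppose for contradiction that $x$ is a cut vertex of $G[\varphi(H)]$, with $x \in \varphi(u)$. Let $C_1,\dots,C_r$, with $r \geq 2$, be the components of $G[\varphi(H)] - x$. Each branch set $\varphi(w)$ with $w \neq u$ is connected and avoids $x$, so it lies in exactly one $C_i$.

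First I show that every $C_i$ contains some branch set $\varphi(w)$ with $w \neq u$. Suppose instead that $C_i \subseteq \varphi(u) \setminus \{x\}$, and replace $\varphi(u)$ by $\varphi(u) \setminus V(C_i)$. The only attachment of $C_i$ to the rest of $G[\varphi(H)]$ is through $x$. Hence every component of $G[\varphi(u)] - x$ other than those inside $C_i$ still reaches $x$ without using $C_i$, so the new branch set stays connected. Moreover, $C_i$ has no edges to any other branch set, because such an edge would put that branch set into $C_i$. So all edges required by the model survive, and we obtain a model with strictly smaller vertex set, contradicting minimality.

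So at least two components $C_1, C_2$ contain branch sets of vertices of $H - u$. Every edge between different components passes through $x \in \varphi(u)$. Therefore no edge of $H - u$ joins a vertex whose branch set lies in $C_1$ to one whose branch set lies elsewhere. This makes $H - u$ disconnected, contradicting biconnectivity of $H$. Hence $G[\varphi(H)]$ is biconnected.

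\textbf{Second claim.} A biconnected subgraph of $G$ is contained in a maximal biconnected subgraph, i.e.\ in some block $B$ of $G$. Then $G[\varphi(H)]$ is a subgraph, and hence a minor, of $B$.

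The step I expect to need the most care is the pruning argument showing that removing $C_i$ keeps $\varphi(u)$ connected and keeps all required edges. The point to state explicitly there is that all connections between $C_i$ and the rest of $\varphi(H)$ pass through $x$.
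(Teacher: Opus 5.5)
Your proof is correct and complete. The paper gives no proof of its own for this proposition. It imports the statement from Dekker and Jansen~\cite{dekkerKernelizationFeedbackVertex2024} and uses it only to derive \cref{proposition:biconnected_minors}, so there is no in-paper argument to compare yours against.

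Your argument is a valid self-contained proof. Suppose $x \in \varphi(u)$ is a cut vertex of $G[\varphi(H)]$. Then either some component of $G[\varphi(H)] - x$ lies entirely inside $\varphi(u) \setminus \{x\}$, and pruning it contradicts minimality. Or every component carries a branch set of some vertex of $H - u$, and then $H - u$ splits into nonadjacent nonempty parts. Finally, a biconnected subgraph lies inside some block by finiteness. Your pruning step is handled correctly: the reduced branch set stays connected because every component of $G[\varphi(u)] - x$ other than $C_i$ is adjacent to $x$. No required edge is lost because $C_i$ has no neighbours outside $\varphi(u)$.

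One small simplification: your separate treatment of $K_1$ and $K_2$ is unnecessary. Each branch set $\varphi(w)$ with $w \neq u$ lies in exactly one component. So showing that all $r \geq 2$ components contain such a branch set already gives $|V(H) \setminus \{u\}| \geq r \geq 2$. Hence for $|V(H)| \leq 2$ the contradiction arises from the pruning step alone, and the general argument covers the small cases.
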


Clearly, this yields the following.
\begin{proposition}
    \label{proposition:biconnected_minors}
    Let $G$ be a graph that has some connected graph $H$ as a minor.
    Then, each biconnected component of $H$ is a minor of a biconnected component of $G$.
\end{proposition}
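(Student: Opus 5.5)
We take a minimal minor-model $\varphi$ of $H$ in $G$ and project blocks of $H$ onto blocks of $G$. Let $B$ be a biconnected component of $H$. Then $B$ is a biconnected graph, and since $B$ is a subgraph of $H$, it is a minor of $H$. Minors compose, so $B$ is also a minor of $G$.

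We apply the preceding proposition of Dekker and Jansen (\cite[Proposition 8]{dekkerKernelizationFeedbackVertex2024}) to $G$ and the biconnected graph $B$. Among all minor-models of $B$ in $G$, one with inclusion-minimal vertex set $\varphi(B)$ exists by finiteness. The proposition then yields that $G[\varphi(B)]$ is a minor of some biconnected component of $G$. Since $B$ is a minor of $G[\varphi(B)]$, transitivity of the minor relation gives that $B$ is a minor of that biconnected component, as claimed.

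Some degenerate cases need a check, because the convention treats $K_1$ and $K_2$ as biconnected. If $B$ is $K_1$ or $K_2$, the proposition still applies as stated; alternatively, a single vertex or a single edge of $G$ lies in some block of $G$. Such a vertex or edge exists because $G$ contains $H$, and hence $B$, as a minor. The connectivity of $H$ is not really needed beyond ensuring that its blocks are well defined, so this is the only point requiring care.
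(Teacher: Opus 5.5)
Your proof is correct and is essentially the paper's argument. The paper derives this proposition directly ("clearly") from Dekker and Jansen's Proposition 8, applied to each block $B$ of $H$ as a biconnected minor of $G$, and then uses transitivity through $G[\varphi(B)]$; your separate check of the $K_1$ and $K_2$ cases is harmless but not needed.
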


For a graph family $\FF$ and a graph $G$ we define $\OPT_{\FF}(G)$ to be the size of a smallest set $S \subseteq V(G)$ such that $G - S$ has no graph of $\FF$ as a minor.
When the subscript $\FF$ is clear from the context, we may drop it.

\subparagraph*{Parameterized complexity.}
Let us now define the notions of kernelization and compressions.
We refer to the textbook by Cygan et al.~\cite{cyganParameterizedAlgorithms2015} for more background on the basic notions surrounding parameterized algorithms and kernelization.

\begin{definition}[Parameterized Problems]
    Let $\Sigma$ be a finite alphabet.
    A parameterized problem is a language $L \subseteq \Sigma^* \times \mathbb{N}$. 
    An instance of a parameterized problem is a tuple $(I,k)$ where $I \in \Sigma^*$ and $k \in \mathbb{N}$, where $k$ is called the parameter of the instance.
    The instance is a yes-instance (for problem $L$) if $(I,k) \in L$; otherwise it is a no-instance.
\end{definition}

\begin{definition}[Kernel]
    A kernelization (kernel) of a parameterized problem $L$ is an algorithm that takes instances $(I,k)$ of the problem as input, runs in time polynomial in $|I|+k$, and outputs an instance $(I',k')$ equivalent to $(I,k)$ with $|I'|,k' \leq f(k)$ for some computable function $f$.

    The function $f$ is called the size of the kernel, and if $f$ is a polynomial function, the kernel is a polynomial kernel.
\end{definition}

\begin{definition}[Compression]
    Let $L \subseteq \Sigma^* \times \mathbb{N}$ be a parameterized problem, and $Q \subseteq \Sigma^*$ a language.
    A compression from $L$ into $Q$ is a polynomial-time algorithm that takes instances $(I,k) \in \Sigma^* \times \mathbb{N}$ as input, runs in time polynomial in $|I| + k$, outputs $I' \in \Sigma^*$ such that $I' \in Q$ if and only if $(I,k) \in L$, and $|I'| \leq f(k)$ for some computable function $f$. 

    The function $f$ is called the size of the compression, and if $f$ is a polynomial function, we speak of a polynomial compression.
    When the language $Q$ is not specified, and we speak of the nonexistence of a compression of a certain size, we mean the nonexistence of a compression into any target language $Q$. 
\end{definition}

Clearly, a kernel can be seen as a certain type of compression, and hence ruling out compressions of certain sizes is stronger than ruling out kernels of the same size.

\begin{definition}[Polynomial Parameter Transformation]
    Given two parameterized problems $Q_1$ and $Q_2$, a \emph{polynomial parameter transformation} from $Q_1$ to $Q_2$ is an algorithm that takes instances $(I,k)$ of $Q_1$ as input, runs in polynomial time in $|I| + k$, and outputs an equivalent instance $(I',k')$ of $Q_2$, such that $k' \leq p(k)$ for some polynomial function $p$.
    If $p$ is a linear function, we call the algorithm a \emph{linear parameter transformation}.
\end{definition}

Polynomial parameter transformations can be used to transfer kernelization lower bounds in the natural way.

\subparagraph*{Pathwidth.}

We now define the notion of pathwidth, and 
refer to~\cite[Chapter 7]{cyganParameterizedAlgorithms2015} for more information about this notion and the related notion of treewidth.

\begin{definition}[Pathwidth]
    \label{def:pathwidth}
    A path decomposition of a graph $G$ is a sequence $X_1,\dots,X_t$ of sets, called bags, where $\bigcup_{i \in [t]} X_i = V(G)$, and 
    \begin{enumerate}
        \item for each vertex $v \in V(G)$ there exists an interval $[i,j]$ such that $v$ is in bag $X_\ell$ if and only if $\ell \in [i,j]$,
        \item for each edge $uv \in E(G)$ there is a bag $X_i$ such that $u,v \in X_i$.
    \end{enumerate}
    
    The width of $X_1,\dots,X_t$ is $\max_{i \in [t]} |X_i| - 1$, and the pathwidth of $G$, denoted by $\pw(G)$, is the minimum width of any path decomposition of $G$.
    For any integer $\eta$ we define $\GG_{\pw \leq \eta}$ as the class of all graphs with pathwidth at most $\eta$.
\end{definition}

The definition of treewidth is similar, only that instead of a sequence of bags the bags are arranged according to a decomposition tree.
Since we do not obtain any new results for treewidth we do not formally define the notion here, but we denote the treewidth of a graph $G$ as $\tw(G)$ in some stated results of the literature.

\subparagraph*{Elimination distance.}

The elimination distance is a parameter generalizing treedepth (see, e.g.,~\cite{nesetrilSparsityGraphsStructures2012}), and was introduced by Bulian and Dawar~\cite{bulianFixedparameterTractableDistances2017,bulianGraphIsomorphismParameterized2016}.

\begin{definition}[Elimination Distance]
    \label{def:elim_distance}
    Let $\GG$ be a graph class, $G$ a graph, and $\CC$ the set of connected components of $G$.
    Then, the elimination distance $\upsf{ed}_{\GG}(G)$ of $G$ to $\GG$ is defined inductively as follows:
    \[\upsf{ed}_{\GG}(G) = \begin{cases}
        0 & \text{if $G \in \GG$}, \\
        \max_{C \in \CC} \upsf{ed}_{\GG}(C) & \text{if $G \notin \mathcal{G}$ and $G$ is not connected}, \\
        \min_{v \in V(G)} \upsf{ed}_{\GG}(G - v) + 1 & \text{otherwise.}
    \end{cases}\]
    A graph class $\GG'$ has unbounded elimination distance to $\GG$ if there is no constant $b$ such that all graphs $G \in \GG'$ have $\upsf{ed}_{\GG}(G) \leq b$, and bounded elimination distance to $\GG$ if such a constant $b$ exists.
\end{definition} 
\section{Lower Bounds From Minimal Blocking Sets}
\label{sec:lower_bounds_from_blocking_sets}
We now proceed to our study of kernelization lower bounds via large minimal (extended) $\FF$-blocking sets.
First, in \cref{sec:blocking-sets} we define the most basic notion of $\FF$-blocking sets and explain how they yield lower bounds.
Then, we exhibit the limits of this notion in \cref{sec:basic-blocking-sets-insufficient}.
To mitigate these limitations to some degree, we introduce minimal extended $\FF$-blocking sets in \cref{sec:extended_blocking_sets}.
Finally, we establish that it is not possible that the kernelization dichotomies for \fDeletionDistToG{} for all connected families $\FF$ and minor-closed $\GG$ depend on minimal extended $\FF$-blocking sets being bounded in \cref{sec:on_X_Y_being_good}.

\subsection{Blocking Sets}\label{sec:blocking-sets}

We first define blocking sets in the following most basic sense.

\begin{restatable}[Blocking Set]{definition}{defBlockingSets}
	\label{def:blocking_set}
	Let $\FF$ be a set of graphs, and $G$ a graph.
	A set $X \subseteq V(G)$ is a \emph{$\FF$-blocking set} if all sets $S \subseteq V(G)$ with $X \subseteq S$ where $G - S$ is $\FF$-minor free have size strictly larger than $\OPT_\FF(G)$.
	An $\FF$-blocking set is called \emph{minimal} if it is inclusion-wise minimal with respect to the property of being a blocking set.

	A graph class $\mathcal{G}$ has bounded minimal $\FF$-blocking sets if there is some constant $b$ such that for each graph $G \in \GG$ each minimal $\FF$-blocking set of $G$ has size at most $b$.
	If no such constant $b$ exists, then $\GG$ has unbounded minimal $\FF$-blocking sets.
\end{restatable}

If $\FF$ is clear from the context, we may refer to a (minimal) $\FF$-blocking set simply as a (minimal) blocking set.
The strength of minimal blocking sets comes from the fact that having unbounded minimal blocking sets for $\GG$ in the problem \fDeletionDistToG{} immediately yields a kernelization lower bound (under some further mild conditions).
We prove \cref{main_thm:blocking_sets_kernel_lb} in \cref{sec:extended_blocking_sets} as a corollary of a more general result.

\mainThmBlockingSetsLB*{}

Although there are limitations to \cref{main_thm:blocking_sets_kernel_lb}, the result is already useful for providing lower bounds.
For example, the lower bounds for \vcDistToG{} parameterized by the distance to minor-closed $\mathcal{G}$ given in the literature~\cite{bougeretBridgedepthCharacterizesWhich2022,hols2022elimination,majumdarPolynomialKernelsVertex2018,cyganHardnessLosingWidth2014}
can all be obtained from \cref{main_thm:blocking_sets_kernel_lb} together with appropriate constructions for the graph $G$.
Similarly, the result of Donkers and Jansen~\cite{donkersTuringKernelizationDichotomy2021} for \fDeletion{} parameterized by the distance to graphs of bounded treewidth can be obtained from \cref{main_thm:blocking_sets_kernel_lb} for connected families $\FF$.
We explain this construction in much more detail later in \cref{sec:applications}.

However, as mentioned earlier, there are lower bounds that cannot be obtained from \cref{main_thm:blocking_sets_kernel_lb}.
The next subsection shows this in detail.

\subsection{Having Bounded Minimal Blocking Sets Does Not Suffice}
\label{sec:basic-blocking-sets-insufficient}

We observe that for the problem \textsc{Feedback Vertex Set}, obtained by setting $\FF = \{K_3\}$, there is a \emph{minor-closed} graph class $\GG$ such that \problemname{Feedback Vertex Set} parameterized by the distance to $\GG$ does not admit a polynomial kernel (assuming $\np \not \subseteq \conppoly$), even though minimal $\{K_{3}\}$-blocking sets of $\GG$ are bounded.

We remark that for $\FF = \{K_2\}$, that is for \textsc{Vertex Cover}, the work of Hols et al.~\cite[Theorem 1.2]{hols2022elimination} shows that there is some graph class $\GG$ with minimal $\FF$-blocking sets of size one such that the problem does not have a polynomial kernel parameterized by the distance to $\GG$, unless $\np \subseteq \conppoly$.
However, the graph class they used is not minor-closed, and we remind the reader that for \textsc{Vertex Cover} indeed for any minor-closed graph class $\GG$ with bounded minimal $\FF$-blocking sets there is a polynomial kernel for \textsc{Vertex Cover} parameterized by the distance to $\GG$~\cite{bougeretBridgedepthCharacterizesWhich2022}.

We first introduce the key graphs and graph classes we consider.
\begin{definition}[Triangle Necklace and Triangle Path]
    \label{def:triangle_necklace_triangle_path}
    A \emph{triangle necklace} of length $\ell$ is a graph consisting of $\ell$ pairwise vertex-disjoint triangles $T_1,\dots,T_\ell$, such that each triangle $T_i$ has two distinct vertices $x_i$ and $y_i$, and there is an edge from $y_i$ to $x_{i+1}$ for each $1 \leq i \leq \ell - 1$.
    The graph class consisting of all triangle necklaces and their minors is denoted as $\TT$.
    
    A \emph{triangle path} of length $\ell$ consists of a central path $v_1,\dots,v_{\ell+1}$, and for each $1 \leq i \leq \ell$ there is a vertex $w_i$ adjacent to exactly $v_i$ and $v_{i+1}$.
\end{definition}

It is easy to see that $\TT$ has unbounded elimination distance to forests, and therefore \textsc{Feedback Vertex Set} has no polynomial kernel when parameterizing by the distance to $\TT$~\cite{dekkerKernelizationFeedbackVertex2024}.
In fact, the triangle necklaces we defined are a specific instance of the necklaces Dekker and Jansen~\cite{dekkerKernelizationFeedbackVertex2024} consider in their work, and their lower bound is based on their properties.

Next, we define a certain type of edge set of a graph called an \emph{obstruction-avoider}.
These are edge sets which are intuitively not relevant for the existence of any obstructions.

\begin{definition}[Obstruction-Avoiders]
    Let $\FF$ be a family of graphs, and $G$ be a graph. 
    An $\FF$-\emph{obstruction-avoider} of $G$ is an edge set $X \subseteq E(G)$, such that for any $S \subseteq V(G)$ the graph $G - S$ is $\FF$-minor free if and only if the graph $(G - X) - S$ is $\FF$-minor free.
\end{definition}

Now, we give a somewhat general condition in which a blocking set is not minimal based on this notion of $\FF$-obstruction-avoiders.
It is easy to see that, when $\FF$ only contains connected graphs, any minimal $\FF$-blocking set of a graph $G$ must lie within a single connected component.
This simple observation can be generalized to those components that are obtained after deleting $\FF$-obstruction-avoiders.
The next lemma is a generalization of~\cite[Lemma 13]{greilhuber2024component}.

\begin{lemma}
    \label{thm:blocking_sets_and_obstr_avoiders}
    Let $\FF$ be a family of connected graphs, $G$ be a graph, and $B \subseteq V(G)$ be an $\FF$-blocking set.
    Assume that there exists a set $B' \subseteq B$ such that the graph $G - B'$ has an $\FF$-obstruction-avoider $X$, such that there are two distinct connected components of $(G - B') - X$ that contain vertices of $B$.
    Then, $B$ is not a minimal $\FF$-blocking set.
\end{lemma}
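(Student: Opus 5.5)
We argue by contradiction: assume $B$ is a minimal $\FF$-blocking set, and build a solution of size $\OPT_\FF(G)$ that contains all of $B$. Set $G' = (G - B') - X$. Let $C_1$ and $C_2$ be two distinct components of $G'$ that meet $B$. Split $B$ as $B = B' \cup B_1 \cup B_2$, where $B_1 = (B \setminus B') \cap V(C_1)$ and $B_2 = (B \setminus B') \setminus B_1$. Both $B_1$ and $B_2$ are nonempty, so $B \setminus B_1$ and $B \setminus B_2$ are proper subsets of $B$. By minimality neither is a blocking set, so there exist optimal solutions $S_1 \supseteq B \setminus B_1$ and $S_2 \supseteq B \setminus B_2$, each of size $\OPT_\FF(G)$. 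Both contain $B'$.

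The key step is a decomposition of $G - B'$. Since $X$ is an $\FF$-obstruction-avoider of $G - B'$, for every $S \subseteq V(G) \setminus B'$, $G - B' - S$ is $\FF$-minor free iff $G' - S$ is. All graphs in $\FF$ are connected, so $G' - S$ is $\FF$-minor free iff each component of $G' - S$ is. Take $Z$ to be the union of the components of $G'$ other than $C_1$. Then $G' - S$ is $\FF$-minor free iff both $C_1 - S$ and $G'[Z] - S$ are $\FF$-minor free. So solutions of $G$ containing $B'$ correspond exactly to $B'$ together with a solution on $C_1$ and a solution on $Z$, chosen independently.

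We combine by exchange. Define $S^\ast = B' \cup (S_2 \cap V(C_1)) \cup (S_1 \cap Z)$.
\begin{itemize}
\item $S_2$ contains $B_1$, and $S_1$ contains $B_2$, which lies inside $Z$. Hence $S^\ast \supseteq B$.
\item $S_2 \cap V(C_1)$ hits all obstructions in $C_1$ and $S_1 \cap Z$ hits all obstructions in $Z$. By the decomposition, $G - S^\ast$ is $\FF$-minor free.
\end{itemize}

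It remains to bound the size. Write $a_i = |S_i \cap V(C_1)|$ and $b_i = |S_i \cap Z|$, so $|S_i| = |B'| + a_i + b_i$. Here $S_i \cap V(G - B') \subseteq V(C_1) \cup Z$, because $V(G') = V(G - B')$. Swapping parts shows $B' \cup (S_1 \cap V(C_1)) \cup (S_2 \cap Z)$ is also a solution. Adding sizes gives
\[ |S^\ast| + |B' \cup (S_1 \cap V(C_1)) \cup (S_2 \cap Z)| = |S_1| + |S_2| = 2\,\OPT_\FF(G). \]
Each of the two sets is a solution, so each has size at least $\OPT_\FF(G)$. Therefore both have size exactly $\OPT_\FF(G)$. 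In particular $S^\ast$ is an optimal solution containing $B$, contradicting that $B$ is a blocking set.

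The main obstacle is the decomposition step: it must be valid for solutions that contain $B'$, which requires applying the obstruction-avoider property to $G - B'$ rather than to $G$. The sets $S_1 \setminus B'$ and $S_2 \setminus B'$ lie in $V(G - B')$, so this works. The remaining steps are routine bookkeeping.
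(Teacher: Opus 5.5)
Your proof is correct and follows essentially the same route as the paper: you take two optimal solutions provided by minimality, each omitting blocking-set vertices from a different component of $(G-B')-X$, and glue them across one component, using the obstruction-avoider to decouple the components. The only difference is in the size bound. You use the double-counting exchange with the swapped set, whereas the paper argues that the restriction of each optimal solution to the component $C$ is an optimal solution of $C$, so the two restrictions have equal size.
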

\begin{proof}
    Towards a contradiction, let us assume that the set $B$ fulfilling the conditions of the lemma is a minimal $\FF$-blocking set.
    Let $b_1,b_2 \in B \setminus B'$ be vertices that are in different connected components of $(G - B') - X$; by the statement of the lemma they exist.
    Then, there is some optimal solution $S_1$ for $G$ that contains $B \setminus \{b_2\}$.
    Similarly, there is an optimal solution $S_2$ for $G$ that contains $B \setminus \{b_1\}$.
    Let $C$ be the connected component of $(G - B') - X$ that contains $b_1$.
    We argue that then, $S^\star = (S_2 \setminus V(C)) \cup (S_1 \cap V(C))$ is an optimal solution that contains all of $B$.

    Assume that $G - S^\star$ is not $\FF$-minor free.
    Then, let $S' = S^\star \setminus B'$.
    Clearly, $(G - B') - S'$ is not $\FF$-minor free.
    Then, since $X$ is an $\FF$-obstruction-avoider of $G- B'$, some $H \in \FF$ must be a minor of $(G - B' - X) - S'$ too.
    We have that $C$ is a connected component of $G - B' - X$, and therefore $H$ is a minor of $C - S'$, or a minor of $(G - B' - X - C) - S'$.
    However, we have chosen $S_1$ so that $C - S^\star = C - S'$ is $\FF$-minor free.
    Similarly, we have chosen $S_2$ so that $(G - C) - S^\star$ is $\FF$-minor free, and $(G - B' -X -C) - S'$ is a subgraph of $(G - C) - S^\star$.
    Hence, $H$ cannot be a minor of $G - S^\star$.

    Finally, we need to argue that $S^\star$ is of size $\opt_{\FF}(G)$.
    We first show that $S_1 \cap V(C)$ must be an optimal solution for $C$.
    Note that $S_1$ is an optimal solution for graph $G$ that contains $B'$.
    Hence, $S_1' = S_1 \setminus B'$ is an optimal solution for graph $G - B'$.
    Since $X$ is an $\FF$-obstruction-avoider, we have that $S_1'$ is also an optimal solution for $G - B' - X$.
    It is clear that this means that the restriction of $S_1'$ to a connected component of $G - B' - X$ yields an optimal solution for that connected component.
    In particular, $S_1' \cap V(C)$ is an optimal solution for $C$.
    Since $S_1' \cap V(C) = S_1 \cap V(C)$, we have that $S_1 \cap V(C)$ is an optimal solution for $C$.
    Now, it is clear that $|S_2 \cap V(C)| = |S_1 \cap V(C)|$, and as $S_2$ is also an optimal solution for $G$, we find that $|S^\star| = |S_2| = \opt_{\FF}(G)$.
    Therefore, $S^\star$ is also of minimum size.
\end{proof}

We now show that $\TT$ has bounded minimal blocking sets.
\begin{lemma}
    \label{thm:T_bounded_minimal_blocking_sets}
    For any graph $T \in \TT$ any minimal $\{K_3\}$-blocking set of $T$ has size at most $2$.
\end{lemma}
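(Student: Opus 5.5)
We prove the claim by contradiction. Let $T \in \TT$ and let $B$ be a minimal $\{K_3\}$-blocking set of $T$ with $|B| \geq 3$. We will exhibit a set $B' \subseteq B$ and a $\{K_3\}$-obstruction-avoider $X$ of $T - B'$ such that two vertices of $B \setminus B'$ lie in distinct components of $(T - B') - X$. Then \cref{thm:blocking_sets_and_obstr_avoiders} says $B$ is not minimal, a contradiction.

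\textbf{Structure of $T$.} Since $T$ is a minor of a triangle necklace, every cycle of $T$ is a triangle. Moreover, every biconnected component of $T$ is a triangle, a single edge, or a vertex: contracting or deleting parts of the necklace can never merge two triangles into a larger block, because consecutive triangles are joined only through bridges. The key observation is that any bridge $e$ of a graph $G$ is a $\{K_3\}$-obstruction-avoider of $G$. Indeed, for every $S$, the graph $G - S$ contains a cycle if and only if $(G - e) - S$ does, because a bridge lies on no cycle, and deleting vertices keeps $e$ a bridge whenever it survives. The same holds for any set of bridges. So we take $X$ to be the set of all bridges of $T - B'$. The components of $(T - B') - X$ are then single vertices and triangles.

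\textbf{Case $B' = \emptyset$.} If two vertices of $B$ lie in distinct components of $T - X$, we are done. Otherwise all of $B$ lies in one component of $T - X$, which has at most three vertices. Since $|B| \geq 3$, $B$ is then exactly the vertex set of a triangle. Choose $B' = \{b\}$ for some $b \in B$. In $T - b$ the two remaining vertices of $B$ are joined only by an edge that has become a bridge, because all cycles of $T$ are triangles and the only cycle through that edge was the destroyed triangle. Taking $X$ to be all bridges of $T - b$ therefore separates them, so the lemma applies.

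\textbf{Main obstacle.} The step needing the most care is the structural claim that every minor of a triangle necklace has all of its blocks of size at most three, i.e., is a cactus-like graph whose cycles are triangles. We verify it by noting that the class of graphs whose blocks are edges or triangles is closed under deleting vertices and edges. It is also closed under contraction: contracting an edge of a triangle produces a $K_2$ block, and contracting a bridge merges two vertices lying in different blocks, so no new cycle arises. Triangle necklaces are in this class by definition. With this in hand, the bridge argument above completes the proof.
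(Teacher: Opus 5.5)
Your proof has a genuine gap in the structural step. From ``every block of $T$ is a vertex, an edge, or a triangle'' you conclude that deleting all bridges leaves only isolated vertices and triangles. Your case $B'=\emptyset$ rests on this (``one component of $T-X$, which has at most three vertices'').

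This confuses blocks with $2$-edge-connected components: in $\mathcal{T}$, triangles can share cut vertices. Contracting the edge $y_1x_2$ of a triangle necklace of length $2$ already gives two triangles glued at a vertex. That graph is bridgeless and has five vertices. More generally, every triangle path lies in $\mathcal{T}$ and has no bridges at all. So $B$ can sit inside one bridgeless piece while being spread over several triangles. None of your cases covers this, yet it is exactly the main case of the lemma.

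Moreover, the block information you extract cannot suffice on its own. Take a triangle $abc$ and glue a further triangle $aa_1a_2$, $bb_1b_2$, $cc_1c_2$ at each of its vertices. All blocks are triangles. The minimum feedback vertex set has size $3$, since the three outer triangles are disjoint and $\{a,b,c\}$ hits everything. Every solution containing $\{a_1,b_1,c_1\}$ must additionally hit $abc$, so $\{a_1,b_1,c_1\}$ is a blocking set. On the other hand, $\{a_1,b_1,c\}$ and its symmetric variants are optimal solutions, so no proper subset is blocking. Hence $\{a_1,b_1,c_1\}$ is a minimal blocking set of size $3$. This graph is not in $\mathcal{T}$, so you must use the \emph{linear} arrangement of triangles in graphs of $\mathcal{T}$.

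That is what the paper does. It passes to a minor-minimal counterexample, which is connected and bridgeless. It then proves that such a graph of $\mathcal{T}$ is a triangle path $v_1,\dots,v_{\ell+1}$ with each $w_i$ adjacent to $v_i,v_{i+1}$. Next it picks $z\in B$ that has elements $z',z''\in B$ before and after it in the order $v_1,w_1,v_2,w_2,\dots,v_{\ell+1}$, and sets $B'=\{z\}$.
\begin{itemize}
\item If $z=v_i$, then $z$ is a cut vertex separating $z'$ from $z''$.
\item If $z=w_i$, the edge $v_iv_{i+1}$ becomes a bridge of $T-w_i$, hence a $\{K_3\}$-obstruction-avoider, and deleting it separates $z'$ from $z''$.
\end{itemize}
Your bridge observation and your use of \cref{thm:blocking_sets_and_obstr_avoiders} fit this argument perfectly. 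What is missing is the proof that the bridgeless part containing $B$ is a triangle path, together with this choice of a ``middle'' vertex of $B$ as $B'$.
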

\begin{proof}
    Towards a contradiction, assume that some $T \in \TT$ has a minimal blocking set $B$ of size at least $3$.
    Moreover, let $T$ be a minor-minimal graph in $\TT$ with this property.
    We first show some structural properties of $T$.
    First, observe that $T$ must be connected: $B$ must lie within a single connected component, and therefore $B$ is a blocking set of a single connected component of $T$.
    If $T$ is not connected, this would contradict the minor-minimality of $T$.

    Now, we argue that $T$ has no bridge.
    \begin{claim}
        $T$ has no bridge.
    \end{claim}
    \begin{claimproof}
        Assume that $T$ has a bridge $e$.
        Observe that, since no cycle contains $e$, a set $S \subseteq V(T)$ is a solution for $T$ if and only if it is a solution for $T - e$.
        Therefore, $B$ is also a minimal blocking set of $T - e$, contradicting that $T$ is minor-minimal.
    \end{claimproof}

    Next, we show that connected graphs of $\TT$ without bridges are triangle paths.
    Although this is not difficult to observe, we provide a detailed proof here for completeness.
    \begin{claim}
        $T$ is a triangle path.
    \end{claim}
    \begin{claimproof}
        First, we show that each connected graph of $\TT$ can be obtained the following way: starting from an empty graph, add a path $v_1,\dots,v_t$.
        Then, for each $i \in [t-1]$ one is allowed, but not obligated, to add a vertex $w_i$ adjacent to only $v_i$ and $v_{i+1}$.
        Finally, one can attach arbitrarily many degree-$1$ vertices to each vertex $v_i$ of the path.

        Let $T' \in \TT$ be a connected graph.
        Hence, there is some triangle necklace $T_1'$ such that $T'$ is a minor of $T_1'$.
        Then, $T'$ can be obtained from a subgraph $\hat T$ of $T_1'$ by using edge contractions only.
        The graph $\hat T$ must be connected since contracting an edge cannot change the number of connected components of a graph.
        Graph $\hat T$ can be obtained by deleting edges and vertices of $T'_1$, such that each obtained intermediate graph remains connected.
        To do this, start with a spanning tree of $\hat T$, and extend it to a spanning tree of $T_1'$.
        Repeatedly delete leaf vertices of the spanning tree that are not part of $\hat T$ until no longer possible.
        Then, $\hat T$ can be obtained from the current graph by edge deletions only.

        Hence, there is some sequence $T'_1,\dots,T'_k = T'$, such that $T'_1$ is a triangle necklace, and for each $i \in [2,k]$ we have that $T'_i$ is a minor of $T'_{i-1}$ obtained by deleting an edge, deleting a vertex, or contracting an edge, and $T_i'$ is connected.
        
        Now, we show that $T'$ has the structure we claim by induction on the length of this sequence.
        
        In the base case $k = 1$ we are done since a triangle necklace clearly can be obtained in the described manner.

        For the induction step, for some $\ell \in [2,k]$ assume that graph $T'_{\ell - 1}$ of the sequence can be obtained in the described manner.
        We show that then this also holds for graph $T'_{\ell}$.
        Let $v_1,\dots,v_t$ be the path defining the structure of $T'_{\ell-1}$.

        If $T'_\ell$ is obtained by deleting a vertex $v$, this vertex cannot be part of $v_2,\dots,v_{t-1}$, since deleting such a vertex would disconnect the graph.
        If the deleted vertex is a pendant of the path, or a vertex $w_i$ forming a triangle with two vertices of the path, then it is clear that $T'_\ell$ also has the desired structure.
        Otherwise, without loss of generality $v = v_1$.
        Now, if $v_1$ has no pendants attached to it, then $T'_\ell$ has the claimed structure by taking $v_2,\dots,v_t$ as the central path.
        If a triangle on $w_1,v_1,v_2$ existed, we instead add a pendant to $v_2$.
        If $v_1$ has a pendant, this either contradicts that $T'_\ell$ is connected, or $T'_{\ell}$ is isomorphic to $P_1$ and the claim is trivial.

        Next, consider that $T'_\ell$ is obtained by deleting an edge.
        If the edge is between a vertex of the path and a pendant, this contradicts that $T'_\ell$ is connected.
        If the edge is between a vertex $w_i$ forming a triangle with $v_i,v_{i+1}$ and one of $v_i,v_{i+1}$, then we can add a pendant instead of the triangle.
        Finally, the edge may be between the path vertices $v_i$ and $v_{i+1}$.
        Then, deleting the edge would disconnect the graph if there is no triangle on $w_i,v_i,v_{i+1}$.
        Hence, such a triangle exists.
        In the creation of $T'_\ell$ we can thus use the path $v_1,\dots,v_i,w_i,v_{i+1},\dots,v_t$.

        Finally, consider that $T'_\ell$ is obtained by contracting an edge.
        If the edge is from a vertex $v_i$ to a pendant, contracting the edge produces a graph isomorphic to the graph obtained by deleting the pendant.
        If the edge is from a vertex $w_i$ forming a triangle with $v_i,v_{i+1}$, then contracting the edge is again essentially the same as deleting vertex $w_i$.
        Otherwise, the edge is between two vertices of the path $v_i,v_{i+1}$.
        Now, observe a triangle on $w_i,v_{i},v_{i+1}$ simply becomes a pendant of the contracted vertex, and pendants attached to either $v_i$ or $v_{i+1}$ become pendants of the contracted vertex too.
        Hence, $T'_\ell$ also has the desired structure.

        Finally, we use this information to argue that $T$ is a triangle path.
        Indeed, since $T \in \TT$ is connected, it follows the structure described above.
        However, there cannot be any pendants attached to the vertices of the central path of $T$, since these trivially form bridges.
        Moreover, if two path vertices $v_i,v_{i+1}$ do not form a triangle with a vertex $w_i$, then also $v_iv_{i+1}$ is a bridge.
        Therefore, $T$ is a triangle path.
    \end{claimproof}

    Now, let $T$ consist of the path $v_1,\dots,v_\ell$ and the vertices $w_1,\dots,w_{\ell - 1}$, where each vertex $w_i$ has an edge to $v_i$ and $v_{i+1}$.
    Consider the order $v_1, w_1, v_2, w_2, \dots, w_{\ell-1}, v_{\ell}$; that is, the vertices of $T$ are arranged by increasing subscript, and $v_i$ is ordered before $w_i$.
    Let $z$ be the vertex of $B$ such that there is some $z' \in B$ coming before $z$ in the order, and some $z'' \in B$ coming after $z$ in the order.
    Such a $z$ exists because $|B| \geq 3$.

    If $z$ is a vertex $v_i$, set $B' = \{v_i\}$ and observe that $T - B'$ has two distinct connected components that contain vertices of $B \setminus B'$, namely a connected component containing $z'$, and a different connected component containing $z''$.
    Applying \cref{thm:blocking_sets_and_obstr_avoiders} with an empty $\{K_3\}$-obstruction-avoider suffices to show that $B$ is not a minimal blocking set.

    Otherwise, $z$ is a vertex $w_i$.
    In this case, set $B' = \{w_i\}$.
    Then, the edge $v_iv_{i+1}$ is a bridge of $T - B'$, and therefore $X = \{v_iv_{i+1}\}$ is a $\{K_3\}$-obstruction-avoider of $T -B'$, and $T- B' - X$ contains two connected components containing vertices of $B \setminus B'$, namely a connected component containing $z'$, and a different connected component containing $z''$. 
    By \cref{thm:blocking_sets_and_obstr_avoiders} the set $B$ is not a minimal blocking set.
\end{proof}

We can now prove \cref{thm:fvs_minimal_blocking_sets_do_not_suffice}.

\thmBlockingSetsDoNotSuffice*{}
\begin{proof}
    By Dekker and Jansen~\cite{dekkerKernelizationFeedbackVertex2024} the problem \fDeletion{} with $\FF = \{K_{3}\}$ parameterized by the distance to $\mathcal{T}$ has no polynomial kernel.
    However, $\mathcal{T}$ is minor-closed and by \cref{thm:T_bounded_minimal_blocking_sets} it has bounded minimal $\FF$-blocking sets.
\end{proof}

\subsection{Lower Bounds via Extended Minimal Blocking Sets}
\label{sec:extended_blocking_sets}

Now, we show that, with a slight generalization of the definition of blocking sets, we can capture those kernelization lower bounds for biconnected families $\FF$
that eluded \cref{main_thm:blocking_sets_kernel_lb}.
The following notion of blocking sets was (somewhat implicitly) defined in the work of Dekker and Jansen~\cite{dekkerKernelizationFeedbackVertex2024} for the \textsc{Feedback Vertex Set} problem.

\defExtendedBlockingSets*{}

We show that
we can also obtain kernelization lower bounds from this notion of blocking sets when some additional conditions apply.

The lower bound always applies if all graphs in $\FF$ are biconnected.
In addition, we consider specific cases of non-biconnected families in which it also applies.
In these non-biconnected cases, we need to make further assumptions about the graph with a large minimal extended $\FF$-blocking set.

\defGoodExtendedMBS*{}

\cref{def:good_minimal_extended_blocking_sets} is inspired by the lower bound of Donkers and Jansen~\cite{donkersTuringKernelizationDichotomy2021}, in which leaf blocks were also central to arguing about the non-existence of minors.
We can observe that \cref{def:good_minimal_extended_blocking_sets} automatically applies to biconnected families $\mathcal{F}$ on at least three vertices, yielding the following observation.

\begin{observation}
	\label{obs:biconnected_families_extended_blocking_sets_good}
	For all families $\FF$ such that each graph of $\FF$ is biconnected on at least three vertices, and any graph $H$ with minimal extended $\FF$-blocking set $(X,Y)$, we have that $(X,Y)$ is good.
\end{observation}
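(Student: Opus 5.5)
The plan is to verify the conditions of \cref{def:good_minimal_extended_blocking_sets} directly. If $Y = \emptyset$, then $(X,Y)$ is good by definition and there is nothing to show. So assume $Y \neq \emptyset$. Since every graph of $\FF$ is biconnected, each $F \in \FF$ has exactly one block, namely $F$ itself, and that block contains no cut vertex. Hence $\leafblock(F) = \{F\}$ and $\LL = \FF$. By assumption every $F \in \FF$ has at least three vertices, so the first requirement, that every graph of $\LL$ has at least three vertices, holds.

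Next, for each $v \in X$ we must produce a set $S_v$. Minimality of $(X,Y)$ says that $(X \setminus \{v\}, Y)$ is not an extended $\FF$-blocking set. By \cref{def:extended_blocking_sets}, this means there is a set $S_v \subseteq V(H)$ with the following properties:
\begin{enumerate}
\item $|S_v| = \opt_{\FF}(H)$;
\item $H - S_v$ is $\FF$-minor free;
\item $X \setminus \{v\} \subseteq S_v$;
\item no two vertices of $Y \setminus S_v$ lie in a common component of $H - S_v$.
\end{enumerate}
These are exactly conditions (1)--(4) of \cref{def:good_minimal_extended_blocking_sets}.

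It remains to check condition (5). Each component of $H - S_v$ is a subgraph of the $\FF$-minor-free graph $H - S_v$, so it is itself $\FF$-minor free. Since $\LL = \FF$, every component is $\LL$-minor free, and in particular so is every component containing some $y \in Y \setminus S_v$. This settles (5).

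The only step that needs any care is the identification $\LL = \FF$. This relies on the paper's convention that a biconnected graph is its own unique block and is a leaf block because it contains at most one cut vertex (in fact none). Once that is in place, the argument is immediate. Note also that minimality with respect to $X$ alone suffices; minimality in $Y$ is not needed.
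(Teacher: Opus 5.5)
Your proof is correct and is the direct verification the paper has in mind; the paper states this as an observation without a written proof. Since a biconnected graph is its own unique leaf block, $\LL = \FF$. Conditions (1)--(4) come from minimality of $(X,Y)$ in $X$, and condition (5) follows because $H - S_v$ is $\FF$-minor free.
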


Similarly, it is clear from the definition that for any graph $H$ with minimal $\FF$-blocking set $X$, the pair $(X,\emptyset)$ is a good minimal extended $\FF$-blocking set of $H$.

\begin{observation}
	\label{obs:minimal_blocking_set_good_extended_blocking_sets}
	Let $\FF$ be a family of connected graphs, and $H$ be a graph with a minimal $\FF$-blocking set $X$.
	Then, $(X,\emptyset)$ is a good minimal extended $\FF$-blocking set of $H$.
\end{observation}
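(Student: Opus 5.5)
This observation follows from the definitions alone; we check the two requirements of a good minimal extended $\FF$-blocking set directly. Because $Y = \emptyset$, \cref{def:good_minimal_extended_blocking_sets} declares $(X,\emptyset)$ good as soon as it is a minimal extended $\FF$-blocking set. So we only need to show that $(X,\emptyset)$ is an extended $\FF$-blocking set and that it is minimal in the sense of \cref{def:extended_blocking_sets}.

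\emph{Step 1: $(X,\emptyset)$ is an extended $\FF$-blocking set.} With $Y = \emptyset$, condition 4 of \cref{def:extended_blocking_sets} holds vacuously. The pair is therefore an extended $\FF$-blocking set exactly when no $S$ satisfies $|S| = \opt_\FF(H)$, $H - S$ being $\FF$-minor free, and $X \subseteq S$. This is equivalent to $X$ being an $\FF$-blocking set in the sense of \cref{def:blocking_set}. To see this, note that any solution $S \supseteq X$ has $|S| \ge \opt_\FF(H)$. Hence "every solution containing $X$ has size strictly larger than $\opt_\FF(H)$" is the same as "no solution of size exactly $\opt_\FF(H)$ contains $X$."

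\emph{Step 2: minimality.} There are two kinds of subsets to rule out.
\begin{itemize}
\item For $X' \subsetneq X$, the pair $(X',\emptyset)$ is an extended blocking set if and only if $X'$ is a blocking set, by the equivalence in Step 1. Since $X$ is a minimal $\FF$-blocking set, $X'$ is not a blocking set, so $(X',\emptyset)$ is not an extended blocking set.
\item There is no proper subset $Y' \subsetneq \emptyset$, so the second minimality condition holds vacuously.
\end{itemize}

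Hence $(X,\emptyset)$ is a minimal extended $\FF$-blocking set with $Y = \emptyset$, and therefore it is good. The only point needing any care is the equivalence in Step 1 between "size strictly larger than $\opt_\FF(H)$" and "no optimum-size solution", which holds because every solution has size at least $\opt_\FF(H)$. The connectivity assumption on $\FF$ is used only so that the extended notion is defined.
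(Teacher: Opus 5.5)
Your proof is correct and follows the same route as the paper, which simply states that the claim is clear from the definitions. You spell out that unfolding: condition 4 is vacuous when $Y=\emptyset$, the two blocking notions coincide because every solution has size at least $\opt_{\FF}(H)$, and goodness is automatic when $Y=\emptyset$.
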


Our goal in this section is obtaining kernelization lower bounds from good minimal extended $\FF$-blocking sets.
\thmLBFromExtendedBlockingSets*{}

Before proceeding to the proof of this theorem, from \cref{obs:biconnected_families_extended_blocking_sets_good,obs:minimal_blocking_set_good_extended_blocking_sets,thm:lb_from_extended_blocking_sets} we obtain the following two results.

\mainThmBlockingSetsLB*{}

\thmLBBiconnectedExtendedBlockingSets*{}

\subsubsection*{Proof of \cref{thm:lb_from_extended_blocking_sets}}

Now, we proceed to the proof of \cref{thm:lb_from_extended_blocking_sets}.
We introduce the notion of a universe gadget, which plays a central role.

\defUnivGadget*{}

We call these gadgets universe gadgets because later, when we reduce from a variant of \textsc{Hitting Set}, we use these gadgets to represent the universe of the instance.
Their crucial property is that they have some optimal solution that selects the distinguished vertex $v$, and also have an optimal solution that selects all neighbors of $v$.

\begin{figure}[t]
	\centering
	\begin{subfigure}[t]{0.34\linewidth}
		\centering
		\includegraphics[page=1]{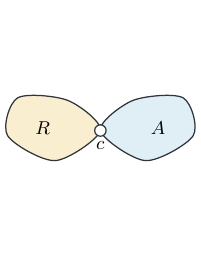}
		\subcaption{The graph $F = R \cup A$.}
	\end{subfigure}
	\hfill
	\begin{subfigure}[t]{0.60\linewidth}
		\centering
		\includegraphics[page=2]{figures/universe-gadget.pdf}
		\subcaption{The graph $H$ of the universe gadget.}
	\end{subfigure}
	\caption{The universe gadget when some graph of $\FF$ is not biconnected. Blue regions are copies of the leaf block $A$ and orange regions are copies of $R$.
	In~(b), $H[V(A_1) \cup V(R_1)]$ and $H[V(A_2) \cup V(R_2)]$ are isomorphic to $F$, whereas this is not necessarily the case for $H[V(A_1) \cup V(R_2)]$.}
	\label{fig:universe_gadget}
\end{figure}

First, we show that universe gadgets exist when some graph in $\FF$ is not biconnected.
\begin{lemma}
	\label{thm:univ_gadget_non_biconnected}
	Let $\FF$ be a minor-minimal family of connected graphs, such that some graph in $\FF$ is not biconnected.
	Then, there is a universe gadget $(H,v,2)$.
\end{lemma}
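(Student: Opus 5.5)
The plan is to build $H$ from two copies of a non-biconnected graph of $\FF$, split along a smallest leaf block, as in \cref{fig:universe_gadget}. Fix a non-biconnected $F \in \FF$ and choose a leaf block $A$ of $F$ that is smallest among all leaf blocks of all graphs of $\FF$. Let $c$ be the cut vertex of $F$ in $A$, and let $R$ be the rest of $F$, so that $F = R \cup A$ with $V(R) \cap V(A) = \{c\}$.

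\textbf{Construction.} Take copies $A_1, R_1, A_2, R_2$, with $A_i \cup R_i \cong F$ glued at the copy $c_i$ of $c$. Add a distinguished vertex $v$ of degree exactly two, one neighbour in $R_1$ and one in $R_2$, arranged so that the edges through $v$ give a way to attach the leaf block $A_1$ to $R_2$ in place of $A_2$ (and symmetrically). The intended effect is that $H - v$ contains, besides the two designated $F$-copies, only "crossed" $F$-models that reuse a cut vertex $c_i$. Concretely:
\begin{enumerate}
\item Lower bound $\OPT_\FF(H) \geq 2$: $H[V(A_1)\cup V(R_1)]$ and $H[V(A_2)\cup V(R_2)]$ are disjoint and each isomorphic to $F$.
\item Optimal solution through the neighbourhood: choose the neighbours of $v$ (or the vertices $c_1, c_2$, if I place the neighbours of $v$ there) so that $N(v)$ meets each copy of $F$ in a vertex whose deletion destroys that copy. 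Then $S = N(v)$ has size $2$, satisfying conditions 1 and 3 of the universe-gadget definition.
\item Optimal solution through $v$: choose the second vertex so that, together with $v$, it hits both copies. For this, $v$ must itself lie on every $F$-model meeting one side, that is, $v$ sits between $c_1$ and the rest of $R_1$. This requires adjusting the construction so that $v$ is a subdivision-type vertex inside the first copy rather than an external vertex, while still routing one edge towards $R_2$.
\end{enumerate}

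\textbf{Main obstacle.} The hard step is verifying $\FF$-minor freeness of $H - S$ for both proposed solutions. The deletions create no new $F'$-models for any $F' \in \FF$, including crossed models such as $A_1 \cup R_2$ routed through $v$. I would argue via \cref{proposition:biconnected_minors}: every biconnected component of $F'$ must be a minor of a biconnected component of $H - S$. After the deletions, I aim to show that the remaining blocks that are large enough to host a leaf block of $F'$ are too few or too badly positioned. The key facts are that $A$ is a smallest leaf block over all of $\FF$ and that $\FF$ is minor-minimal, so no $F'$ is a proper minor of $F$. This should force every $F'$-model to use a full copy of $A$ together with a full copy of $R$ joined by a cut vertex, and any such combination is destroyed by the chosen deletions.

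I expect this counting-of-leaf-blocks argument, in the style of Donkers and Jansen, to be the delicate part. If it fails for the external-$v$ variant, my fallback is to place $v$ as a subdivision vertex of the edge joining $c_1$ to $R_1$, adding the edge from $v$ to $R_2$ instead of $R_1$. The degree-two condition then holds by construction, and the remaining checks reduce to the case analysis above.
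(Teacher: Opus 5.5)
There is a genuine gap. You never commit to a construction, and the one you do describe cannot satisfy condition~2 of a universe gadget. Suppose $H[V(A_1)\cup V(R_1)]$ and $H[V(A_2)\cup V(R_2)]$ are vertex-disjoint copies of $F$ and $v$ lies outside both. Then every $S$ with $v\in S$ and $H-S$ $\FF$-minor free must also hit each copy, so $|S|\ge 3$. Hence $(H,v,2)$ is not a universe gadget, wherever the two neighbours of $v$ are placed.

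Your fallback does not repair this. Making $v$ a subdivision vertex of an edge $c_1x$ with $x\in V(R_1)$, and sending its second edge into $R_2$, deletes the edge $c_1x$ from the first copy. So the lower bound $\OPT_\FF(H)\ge 2$ is lost, and the construction really fails. Take $\FF=\{F\}$ with $F$ two triangles joined by a bridge, and $A$ one triangle, so $c_1x$ is the bridge. The first copy splits into a triangle carrying the path through $v$ and an isolated triangle. Deleting the single vertex $c_2$ (if $v$'s second neighbour is $c_2$) or $x_2$ (otherwise) leaves an $F$-minor-free graph, so $\OPT_\FF(H)=1$. Keeping the edge $vx$ as well would give $v$ degree three. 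Separately, a leaf block of a non-biconnected $F$ that is smallest among all leaf blocks of all graphs of $\FF$ need not exist, e.g.\ for $\FF=\{K_4,F\}$ with $F$ two $5$-cycles sharing a vertex.

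The missing idea is to make the two copies of $F$ share exactly one vertex. The paper takes $F$ to be a non-biconnected graph of $\FF$ with the fewest vertices and makes $v$ adjacent to $c_1$ and $c_2$. It then identifies a vertex $w_1\neq c_1$ of $A_1$ with a vertex $w_2\neq c_2$ of $R_2$ into one vertex $w$.
\begin{itemize}
\item $\{v,w\}$ is a solution: what remains is two disjoint copies of $F$, each missing a vertex.
\item $\OPT_\FF(H)\ge 2$: either $w\notin S$, and the two copies meet only in $w$; or $w\in S$, and the crossed graph $H[V(R_1)\cup\{v\}\cup V(A_2)]$ still has $F$ as a minor after contracting $vc_1$ and $vc_2$.
\item $N(v)=\{c_1,c_2\}$ is a solution: the only component that could host a minor is $H[(V(A_1)\cup V(R_2))\setminus\{c_1,c_2\}]$. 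A minimal model of some $F'\in\FF$ there must use vertices on both sides of the cut vertex $w$. So $F'$ is not biconnected and has at most $|V(F)|-2$ vertices, contradicting the choice of $F$.
\end{itemize}
This simple vertex count replaces the leaf-block analysis you anticipated as the main obstacle.
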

\begin{proof}
	Let $F \in \FF$ be a non-biconnected graph of $\FF$ with the fewest vertices among all such graphs.
	Let $A$ be a leaf block of $F$, and $c$ be the cut vertex of $F$ contained in $A$.
	Set $R = F - (V(A) \setminus \{c\})$.

	Add two disjoint copies $F_1$ and $F_2$ of $F$. 
	For $i \in \{1,2\}$, let $A_i$, $c_i$, and $R_i$ denote the copies of $A$, $c$, and $R$ in $F_i$.

	Add a vertex $v$ to the graph, and make $v$ adjacent to $c_1$ and to $c_2$.
	Finally, let $w_1 \neq c_1$ be a vertex of $A_1$, and $w_2 \neq c_2$ be a vertex of $R_2$.
	Such vertices exist, because $A_1, R_2$ contain at least two vertices each.
	Identify the vertices $w_1$ and $w_2$, and call the resulting vertex $w$.
	This concludes the description of the resulting graph $H$.
	We now prove that $(H,v,2)$ is the desired universe gadget.
	See \cref{fig:universe_gadget} for an illustration of $F$ and $H$.
	We start with the first property.

	\begin{claim}
		Any set $S \subseteq V(H)$ such that $H - S$ is $\FF$-minor free has size at least $2$.
	\end{claim}
	\begin{claimproof}
		Let $S$ be a minimum-sized set such that $H - S$ is $\FF$-minor free.

		First, consider that $w \in S$.
		Then, it is easy to see that $H[\{v\} \cup V(A_2) \cup V(R_1)]$ has $F$ as a minor.
		Indeed, when contracting the edges from $v$ to $c_1$ and from $v$ to $c_2$ in this graph one directly obtains $F$.
		Therefore, $S$ must select at least one more vertex and $|S| \geq 2$.

		Otherwise, $S$ does not contain $w$.
		Then, notice that $H[V(A_1) \cup V(R_1)]$ and $H[V(A_2) \cup V(R_2)]$ are two graphs isomorphic to $F$ that intersect only in vertex $w$.
		Hence, $S$ must select at least one vertex of $(V(A_1) \cup V(R_1)) \setminus \{w\}$, and at least one vertex of $(V(A_2) \cup V(R_2)) \setminus \{w\}$.
	\end{claimproof}

	We proceed to the second property.
	\begin{claim}
		There is a set $S \subseteq V(H)$ of size $2$ such that $H - S$ is $\FF$-minor free, and $v \in S$.
	\end{claim}
	\begin{claimproof}
		Set $S = \{v,w\}$.
		It is easy to see that the graph $H - S$ can be created by taking two disjoint copies of $F$, and deleting one vertex of each of them.
		Since $\FF$ is minor-minimal and each graph of $\FF$ is connected, it is clear that $H - S$ is $\FF$-minor free.
	\end{claimproof}

	Now, we show the third property.
	\begin{claim}
		There is a set $S \subseteq V(H)$ of size $2$ such that $H - S$ is $\FF$-minor free, and $N(v) \subseteq S$.
	\end{claim}
	\begin{claimproof}
		We set $S = N(v) = \{c_1,c_2\}$.
		In $H - S$ only the connected component of $w$ may not be $\FF$-minor free, as the other connected components are isomorphic to proper subgraphs of $F$.

		Towards a contradiction, assume that some graph $F' \in \FF$ is a minor of $H - S$, and let $\varphi$ be a minimal minor model of $F'$.
		Then, since clearly $H - (S \cup \{w\})$ is $\FF$-minor free, any minor model of $F'$ must use a vertex of $A_1$ and a vertex of $R_2$.
		Moreover, since $A_1, R_2$ are proper subgraphs of a graph of $\FF$, and $\FF$ is minor-minimal, clearly $A_1$ and $R_2$ are $\FF$-minor free.
		Therefore, we must even have that $\varphi$ uses a vertex of $V(A_1) \setminus \{w\}$ and of $V(R_2) \setminus \{w\}$.
		Since $F'$ is connected, and $w$ a cut vertex of $H[V(A_1) \cup V(R_2)]$, there is some $x \in V(F')$ such that $w \in \varphi(x)$.
		Observe that there must be some $x' \in V(F')$ such that $\varphi(x') \subseteq V(A_1)$, and some $x'' \in V(F')$ such that $\varphi(x'') \subseteq V(R_2)$.
		Otherwise $\varphi$ would not be a minimal minor-model.
		Then, since $w$ is a cut vertex of $H[V(A_1) \cup V(R_2)]$, $x$ is a cut vertex of $F'$.

		Hence, the forbidden minor $F'$ is not biconnected.
		However, the model $\varphi$ can use at most $|V(A_1)| + |V(R_2)| - 3$ vertices since $c_1,c_2$ are in $S$, and since both $A_1$ and $R_2$ contain vertex $w$.
		This is a contradiction, since $F$ was chosen to be the smallest non-biconnected forbidden minor, and $|V(F)| = |V(A_1)| + |V(R_2)| - 1$.
	\end{claimproof}
	The lemma follows by combining the claims above and the obvious fact that $v$ has exactly two neighbors in $H$.
\end{proof}

Now, we show that universe gadgets exist when all graphs in $\FF$ are biconnected.
This construction is inspired by the construction of variable gadgets used by Dekker and Jansen~\cite{dekkerKernelizationFeedbackVertex2024}.
We take the same general idea, but need to make some adaptations to account for the concrete specifications of a universe gadget.

	\begin{figure}[t]
		\centering
		\includegraphics{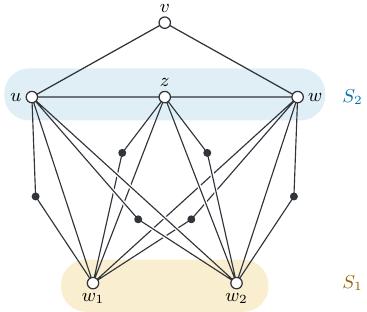}
		\caption{The universe gadget $H$ for $F = K_3$ in the biconnected case of \cref{thm:univ_gadget_biconnected}.
		 The upper blue band marks $S_2 = \{u,z,w\}$, which induces the path $u$--$z$--$w$; the edge $uw$ has been subdivided by $v$. The lower orange band marks $S_1 = \{w_1,w_2\}$.
		 For each of the six pairs $(w_a,s) \in S_1 \times S_2$, a copy of $K_3$ is glued on at $w_a$ and $s$. Its third vertex is a small solid dot.}
		\label{fig:univ_gadget_biconnected}
	\end{figure}

\begin{lemma}
	\label{thm:univ_gadget_biconnected}
	Let $\FF$ be a nonempty minor-minimal family of biconnected graphs on at least two vertices.
	Then, there exists a universe gadget $(H,v,\min_{F \in \FF} |V(F)|)$.
\end{lemma}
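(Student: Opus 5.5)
Let $F \in \FF$ be a graph with the fewest vertices and set $n = |V(F)| \geq 2$. Since $F$ is biconnected on at least two vertices, it has an edge $xy$. Following \cref{fig:univ_gadget_biconnected}, I build $H$ in three steps.
\begin{enumerate}
\item Take a copy $F^\circ$ of $F$ and subdivide one edge $uw$ of it by the new vertex $v$. Let $S_2 = V(F^\circ)$, so $|S_2| = n$ and $v \notin S_2$.
\item Add a set $S_1 = \{w_1,\dots,w_{n-1}\}$ of new vertices.
\item For every pair $(a,s) \in S_1 \times S_2$, add a fresh copy $F_{a,s}$ of $F$ and identify its $x$ with $a$ and its $y$ with $s$. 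The remaining $n-2$ vertices of $F_{a,s}$ are private to this copy.
\end{enumerate}
I claim that $(H,v,n)$ is a universe gadget. The fourth property holds immediately, since $N(v) = \{u,w\}$.

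\emph{Lower bound $\OPT_\FF(H) \geq n$.} Let $S$ be a solution. Let $k_1$ and $k_2$ be the numbers of vertices of $S_1$ and $S_2$, respectively, that are not in $S$.
\begin{itemize}
\item If $k_1,k_2 \geq 1$: every copy $F_{a,s}$ with both $a,s \notin S$ is an $F$-subgraph, so $S$ must hit its private part. These private parts are pairwise disjoint. Hence
\[
|S| \geq (n-1-k_1) + (n-k_2) + k_1k_2 = 2n-2 + (k_1-1)(k_2-1) \geq n .
\]
\item If $k_2 = 0$: then $S \supseteq S_2$, so $|S| \geq n$.
\item If $k_1 = 0$ and $k_2 \geq 1$: $S$ contains all $n-1$ vertices of $S_1$. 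But $H[S_2 \cup \{v\}]$ is a subdivision of $F$, so it has an $F$-minor. Thus $S$ needs at least one more vertex, and $|S| \geq n$.
\end{itemize}

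\emph{The two optimal solutions.} For the second property, take $S = S_1 \cup \{v\}$, which has size $n$. For the third, take $S = S_2 \supseteq \{u,w\} = N(v)$, which also has size $n$. To show each $H - S$ is $\FF$-minor free, I use \cref{proposition:biconnected_minors}: if some $F' \in \FF$ were a minor, then $F'$, being biconnected, would be a minor of a single block of $H - S$. So it suffices to check the blocks.
\begin{itemize}
\item In $H - (S_1 \cup \{v\})$, each remaining part of a copy $F_{a,s}$ is $F$ minus a vertex. It is attached at the single vertex $s$, which is a cut vertex, so its blocks have at most $n-1 < \min_{F' \in \FF}|V(F')|$ vertices. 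The other blocks lie inside $F^\circ - v$, which is $F$ minus the edge $uw$, a proper minor of $F$.
\item In $H - S_2$, each remaining part of a copy $F_{a,s}$ is $F$ minus a vertex, attached at $a$; again its blocks have at most $n-1$ vertices. Apart from these pieces, only isolated or trivial vertices remain, including $v$.
\end{itemize}
Any block on at most $n-1$ vertices is too small to contain a member of $\FF$ as a minor. A proper minor of $F$ cannot contain any $F' \in \FF$ as a minor, because $\FF$ is minor-minimal. Hence both sets are solutions. Together with the lower bound, this shows $\OPT_\FF(H) = n$.

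\emph{Main obstacle.} I expect the careful part to be this last block analysis. One must verify that the gluing at single vertices really creates cut vertices. One must also check that no block straddles several copies $F_{a,s}$ together with $F^\circ$. This holds because, after deletion, each copy touches the rest of the graph in at most one surviving vertex. The degenerate case $n = 2$, i.e.\ $F = K_2$, should also be checked separately; the same counting goes through there.
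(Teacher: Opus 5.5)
Your proposal is correct and follows essentially the same route as the paper. It uses the same gadget (a subdivided copy of a smallest $F$, a set $S_1$ of $|V(F)|-1$ vertices, and a copy of $F$ for each pair in $S_1 \times S_2$), the same witness solutions $S_1 \cup \{v\}$ and $S_2$, and the same block-based freeness check via \cref{proposition:biconnected_minors}. The only differences are cosmetic: you glue along an edge $xy$ of $F$ where the paper allows any two distinct vertices, and your $k_1,k_2$ counting for the lower bound is a more elaborate version of the paper's observation that a vertex of $S_1$ outside $S$ lies in $|V(F)|$ copies of $F$ that pairwise meet only in that vertex.
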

\begin{proof}
	Let $F \in \FF$ be one of the graphs of $\FF$ with the fewest vertices.
	Take an edge $uw \in E(F)$, and create $F'$ from $F$ by subdividing this edge.
	All vertices retain their identifier, and the new subdivision vertex is vertex $v$.
	Observe that $v$ has degree $2$, that $F$ is a minor of $F'$, and that $F' - v$ is $\FF$-minor free since $\FF$ is minor-minimal.

	Denote $S_2 = V(F') \setminus \{v\}$.
	Now, add an independent set $S_1$ on $|V(F)|-1$ vertices to the graph.
	For each $w_1 \in S_1$ and each $w_2 \in S_2$ we add a copy of $F$ to the graph, and identify the vertices $w_1$ and $w_2$ with two distinct vertices of $F$.
	Let $H$ be the resulting graph.
	We claim that $(H,v, |V(F)|)$ is a universe gadget.
	See \cref{fig:univ_gadget_biconnected} for a concrete example of the construction.

	First, we show that $\opt_{\FF}(H) \geq |V(F)|$.
	Consider some set $S$ such that $H - S$ is $\FF$-minor free.
	If there is some vertex $w \in S_1$ that $S$ does not contain, then $|S| \geq |V(F)|$, because $w$ is part of $|S_2| = |V(F)|$ copies of $F$ that intersect only in $w$.
	Therefore, let us assume that $S$ contains all vertices of $S_1$.
	Then, if $S$ contains no further vertex, we have $S = S_1$.
	However, $H - S$ is not $\FF$-minor free, because of the $F$-minor on the vertices $S_2 \cup \{v\}$.

	Next, we show that $H' = H - (S_1 \cup \{v\})$ is $\FF$-minor free.
	Since all graphs in $\FF$ are biconnected, by \cref{proposition:biconnected_minors}, it suffices to show that no graph of $\FF$ is a minor of a biconnected component of $H'$.
	First, consider a vertex $w \in V(H') \setminus S_2$.
	Then, any biconnected component containing $w$ has size strictly smaller than $|V(F)|$, because such a vertex $w$ was added as part of a copy of $F$ containing a vertex of $S_1$ and a vertex of $S_2$.
	Hence, no graph of $\FF$ is a minor of a biconnected component that contains a vertex of $V(H') \setminus S_2$.
	Now, consider the biconnected components of $H[S_2]$.
	Observe that $H[S_2]$ is isomorphic to the graph $F - uw$.
	As $\FF$ is minor-minimal, no graph of $\FF$ can be a minor of $H[S_2]$, and therefore also not a minor of a biconnected component containing only vertices of $S_2 $.

	Now, we show that $H - S_2$ is $\FF$-minor free.
	To see this, one can observe that each biconnected component of $H - S_2$ is strictly smaller than $F$.
	Hence, by our choice of $F$ as one of the smallest graphs of $\FF$ the graph $H - S_2$ is $\FF$-minor free.
	This shows the third property of universe gadgets since $N_H(v) \subseteq S_2$.
\end{proof}

By combining \cref{thm:univ_gadget_biconnected,thm:univ_gadget_non_biconnected}, we directly obtain \cref{thm:universe_gadgets_exist}.
The assumption that $\FF$ is minor-minimal can be dropped by simply taking the universe gadget for the family $\FF'$ obtained by making $\FF$ minor-minimal.

\begin{restatable}{lemma}{thmUniverseGadgetExists}
	\label{thm:universe_gadgets_exist}
	Let $\FF$ be a nonempty family of connected graphs on at least two vertices each.
	Then, there is a universe gadget $(H,v,t)$ for some integer $t \geq 1$.
\end{restatable}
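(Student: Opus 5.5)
The plan is to reduce to the minor-minimal case and then split on whether every graph is biconnected, invoking \cref{thm:univ_gadget_biconnected} or \cref{thm:univ_gadget_non_biconnected} accordingly.

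\textbf{Reduction to a minor-minimal family.} Given $\FF$, let $\FF'\subseteq\FF$ consist of the minor-minimal elements of $\FF$, meaning those $F\in\FF$ such that no other graph of $\FF$ is a proper minor of $F$. Since $\FF$ is finite and the minor relation is a partial order on isomorphism classes, $\FF'$ is nonempty. It is also minor-minimal in the sense of the preliminaries.

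Moreover, a graph is $\FF$-minor free if and only if it is $\FF'$-minor free. Indeed, every $F\in\FF$ contains some $F'\in\FF'$ as a minor, and the minor relation is transitive. Consequently, the solution sets $S$ (with $H-S$ being $\FF$-minor free) coincide for $\FF$ and $\FF'$, and hence $\opt_\FF=\opt_{\FF'}$ on every graph. All four conditions of a universe gadget refer only to these notions, so a universe gadget for $\FF'$ is also one for $\FF$. Finally, the graphs of $\FF'$ are still connected and have at least two vertices each.

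\textbf{Case analysis on $\FF'$.}
\begin{itemize}
\item If some graph of $\FF'$ is not biconnected, \cref{thm:univ_gadget_non_biconnected} yields a universe gadget $(H,v,2)$.
\item Otherwise every graph of $\FF'$ is biconnected on at least two vertices. Then \cref{thm:univ_gadget_biconnected} yields a universe gadget $(H,v,t)$ with $t=\min_{F\in\FF'}|V(F)|\ge 2$.
\end{itemize}
In both cases $t\ge1$, which proves the lemma.

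\textbf{Main obstacle.} The only point needing care is the equivalence of minor-freeness between $\FF$ and $\FF'$, and hence the equality of the optimum values. This follows immediately from the transitivity of minors, so there is no real difficulty.
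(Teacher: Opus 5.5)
Your proposal is correct and matches the paper's argument. Like the paper, you pass to the minor-minimal subfamily $\FF'$, which has the same minor-free graphs and hence the same solutions and $\opt$ values, and then invoke \cref{thm:univ_gadget_non_biconnected} or \cref{thm:univ_gadget_biconnected} depending on whether some graph of $\FF'$ is not biconnected. One small correction: the lemma does not assume $\FF$ is finite, but you do not need finiteness, since a proper minor has strictly fewer vertices plus edges, so every $F\in\FF$ has some minor-minimal element of $\FF$ as a minor, and your reduction goes through unchanged.
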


For the kernelization lower bound we give in this section, we reduce from $b$-\problemname{Hitting Set} parameterized by the universe size, where $b \geq 1$ is a constant.
\defparproblem{$b$-\problemname{Hitting Set}/$n$}{
	Nonnegative integers $n, s$, and $m$ sets $S_1,\dots,S_m \in \binom{[n]}{b}$.}{
	$n$}{
	Is there a set $S \subseteq [n]$ of size at most $s$
	such that, for every $i \in [m]$, we have $S \cap S_i \neq \emptyset$?}

The problem is useful due to the following compression lower bound.

\begin{lemma}[{Follows from {\cite[Theorem 2]{DellM2014}}}]
	\label{thm:d_hitting_set_lower_bound}
	Let $b\geq 2$.
	Then $b$-\problemname{Hitting Set} parameterized by $n$
	has no compression of size $O(n^{b-\varepsilon})$
	for any $\varepsilon>0$, unless $\np{} \subseteq \conppoly$.
\end{lemma}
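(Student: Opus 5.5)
The lemma is a direct translation of the Dell–van Melkebeek lower bound, so the plan is to match definitions rather than build anything new. Theorem~2 of~\cite{DellM2014} says the following. For every $d \geq 2$ and every $\varepsilon > 0$, \textsc{$d$-Vertex Cover} has no oracle communication protocol of cost $O(n^{d-\varepsilon})$ unless $\np \subseteq \conppoly$. Here \textsc{$d$-Vertex Cover} asks, given a $d$-uniform hypergraph on $n$ vertices and an integer $k$, whether some set of at most $k$ vertices meets every hyperedge. An oracle communication protocol is one in which a polynomial-time first player talks to an all-powerful oracle, and the cost counts the bits the first player sends.

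\textbf{Step 1: identify the two problems.} I would observe that $b$-\problemname{Hitting Set}/$n$ is exactly \textsc{$b$-Vertex Cover}. Take the universe $[n]$ as vertex set, the sets $S_1,\dots,S_m \in \binom{[n]}{b}$ as hyperedges, and $s$ as the budget $k$. The parameter $n$ is then the number of vertices, which is the measure in~\cite{DellM2014}. Two small points need checking here:
\begin{itemize}
\item The hypergraphs in~\cite{DellM2014} are $b$-uniform, matching $S_i \in \binom{[n]}{b}$.
\item Repeated sets in our input are harmless: deduplicating them changes nothing, and in any case our problem contains the hypergraph instances.
\end{itemize}

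\textbf{Step 2: turn a compression into a protocol.} Suppose, for contradiction, that there is a compression of $b$-\problemname{Hitting Set}/$n$ into some language $Q$ of size $O(n^{b-\varepsilon})$. The protocol is then:
\begin{enumerate}
\item On input $(G,k)$, the first player rewrites it as a hitting set instance in polynomial time.
\item It runs the compression, again in polynomial time, and sends the resulting string $I'$ to the oracle.
\item The oracle answers the single bit ``$I' \in Q$?''. No bound on the complexity of $Q$ is needed, since the oracle is unbounded.
\end{enumerate}
The bits sent by the first player total $|I'| = O(n^{b-\varepsilon})$. This is an oracle communication protocol of cost $O(n^{b-\varepsilon})$, contradicting~\cite[Theorem~2]{DellM2014} unless $\np \subseteq \conppoly$.

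\textbf{Expected obstacle.} The only delicate point is bookkeeping on instance size and encoding. Dell and van Melkebeek measure cost in bits as a function of the number of vertices $n$, whereas a compression's size bound $f$ is in terms of the parameter. Here the parameter is exactly $n$, so the two measures coincide. I expect no real obstacle beyond stating this correspondence carefully.
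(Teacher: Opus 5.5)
Your proposal is correct and takes the same approach as the paper. The paper gives no separate argument: it invokes \cite[Theorem 2]{DellM2014} for vertex cover in $b$-uniform hypergraphs, with the number of vertices $n$ as the cost measure, together with the standard fact that a compression into any language yields an oracle communication protocol of essentially the same cost. The one detail to state explicitly is that encoding the output over a binary alphabet, in a self-delimiting way, changes the cost by at most a constant factor, which the $O(n^{b-\varepsilon})$ bound absorbs.
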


Before giving the proof of the main result of this section, we give an intermediate general compression lower bound that we later require for the case when minimal $\FF$-blocking sets have size one.
Hols et al.~\cite{hols2022elimination} also treat this case separately in their kernelization lower bound proof and handle it by repeated application of a hypothetical kernelization algorithm.
However, this iteration does not directly extend to compressions, whose outputs need not be instances of the original problem.
We instead observe that a reduction from $2$-\textsc{Hitting Set} due to Dell and van Melkebeek~\cite[Lemma 7]{DellM2014}, combined with \cref{thm:d_hitting_set_lower_bound}, yields the desired compression lower bound.

\begin{lemma}
	\label{thm:blocking_sets_1_case}
	Let $\FF$ be a nonempty minor-minimal family of connected graphs on at least two vertices each, and $\mathcal{G}$ a nonempty graph class.
	Then, there is no compression for \fDeletionDistToG{} of size $O(|M|^{1 - \varepsilon})$, for any $\varepsilon > 0$, unless $\np \subseteq \conppoly$.
\end{lemma}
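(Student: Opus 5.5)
The plan is to reduce from $2$-\problemname{Hitting Set}$/n$, i.e.\ \textsc{Vertex Cover} parameterized by the number of vertices, in such a way that the modulator has size $O(n^2)$. A compression of size $O(|M|^{1-\varepsilon})$ for \fDeletionDistToG{} would then give a compression of size $O(n^{2-2\varepsilon})$ for $2$-\problemname{Hitting Set}$/n$, which \cref{thm:d_hitting_set_lower_bound} (with $b=2$) rules out unless $\np \subseteq \conppoly$. This is why the earlier iterated-kernel trick of Hols et al.\ is not needed: the quadratic blow-up in the parameter is cheap when the target exponent is only $1-\varepsilon$.

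\textbf{The trivial modulator.} Fix once and for all some graph $G_0 \in \GG$, which exists since $\GG$ is nonempty. Its value $\opt_\FF(G_0)$ is a constant and can be hard-coded. Given an instance $(n,s,S_1,\dots,S_m)$ with $m \le \binom{n}{2}$, I would:
\begin{enumerate}
\item build a graph $G'$ with $|V(G')| = O(n+m) = O(n^2)$ and an integer $k'$ such that $\opt_\FF(G') \le k'$ if and only if the hitting set instance is a yes-instance;
\item output the disjoint union $G' + G_0$, the modulator $M = V(G')$, and the budget $k' + \opt_\FF(G_0)$.
\end{enumerate}
Then $(G'+G_0) - M = G_0 \in \GG$. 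Since all graphs of $\FF$ are connected, optimal solutions decompose over connected components, so $\opt_\FF(G'+G_0) = \opt_\FF(G') + \opt_\FF(G_0)$ and the two instances are equivalent. Finally $|M| = O(n^2)$, as required.

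\textbf{Building $G'$.} The only real work is a linear-size reduction from \textsc{Vertex Cover} to \fDeletion{} for an arbitrary fixed connected $F \in \FF$ on at least two vertices. This is exactly what the paper attributes to Dell and van Melkebeek~\cite[Lemma 7]{DellM2014}, so I would cite it there if it gives a gadget construction with $O(1)$ vertices per element and per set. If I had to build it myself, I would use the following local gadgets:
\begin{itemize}
\item for each element $i$, a vertex $x_i$;
\item for each set $\{i,j\}$, a constant-size gadget containing $x_i$ and $x_j$ that holds an $F$-model exactly when both $x_i$ and $x_j$ survive;
\item all other obstructions forced, with a fixed constant cost per gadget, as in the universe gadgets of \cref{thm:univ_gadget_biconnected,thm:univ_gadget_non_biconnected}.
\end{itemize}
The budget is then $k' = s + m\cdot c$ for the appropriate constant $c$.

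\textbf{Main obstacle.} The hard part is guaranteeing that no $\FF$-minor arises that spans several set gadgets through shared element vertices. Any such spurious minor would add obstructions that do not correspond to sets of the instance and break the equivalence. I would first try to rely directly on \cite[Lemma 7]{DellM2014} for this. Failing that, I would use minor-minimality of $\FF$ and connectivity to argue that, after deleting the forced vertices, each component is a proper minor of $F$ or otherwise small, mirroring the component-size arguments in the universe-gadget proofs above.
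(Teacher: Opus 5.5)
Your proposal follows the paper's proof. The paper also invokes the Dell--van Melkebeek reduction~\cite[Lemma 7]{DellM2014} to map $2$-\problemname{Hitting Set} to an equivalent \fDeletion{} instance on $O(n^2)$ vertices, then takes the disjoint union with a fixed graph of $\GG$, uses the constructed graph as the modulator, and adds $\opt_\FF$ of the fixed graph to the budget; the paper simply states the bound in terms of $|V(G)|$ and the transfer to $|M|$ as two separate steps. Your bound $m \le \binom{n}{2}$ holds only after discarding duplicate sets, which the paper does explicitly, and your self-built fallback gadget is not needed given the citation.
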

\begin{proof}
	Recall that the problem $2$-\textsc{Hitting Set} parameterized by the universe size $n$ does not have a compression of size $O(n^{2 - \varepsilon})$, for any $\varepsilon > 0$, unless $\np \subseteq \conppoly$ (\cref{thm:d_hitting_set_lower_bound}).
Take an input instance of $2$-\textsc{Hitting Set} and discard duplicate sets if there are any.
	The reduction of Dell and van Melkebeek~\cite[Lemma 7 and its proof]{DellM2014}, which refines the reduction in the proof of Lewis and Yannakakis~\cite[Theorem 4]{lewisNodeDeletionProblemHereditary1980}, maps this instance to an equivalent instance of \fDeletion{}.
	Since their construction uses a constant number of copies of the input graph and constant-size gadgets per vertex and edge, the output graph has $O(n^2)$ vertices.
	Therefore, if \fDeletion{} had a compression of size $O(|V(G)|^{1 - \varepsilon})$, for some small $\varepsilon > 0$, then  $2$-\textsc{Hitting Set} would have a compression of size $O(n^{2 - 2\varepsilon})$.
	We conclude that \fDeletion{} does not have a compression of size $O(|V(G)|^{1 - \varepsilon})$, for any $\varepsilon > 0$, unless $\np \subseteq \conppoly$.

	Now, observe that a compression for \fDeletionDistToG{} of size $O(|M|^{1 - \varepsilon})$ for some $\varepsilon > 0$ would, however, give rise to such a compression.
	Indeed, take any instance $(G,k)$ of \fDeletion{}, any $H \in \GG$, let $G^\star$ be the disjoint union of $H$ and $G$, and set $k' = k + \opt{(H)}$.
	Moreover, set $M = V(G)$.
	Then, we have $\opt(G^\star) = \opt(G) + \opt(H)$ since the family $\FF$ only contains connected graphs.
	Therefore, $(G^\star,k',M)$ is an instance of \fDeletionDistToG{} equivalent to the instance $(G,k)$ of \fDeletion{}.
	Hence, there can also not be a compression of size $O(|M|^{1 - \varepsilon})$ for the problem \fDeletionDistToG{}, unless $\np \subseteq \conppoly$.
	This disjoint-union reduction also appears in the work of
	Hols et al.~\cite{hols2022elimination} for \textsc{Vertex Cover}.	
\end{proof}

We also require the following lemma about minors.
\begin{lemma}
	\label{thm:kill_minors_by_leaf_blocks}
	Let $H$ be a connected graph, $G$ be a graph, $c \in V(G)$ be a cut vertex, and $C$ the disjoint union of some connected components of $G - c$ such that no leaf block of $H$ is a minor of $G[V(C) \cup \{c\}]$.
	If $H$ is a minor of $G$, then $H$ is a minor of $G - C$. 
\end{lemma}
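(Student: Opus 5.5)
We take a minimal minor-model $\varphi$ of $H$ in $G$ and modify it so that it avoids $V(C)$. Write $G_C = G[V(C) \cup \{c\}]$. We distinguish cases according to how $\varphi(H)$ meets $V(C)$.

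If $\varphi(H) \cap V(C) = \emptyset$, then $\varphi$ is already a model in $G - C$ and we are done. So assume $\varphi(H)$ contains some vertex of $C$. The first step is to show that $\varphi(H)$ must also contain a vertex outside $V(C) \cup \{c\}$. Otherwise $\varphi(H) \subseteq V(G_C)$, so $H$ would be a minor of $G_C$. But every leaf block of $H$ is a subgraph, hence a minor, of $H$, so some leaf block of $H$ would be a minor of $G_C$, contradicting the assumption. (If $H$ is biconnected, $H$ is itself its unique leaf block, so $H$ is not a minor of $G_C$ and the same argument applies.)

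Hence $\varphi(H)$ meets both $V(C)$ and $V(G) \setminus (V(C) \cup \{c\})$. Since $H$ is connected, $G[\varphi(H)]$ is connected. Every path from $C$ to the rest of the graph passes through $c$, so $c \in \varphi(x_0)$ for some $x_0 \in V(H)$. Call a vertex $x$ of $H$ \emph{inside} if $\varphi(x) \subseteq V(C)$. Any branch set intersecting $V(C)$ other than $\varphi(x_0)$ is disjoint from $c$ and connected, so it lies entirely in $V(C)$; thus inside vertices are exactly those $x \neq x_0$ with $\varphi(x) \cap V(C) \neq \emptyset$. Let $I$ be the set of inside vertices.

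\begin{itemize}
\item If $I = \emptyset$, only $\varphi(x_0)$ meets $V(C)$. Restrict it to the component of $G[\varphi(x_0) \setminus V(C)]$ containing $c$. This set is connected, because $c$ separates $C$ from the rest of the graph. It also keeps all needed adjacencies: edges from $\varphi(x_0)$ to other branch sets cannot start in $V(C)$, so they start at vertices outside $C$, and these are connected to $c$ within $\varphi(x_0) \setminus V(C)$. Hence the restricted model lives in $G - C$.
\item If $I \neq \emptyset$, the branch sets of $I$ are adjacent only to each other and to $\varphi(x_0)$. Therefore $x_0$ separates $I$ from $V(H) \setminus (I \cup \{x_0\})$ in $H$, and the latter set is nonempty by the first step. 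So $x_0$ is a cut vertex of $H$, and $H[I \cup \{x_0\}]$ is a union of branches of $H$ at $x_0$. Every such branch contains a leaf block of $H$, namely a leaf of the block-cut tree different from the block on the $x_0$ side. This leaf block is a subgraph of $H[I \cup \{x_0\}]$, which is modelled inside $G_C$ using $\varphi$ restricted to $I$ together with $(\varphi(x_0) \cap V(C)) \cup \{c\}$. That contradicts the hypothesis.
\end{itemize}

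One point needs care in the second case: the set $(\varphi(x_0) \cap V(C)) \cup \{c\}$ must be connected in $G_C$. It is, because $G[\varphi(x_0)]$ is connected and any path within it from a vertex of $C$ to $c$ stays in $V(C) \cup \{c\}$ until it reaches $c$. The other main subtlety is the block-tree fact that every branch at a cut vertex contains a leaf block of $H$; this is standard, taking the leaf of the block-cut tree farthest from $x_0$ within that branch.
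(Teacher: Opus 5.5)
Your argument follows essentially the paper's route. You take a model $\varphi$, locate the branch set $\varphi(x_0)\ni c$, and split on whether some branch set lies entirely in $V(C)$. In the nonempty case you use that $x_0$ is a cut vertex of $H$ and take the block farthest from $x_0$ in the block-cut tree, which gives a leaf block modelled in $G_C$. In your $I=\emptyset$ case you build the model in $G-C$ directly, whereas the paper derives a contradiction with minimality; the two are equivalent.

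One inference does not hold as written: ``the latter set is nonempty by the first step.'' The first step only shows that $\varphi(H)$ contains a vertex outside $V(C)\cup\{c\}$, and that vertex may belong to $\varphi(x_0)$ itself. So a priori $V(H)=I\cup\{x_0\}$ is possible, and then $x_0$ need not be a cut vertex of $H$. The paper handles exactly this case separately, and it is where minimality of $\varphi$ enters (you assume minimality but never use it). If $V(H)=I\cup\{x_0\}$, replacing $\varphi(x_0)$ by $(\varphi(x_0)\cap V(C))\cup\{c\}$ still gives a model. Minimality then forces $\varphi(H)\subseteq V(C)\cup\{c\}$, contradicting your first step. Even without minimality the case is immediate from your own construction: that restricted model shows $H$ itself is a minor of $G_C$, hence so is every leaf block of $H$. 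With this case added, the proof is complete.

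A minor point: the phrase ``different from the block on the $x_0$ side'' is inaccurate, since the block adjacent to $x_0$ may itself be the leaf. Your later ``farthest from $x_0$ within that branch'' formulation is the correct one.
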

\begin{proof}
	Assume the contrary, and let $\varphi$ be a minimal minor model of $H$ in $G$.
	By assumption, $H$ is not a minor of $G - C$.
	Therefore, $\varphi$ contains at least one vertex of $C$.

	First, observe that we must have $c \in \varphi(H)$. 
	Otherwise $H$ is a minor of $C$, and hence some leaf block of $H$ is a minor of $G[V(C) \cup \{c\}]$.
	
	Then, there must be some $v \in V(H)$ such that $\varphi(v) \subseteq V(C)$.
	Otherwise, since $c$ is a cut vertex of $G$, there is only a single vertex $w \in V(H)$ with $\varphi(w) \cap V(C) \neq \emptyset$, and $c \in \varphi(w)$.
	However, this contradicts that $\varphi$ is minimal: if we delete all vertices of $C$ from $\varphi(w)$ to obtain $\varphi'$, then $\varphi'$ is still a minor model of $H$.
	This follows from the fact that no vertex of $C$ has an edge to a vertex of $G - (V(C) \cup \{c\})$, and that $c$ is in $\varphi(w)$.

	Let $W$ be the set of vertices of $V(H)$ such that for each $w \in W$ we have $\varphi(w) \subseteq V(C)$.
	As argued before, $W$ is nonempty.
	Moreover, let $c_H$ be the vertex of $V(H)$ such that $c \in \varphi(c_H)$.
	Since $c$ is a cut vertex, for any $w \in W$ there is not an edge from $\varphi(w)$ to $\varphi(u)$ if $u \notin W \cup \{c_H\}$.
	
	If $V(H) = W \cup \{c_H\}$, then $\varphi(c_H) \cap (V(G) \setminus V(C)) = \{c\}$ because $\varphi$ is a minimal minor model.
	Therefore, $H$ is a minor of $G[V(C) \cup \{c\}]$, contradicting that no leaf block of $H$ is a minor of $G[V(C) \cup \{c\}]$.

	Otherwise, let $\bar W$ be the set of vertices $w$ such that $\varphi(w) \subseteq V(G) \setminus (V(C) \cup \{c\})$, and observe that $\bar W$ is nonempty by the previous paragraph.
	Because $c$ is a cut vertex of $G$, and $C$ a disjoint union of connected components of $G - c$, there is not an edge from a vertex of $\bigcup_{w' \in W} \varphi(w')$ to a vertex of $\varphi(w)$ for any $w \in \bar W$.
	Hence, in $H$ there is not an edge from a vertex of $W$ to a vertex in $\bar W$, and $c_H$ is a cut vertex of $H$.
	Clearly, then $H_W = H[W \cup \{c_H\}]$ is a minor of $G[V(C) \cup \{c\}]$.
	If we show that this means some leaf block of $H$ is a minor of $G[V(C) \cup \{c\}]$ we are done, since this is not the case by assumption.

	So, assume that no leaf block of $H$ is a minor of $G[V(C) \cup \{c\}]$.
	Since $c_H$ is a cut vertex of $H$, there is not a biconnected component of $H$ containing vertices of $W$ and of $\bar W$.
	As $W$ is nonempty, some biconnected components of $H$ contain vertices of $W$.
	Consider the block-cut tree of $H$: the tree with one vertex for each block and cut vertex of $H$, with an edge between a block vertex and a cut vertex if the cut vertex appears in the block.
	Root this tree at vertex $c_H$, and choose some block $L$ that meets $W$ at maximum distance from $c_H$.
	Then, $L$ is a leaf block contained in $H_W$.
	Hence,  $L$ is a minor of $G[V(C) \cup \{c\}]$, contradicting our assumption.
\end{proof}

Now, we are ready to provide the proof of \cref{thm:lb_from_extended_blocking_sets}.
We recall that our proof adapts the lower-bound strategies used by Hols et al.~\cite{hols2022elimination} for \textsc{Vertex Cover}, the proof for $d$-\textsc{Component Order Connectivity}~\cite{greilhuber2024component}, and the proof of Dekker and Jansen~\cite{dekkerKernelizationFeedbackVertex2024}.
In fact, one of the reasons for proving the theorem is to hopefully avoid having to redo essentially the same reduction for each new result in the future.
The main difference is that we use the universe gadgets and the notion of good minimal extended $\FF$-blocking sets here.

\begin{figure}[t]
    \centering
    \begin{subfigure}[t]{0.46\linewidth}
        \centering
        \includegraphics[page=1]{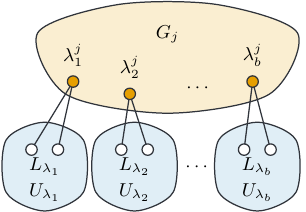}
        \subcaption{The case $|Y_j| = 0$: each $\lambda_\ell^j \in X_j$ is joined to both vertices of $L_{\lambda_\ell}$.}
    \end{subfigure}
    \hfill
    \begin{subfigure}[t]{0.46\linewidth}
        \centering
        \includegraphics[page=2]{figures/reduction-set-gadgets.pdf}
        \subcaption{The case $|Y_j| = 2$: $\lambda_1^j,\dots,\lambda_{b-1}^j$ attach to the gadgets $U_{\lambda_1},\dots,U_{\lambda_{b-1}}$, while $y_1,y_2$ attach to $L_{\lambda_b}$ of $U_{\lambda_b}$.}
    \end{subfigure}
    \caption{How a set gadget $G_j$ is connected to the selection gadgets $U_{\lambda_1},\dots,U_{\lambda_b}$ of $S_j = \{\lambda_1,\dots,\lambda_b\}$, in the two cases of \cref{thm:lb_from_extended_blocking_sets}. Vertices of $X_j$ are orange and vertices of $Y_j$ are blue; a dotted curve represents a possible path within $G_j$.}
    \label{fig:overview_reduction_set_gadgets}
\end{figure}

\begin{figure}[t]
    \centering
    \begin{subfigure}[t]{0.46\linewidth}
        \centering
        \includegraphics[page=1]{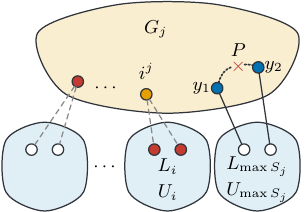}
        \subcaption{$i \neq \max S_j$: only $i^j$ survives in $X_j$.}
    \end{subfigure}
    \hfill
    \begin{subfigure}[t]{0.46\linewidth}
        \centering
        \includegraphics[page=2]{figures/reduction-safety-cases.pdf}
        \subcaption{$i = \max S_j$: all of $X_j$ is deleted.}
    \end{subfigure}
    \caption{The solution $W'$ near $U_i$ and $G_j$ for a chosen $i \in W \cap S_j$, with $|Y| = 2$.
    Red vertices are deleted; dashed edges are incident to them. Orange and blue mark vertices of $X_j$ and $Y_j$ that remain in $G_j - W'$.
    In both panels, $L_i$ is deleted.
    In~(a), the cross indicates that no $y_1$--$y_2$ path survives within $G_j$, either endpoint may also be deleted.
    In~(b), $y_1,y_2$ remain connected within $G_j$.
    Some gadgets and solution vertices are omitted.}
    \label{fig:overview_safety_cases}
\end{figure}

\thmLBFromExtendedBlockingSets*{}
\begin{proof}
	For convenience, set $b = |X| + |Y|/2$.
	Without loss of generality the family $\mathcal{F}$ is minor-minimal.
	If $b \leq 1$, the statement follows from \cref{thm:blocking_sets_1_case}, so we assume $b \geq 2$, and hence $|X| \geq 1$ in the rest of the proof.
	We give a linear parameter transformation from $b$-\textsc{Hitting Set}.

	Let $(n,s,\SSS = \{S_1,\dots,S_m\})$ be the input instance of $b$-\textsc{Hitting Set}.
	If $m = 0$ the instance is a trivial yes-instance, and we output an arbitrary constant-sized yes-instance.
	Otherwise, we begin by adding $n$ copies of the universe gadget $(H,v,t)$, which exists by \cref{thm:universe_gadgets_exist}, to the graph, and denote these as $(H_1,v_1,t),\dots,(H_n,v_n,t)$.
	For each $i \in \range[1]{n}$, let us denote the set $N_{H_i}(v_i)$ as $L_i = \{z_i,z_i'\}$.
	Then, delete the vertex $v_i$ from $H_i$.
	We call $U_i = H_i - \{v_i\}$ the \emph{selection gadget} of element $i$.

	For each $S_j = \{\lambda_1,\dots,\lambda_b\} \in \SSS$, where $\lambda_\ell < \lambda_{\ell+1}$ for all $\ell \in [b-1]$, we add a copy of the graph $G$, and denote it as $G_j$.
	We call the graph $G_j$ the \emph{set gadget} of $S_j$.
	Let $S_j^\star$ be the set $S_j$ if $Y = \emptyset$, and otherwise let it be the set $S_j \setminus \{\lambda_b\}$, that is, all elements of $S_j$ apart from the largest element.
	We denote the good minimal extended $\FF$-blocking set of $G_j$ by $(X_j,Y_j)$, and the vertices of $X_j$ as $\lambda_1^j,\dots,\lambda_{|X|}^j$.
	When $i=\lambda_\ell$, we also write $i^j$ for the vertex $\lambda_\ell^j$ of $X_j$.
	For each $\lambda_\ell \in S_j^\star$ we make the vertex $\lambda_\ell^j$ adjacent to both vertices of $L_{\lambda_\ell}$.
	Moreover, if $|Y_j| = 2$, let $Y_j = \{y_1,y_2\}$.
	We make $y_1$ adjacent to vertex $z_{\lambda_b}$ and $y_2$ adjacent to vertex $z_{\lambda_b}'$.

	This concludes the description of graph $G^\star$ of the output instance.
	See \cref{fig:overview_reduction_set_gadgets} for an illustration of how a set gadget $G_j$ is attached to the selection gadgets of the elements of $S_j$.

	Observe that, for each set $S_j$, and each $i \in S_j^\star$ the graph $G^\star[V(U_i) \cup \{i^j\}]$ is isomorphic to graph $H$.
	Moreover, if $Y_j = \{y_1,y_2\}$, $i = \max S_j$, and $P$ is a path in $G_j$ from $y_1$ to $y_2$, the graph $G^\star[V(U_i) \cup V(P)]$ has $H$ as a minor; indeed contracting $P$ to a single vertex results in a graph isomorphic to $H$.

	We set $k = m \cdot \OPT_\FF(G) + n \cdot (t-1) + s$, and $M = \bigcup_{i \in \range[1]{n}} V(U_i)$.
	Clearly $G^\star - M$ is the disjoint union of $m$ copies of $G$, that is, the graph $m \cdot G$.
	By assumption, we have that $m \cdot G$ is in $\GG$.
	Therefore, $M$ is a modulator to $\GG$.
	Moreover, as the size of each selection gadget is constant we have $|M| \in O(n)$.
	Therefore, to conclude the parameter transformation, it remains to show that the instances $(n,s,\SSS)$ of $b$-\textsc{Hitting Set} and $(G^\star,k,M)$ of \fDeletionDistToG{} are equivalent.

	\begin{claim}
		If $(n,s,\SSS)$  is a yes-instance of $b$-\textsc{Hitting Set}, then $(G^\star,k,M)$ is a yes-instance of \fDeletionDistToG{}.
	\end{claim}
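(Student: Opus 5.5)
We prove the claim by writing down an explicit solution. Let $W \subseteq [n]$ be a hitting set of size at most $s$. For each $S_j \in \SSS$ fix one element $i(j) \in W \cap S_j$. We build $W' \subseteq V(G^\star)$ gadget by gadget and then check the size bound and $\FF$-minor freeness.

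\textbf{Selection gadgets.} For $i \in W$, take the optimal solution of $H_i$ from property~3 of universe gadgets, which contains $N(v_i) = L_i$ and does not contain $v_i$ (as $v_i$ has only these two neighbours and an optimal solution avoiding $v_i$ exists, or we simply intersect with $V(U_i)$). Its intersection with $V(U_i)$ has size at most $t$. For $i \notin W$, take the optimal solution $S$ from property~2, which contains $v_i$, and use $S \setminus \{v_i\}$, which has size $t-1$. In both cases $U_i - W'$ is a subgraph of $H_i - S$ for the chosen $S$, and hence is $\FF$-minor free.

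\textbf{Set gadgets.} In $G_j$ we put an optimal solution $S^j$ of $G_j$, chosen as follows.
\begin{itemize}
\item If $i(j) \in S_j^\star$ and $Y$ is nonempty, use the set $S_{v}$ from \cref{def:good_minimal_extended_blocking_sets} with $v = i(j)^j$. It contains $X_j \setminus \{i(j)^j\}$, separates the surviving vertices of $Y_j$, and the components containing them are $\LL$-minor free.
\item If $i(j) \in S_j^\star$ and $Y = \emptyset$, use an optimal solution containing $X_j \setminus \{i(j)^j\}$, which exists by minimality of the blocking set.
\item If $i(j) = \max S_j \notin S_j^\star$ (so $|Y|=2$), use an optimal solution containing all of $X_j$. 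It exists because $(X,\emptyset)$ is not an extended blocking set, by minimality with respect to $Y$.
\end{itemize}
The total size is at most $n(t-1) + |W| + m\cdot\OPT_\FF(G) \le k$.

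\textbf{$\FF$-minor freeness: structure of $G^\star - W'$.} We analyse how the pieces connect.
\begin{itemize}
\item For $i \in W$, the set $L_i \subseteq W'$, and every edge between $U_i$ and a set gadget goes to $L_i$. So $U_i - W'$ is a union of whole components of $G^\star - W'$, and is $\FF$-minor free.
\item For $i \notin W$, every attached vertex $i^j$ with $i \in S_j^\star$ lies in $W'$, because $i \ne i(j)$.
\item Consequently, the only surviving edges between set gadgets and selection gadgets are edges from $y_1,y_2 \in Y_j \setminus W'$ to $z_{\max S_j}, z'_{\max S_j}$ with $\max S_j \notin W$. In that case $i(j) \neq \max S_j$, so the goodness set was used. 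Thus $y_1$ and $y_2$ lie in distinct components $C_1, C_2$ of $G_j - W'$, and each of these is $\LL$-minor free.
\end{itemize}
Every component of $G^\star - W'$ is therefore one of two kinds. It is either a component of some $G_j - W'$, which is $\FF$-minor free since $S^j$ is a solution. Or it is $U_i - W'$ for $i \notin W$, with pairwise disjoint pieces $C$ hung on $z_i$ or $z_i'$, each by a single edge. Here $z_i$ (resp.\ $z_i'$) is a cut vertex separating $C$ from everything else.

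\textbf{Main step: removing the hung pieces.} For such a piece $C$ hung on $z$, the biconnected components of $G^\star[V(C)\cup\{z\}]$ are those of $C$ together with a $K_2$. Every graph in $\LL$ has at least three vertices and is biconnected, and $C$ is $\LL$-minor free. So by \cref{proposition:biconnected_minors} no leaf block of any $F \in \FF$ is a minor of $G^\star[V(C)\cup\{z\}]$. Now suppose some $F \in \FF$ is a minor of this component. Applying \cref{thm:kill_minors_by_leaf_blocks} once per hung piece, we strip the pieces one at a time while $F$ remains a minor. Finally $F$ would be a minor of $U_i - W'$, contradicting that $U_i - W'$ is $\FF$-minor free.

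I expect this last step to be the delicate part. One must check that $z_i$ really is a cut vertex separating each piece: this uses that $Y_j$-vertices are separated within $G_j - W'$, and that distinct gadgets $G_j$ are disjoint. One must also confirm that the leaf-block condition survives adding the connecting edge.
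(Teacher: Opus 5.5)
Your proof is correct and follows the paper's argument. It builds the same solution $W'$: property-2 or property-3 solutions on the selection gadgets, and $S_v$ or an optimal solution containing $X_j$ on the set gadgets. It then uses \cref{thm:kill_minors_by_leaf_blocks} in the same way, to discard the $\LL$-minor-free components of $G_j - W'$ that still hang off a selection gadget.

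The only difference is packaging. You cut at the selection-gadget endpoint $z$ and strip whole pieces, absorbing the edge $yz$ as a harmless $K_2$ block via \cref{proposition:biconnected_minors}. The paper instead cuts at $y_1$ inside a minimal minor-model $\varphi$, and handles the case $\varphi(F) \cap V(C) = \{y_1\}$ separately with a degree-one argument.
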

	\begin{claimproof}
		Let $W \subseteq \range[1]{n}$ be a set of size at most $s$ that intersects all sets of $\SSS$.
		Moreover, let $\mathcal{L}$ denote the set of leaf blocks of the graphs of $\FF$; that is, $\LL = \bigcup_{F \in \FF} \leafblock(F)$.
		We build a solution $W'$ for the output instance.

		Recall that, for universe gadget $(H,v,t)$, any vertex set that makes $H$ free of an $\FF$-minor must have size at least $t$; moreover there is some optimal solution $S_1$ selecting $v$, and some optimal solution $S_2$ selecting $N_H(v)$ (which therefore cannot contain $v$).

		For each $i \in W$, in the selection gadget $U_i$ we select the vertices according to the solution $S_2$.
		Therefore, we select $t$ vertices of $U_i$, and in particular we select the vertices $L_i$.
		For each $i \in [n] \setminus W$, in the graph $U_i$, we select vertices according to the solution $S_1$.
		Therefore, we select only $t-1$ vertices of $U_i$ since $U_i$ was created from a copy of $(H,v,t)$ by deleting $v$.

		Call the selection so far $W^\star$. 
		Clearly $|W^\star| \leq n \cdot (t-1) + s$.
		Now, we iterate over all sets $S_j \in \SSS$ and consider the set gadget $G_j$.
		Let $S_j = \{\lambda_1,\dots,\lambda_b\}$, where $\lambda_\ell < \lambda_{\ell+1}$ for all $\ell \in [b-1]$.
		There is some element $i \in W \cap S_j$.
		By our choice of $W^\star$, we know that $L_i \subseteq W^\star$.
		Now, let us consider two cases.

		See \cref{fig:overview_safety_cases} for an illustration of the solution $W'$ restricted to the selection gadget $U_i$ and the set gadget $G_j$, in each of the two cases below.

		First, consider that $|Y| = 0$, or that $i$ is not the largest element of $S_j$.
		Then, $i^j$ is an element of $X_j$, and all neighbors of $i^j$ outside $G_j$, the set $L_i$, are part of $W^\star$.
		Since $(X_j,Y_j)$ is a good minimal extended $\FF$-blocking set,
		there is some set $W_j \subseteq V(G_j)$ of size $\opt_{\FF}(G_j)$ such that $G_j - W_j$ is $\FF$-minor free, $W_j$ contains all vertices of $X_j \setminus \{i^j\}$, the vertices of $Y_j$ are disconnected (or deleted) in $G_j - W_j$, and each vertex of $Y_j$ that is not part of $W_j$ is in a connected component of $G_j - W_j$ that is $\mathcal{L}$-minor free.
		We add $W_j$ to our solution.

		Otherwise, we have $|Y| = 2$ and that $i$ is the largest element of $S_j$.
		Then, let $Y_j = \{y_1,y_2\}$, and observe that all neighbors of $Y_j$ outside $G_j$, the set $L_i$, are part of $W^\star$.
		Since $(X_j,Y_j)$ is a good minimal extended $\FF$-blocking set, there is some set $W_j$ of size $\opt_{\FF}(G_j)$ such that $G_j - W_j$ is $\FF$-minor free, and $W_j$ contains all vertices of $X_j$.
		We add $W_j$ to our solution.

		After doing this for all sets in $\SSS$, we obtain a set $W'$ of size  at most $k$ since, in addition to $W^\star$, we add exactly $\opt_{\FF}(G)$ vertices for each set of the input instance.
		It is clear from our selection that for each graph $G_j$ we have that $G_j - W'$ is $\FF$-minor free, and similarly that for each selection gadget $U_i$ we have that $U_i - W'$ is $\FF$-minor free.
		Therefore, if $G^\star - W'$ had some graph $F \in \FF$ as a minor, any minimal minor-model of $F$ must use vertices of a selection gadget and of a set gadget.

		It is clear that any edges from a selection gadget $U_i$ to a set gadget $G_j$ must be from $U_i$ to a vertex of $X_j \cup Y_j$.
		There cannot be any edges to vertices of $X_j$: each vertex of $X_j$ is either part of $W'$, or all of its neighbors in the selection gadgets are part of $W'$.

		Therefore, the only edges between selection gadgets and set gadgets that may exist in $G^\star - W'$ are from a selection gadget $U_i$ to vertices of a set $Y_j$.
		This can only occur if the smallest leaf block of a graph of $\FF$ contains at least three vertices: indeed, otherwise $(X,Y)$ would not be a \emph{good} minimal extended $\FF$-blocking set.
		Assume that there is some minimal minor-model $\varphi: V(F) \to 2^{V(G^\star - W')}$ of $F \in \FF$ that uses a vertex of $G_j$ and a vertex of $U_i$.
		Then, $\varphi(F)$ must contain a vertex of $Y_j = \{y_1,y_2\}$, and a vertex of $L_i$ as these are the only vertices of $G_j$ and $U_i$ that have edges between them.
		Without loss of generality, assume that $\varphi(F)$ contains vertex $y_1$.

		Let $C$ be the connected component of $G_j - W'$ that contains $y_1$.
		By the properties of the set $W'$, there is no path from $y_1$ to $y_2$ in the graph $G_j - W'$.
		Moreover, each vertex $i^j$ of $X_j$ is either part of $W'$, or all neighbors of $i^j$ outside $G_j$ are part of $W'$.
		Therefore, $y_1$ is a cut vertex of $G^\star[\varphi(F)]$, or $\varphi(F) \cap V(C) = \{y_1\}$.

		Consider the case that $\varphi(F) \cap V(C) \neq \{y_1\}$.
		Then, observe that $y_1$ is a cut vertex of $G^\star[\varphi(F)]$, and by the properties of good minimal extended $\FF$-blocking sets, the graph $C$ is $\FF$-minor free, and even $\LL$-minor free.
		Hence, we can apply \cref{thm:kill_minors_by_leaf_blocks}, which shows that $F$ is a minor of $G^\star[\varphi(F)] - (V(C) \setminus \{y_1\})$.
		However, this contradicts that $\varphi$ is a minimal minor model.

		So, finally consider that $\varphi(F) \cap V(C) = \{y_1\}$.
		Let $x \in V(F)$ such that $y_1 \in \varphi(x)$.
		If $\varphi(x) \neq \{y_1\}$, then deleting $y_1$ from $\varphi(x)$ would still result in a minor-model of $F$, contradicting the minimality of $\varphi$.
		Otherwise, $\varphi(x) = \{y_1\}$, but as $y_1$ has at most one neighbor in $\varphi(F)$, this would imply that $F$ has a vertex of degree one.
		This contradicts that $(X,Y)$ is a \emph{good} minimal extended $\FF$-blocking set, because one of the properties of such a set when $Y \neq \emptyset$ is that each leaf block of each graph of $\FF$ contains at least three vertices.
	\end{claimproof}

	To complete the proof we proceed to the backward direction of safety.
	\begin{claim}
		If $(G^\star,k,M)$ is a yes-instance of \fDeletionDistToG{}, then $(n,s,\SSS)$ is a yes-instance of $b$-\textsc{Hitting Set}.
	\end{claim}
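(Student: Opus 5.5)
The plan is a budget-counting argument. Every vertex of $G^\star$ lies in exactly one selection gadget $U_i$ or one set gadget $G_j$. This lets me charge the solution gadget by gadget: each gadget has a forced minimum, and any excess above that minimum pays for one element of a hitting set.

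Let $W'$ be a solution for $(G^\star,k,M)$ with $|W'| \leq k$. I would proceed in the following steps.

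\textbf{Step 1 (forced minimum per gadget).}
\begin{itemize}
\item For each $j$, the graph $G_j - W'$ is a subgraph of $G^\star - W'$ and hence $\FF$-minor free. So $|W' \cap V(G_j)| \geq \opt_{\FF}(G)$.
\item For each $i$, I claim $|W' \cap V(U_i)| \geq t-1$. Indeed, $U_i = H_i - v_i$. If fewer than $t-1$ vertices of $U_i$ were deleted, adding $v_i$ would give a solution of size less than $t$ for $H$, contradicting property~1 of the universe gadget.
\end{itemize}

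\textbf{Step 2 (define the hitting set).} Let $W_0 = \{ i \in [n] : |W' \cap V(U_i)| \geq t\}$. Call $j$ \emph{tight} if $|W' \cap V(G_j)| = \opt_{\FF}(G)$. Set $W = W_0 \cup \{\lambda_1(S_j) : j \text{ not tight}\}$, where $\lambda_1(S_j)$ is any fixed element of $S_j$. Summing the lower bounds from Step~1 gives
\[
k \geq |W'| \geq n(t-1) + |W_0| + m\cdot\opt_{\FF}(G) + \#\{j : j \text{ not tight}\}.
\]
Hence $|W| \leq s$, and non-tight sets are hit by construction.

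\textbf{Step 3 (tight sets are hit by $W_0$).} This is the core step. Suppose $j$ is tight and $S_j \cap W_0 = \emptyset$. Then every $i \in S_j$ has exactly $t-1$ deleted vertices in $U_i$.
\begin{itemize}
\item \emph{Every $X$-vertex is deleted.} Take $i \in S_j^\star$. Since $G^\star[V(U_i) \cup \{i^j\}] \cong H$ with $i^j$ playing the role of $v$, the vertex $i^j$ must be in $W'$. Otherwise $H$ would be made $\FF$-minor free by deleting only $t-1$ vertices, contradicting property~1. Because $|S_j^\star| = |X|$, this yields $X_j \subseteq W'$.
\item \emph{The $Y$-vertices are separated (case $|Y|=2$).} Let $i = \lambda_b$. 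Suppose $y_1, y_2 \notin W'$ and some $y_1$--$y_2$ path $P$ exists in $G_j - W'$. Then $G^\star[V(U_i) \cup V(P)] - W'$ contains, after contracting $P$ to a single vertex, the graph $H$ minus a set of $t-1$ vertices not containing $v$. That graph has an $\FF$-minor, so $G^\star - W'$ does too, a contradiction.
\end{itemize}
Consequently $W' \cap V(G_j)$ satisfies all four conditions of the definition of an extended $\FF$-blocking set:
\begin{itemize}
\item it has size $\opt_{\FF}(G_j)$;
\item it makes $G_j$ $\FF$-minor free;
\item it contains $X_j$;
\item it leaves no two vertices of $Y_j$ in the same component.
\end{itemize}
This contradicts $(X_j,Y_j)$ being an extended $\FF$-blocking set. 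So $S_j \cap W_0 \neq \emptyset$, and $W$ is a hitting set of size at most $s$.

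I expect the main obstacle to be the $Y$-case of Step~3. It requires checking that the contraction argument really produces $H$ with $v$ undeleted while keeping the deletions confined to $U_i$. This holds because $V(P) \subseteq V(G_j)$ avoids $W'$, and because $y_1, y_2$ attach to exactly $z_{\lambda_b}$ and $z'_{\lambda_b}$, the two neighbors of the former $v$. Once this is settled, the rest is routine accounting.
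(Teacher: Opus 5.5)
Your proposal is correct and follows the paper's proof essentially step for step. The paper's $W$ and $\SSS'$ are your $W_0$ and the non-tight sets, and it uses the same gadget-by-gadget counting, the same isomorphism $G^\star[V(U_i)\cup\{i^j\}]\cong H$ to force $X_j\subseteq W'$, and the same path-contraction argument to separate $Y_j$ before invoking the extended blocking property. Beyond the paper, you spell out two things it only asserts: why each $U_i$ needs at least $t-1$ deletions, and why the contraction of $P$ leaves the $t-1$ deletions confined to $U_i$.
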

	\begin{claimproof}
		Let $W' \subseteq V(G^\star)$ be a set of size at most $k$ such that $G^\star - W'$ is $\FF$-minor free.
		Let $\SSS'$ be those sets $S_j$ of $\SSS$ where $W'$ selects more than $\opt_{\FF}(G)$ vertices of $G_j$.
		Moreover, let $W$ be the set of elements $i \in \range[1]{n}$ where $W'$ selects at least $t$ vertices of $U_i$.

		We prove that $W$ is a set of size at most $s - |\SSS'|$ that intersects all sets of $\SSS \setminus \SSS'$.
		It is then clear that this means the input instance is a yes-instance, as one can add an arbitrary element of each set of $\SSS'$ to $W$ to obtain a solution of size at most $s$ of the input instance.

		Regarding the size bound, in each selection gadget $U_i$ solution $W'$ must select at least $t-1$ vertices, and in each set gadget $G_j$ solution $W'$ must select at least $\OPT_{\FF}(G)$ vertices.
		Moreover, by definition $W'$ contains more than $\OPT_{\FF}(G)$ vertices of $G_j$ for each $S_j \in \SSS'$.
		Hence, the size of $W$ is indeed bounded by $s - |\SSS'|$.

		Therefore, we now argue that $W$ hits $\SSS \setminus \SSS'$.
		Towards a contradiction, assume there is some $S_j \in \SSS \setminus \SSS'$ such that $W \cap S_j = \emptyset$.
		Let $S_j = \{\lambda_1,\dots,\lambda_b\}$, where $\lambda_\ell < \lambda_{\ell+1}$ for all $\ell \in [b-1]$.
		For each $i \in S_j$, since $i \notin W$, we have that $W'$ selects fewer than $t$ vertices of $U_i$.
		Since, for each $i \in S_j^\star$ the graph $G^\star[V(U_i) \cup \{i^j\}]$ is isomorphic to the universe gadget graph $H$, and $\opt_{\FF}(H) = t$, we know that the vertex $i^j$ must be selected.
		If $Y_j = \{y_1,y_2\}$, then for $i = \max S_j$ we have argued that any path $P$ between $y_1$ and $y_2$ in $G_j - W'$ would make $G^\star[V(U_i) \cup V(P)]$ contain $H$ as a minor.
		Therefore, there cannot be a path between the vertices of $Y_j$ in $G_j - W'$.
		Hence, $W' \cap V(G_j)$ contains all of $X_j$ and disconnects the vertices of $Y_j$.
		But, because $(X_j,Y_j)$ is an extended $\FF$-blocking set, this directly yields $|W' \cap V(G_j)| > \opt_{\FF}(G_j)$.
		This contradicts that $S_j$ is in $\SSS \setminus \SSS'$, and finishes the proof.
	\end{claimproof}
	We have thus obtained a linear parameter transformation.
	The result then follows from \cref{thm:d_hitting_set_lower_bound}.
\end{proof}

In \cref{thm:lb_from_extended_blocking_sets} we consider good minimal extended $\FF$-blocking sets since one of the applications we provide later uses them.
Let us stress that having good minimal extended $\FF$-blocking sets is a sufficient condition for the lower bound to work. 
However, it seems likely that a more complex analysis would reveal more cases beyond this one where the lower bound works.
On the other hand, the next section shows that one must add some type of assumptions on top of the existence of unbounded minimal extended $\FF$-blocking sets to obtain kernelization lower bounds.

\subsection{On the Assumption That \texorpdfstring{$\mathbf{(X,Y)}$}{(X,Y)} Is Good}
\label{sec:on_X_Y_being_good}

One might wonder whether the lower bound given in \cref{thm:lb_from_extended_blocking_sets} can be obtained when dropping the assumption that $(X,Y)$ is a \emph{good} minimal extended $\FF$-blocking set.
That is, can we get a similar theorem when we drop the ``good'' qualification?
While this might seem reasonable at first, it becomes clear that a proof similar to the one used for \cref{thm:lb_from_extended_blocking_sets} should not work without any further assumptions. 
Indeed, having edges between vertices in the universe gadget and the set $Y$ can cause problems for specific families $\FF$.
In fact, using an unpublished result [personal communication with Bougeret, Brandwein, and Sau, June 2026], we can prove that one cannot just drop this assumption under the standard complexity assumptions.

Concretely, upcoming work by Bougeret, Brandwein, and Sau will show that $\{K_{1,3}\}$-\problemname{Minor Deletion} has a polynomial kernel parameterized by the distance to a caterpillar forest.
A caterpillar is a graph consisting of a path to which degree-one vertices are attached by an edge, and a caterpillar forest is a forest in which each connected component is a caterpillar.
Note that caterpillar forests are exactly the graphs with pathwidth at most $1$~\cite[Lemma 2.4]{arnborgMonadicSecondOrder1990}.
We show that the class of caterpillar forests has unbounded minimal extended $\{K_{1,3}\}$-blocking sets, which, together with the aforementioned upcoming result, shows that one cannot in general obtain lower bounds for \fDeletionDistToG{} just from the assumption that $\mathcal{G}$ is minor-closed, closed under disjoint union, and has unbounded minimal extended $\mathcal{F}$-blocking sets.

\begin{figure}[t]
	\centering
	\includegraphics{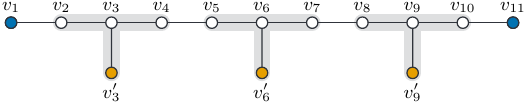}
	\caption{The caterpillar $C_3$. The pendants in $X$ are colored orange, and the path endpoints in $Y$ are colored blue. Gray shading marks three vertex-disjoint claws ($K_{1,3}$) centered at $v_3,v_6,v_9$. For general $n$, the path is extended to $v_1,\dots,v_{3n+2}$, with a pendant attached to every vertex $v_i$ such that $i \equiv_3 0$.}
	\label{fig:caterpillar_C_n}
\end{figure}

\begin{restatable}{theorem}{thmK13ExtendedBlockingSets}
	\label{thm:k13_extended_blocking_sets_insufficient}
	The graph class of caterpillar forests has unbounded minimal extended $\{K_{1,3}\}$-blocking sets.
\end{restatable}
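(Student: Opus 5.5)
The plan is an explicit family of caterpillars $C_n$, one for each $n \geq 1$, with a minimal extended $\{K_{1,3}\}$-blocking set of size $n+2$; this is the family in \cref{fig:caterpillar_C_n}. Let $C_n$ be the path $v_1,\dots,v_{3n+2}$ with a pendant $p_i$ attached to $v_{3i}$ for each $i \in [n]$. It is a caterpillar, so it lies in the class. Take $X = \{p_1,\dots,p_n\}$ and $Y = \{v_1, v_{3n+2}\}$. Then $|X \cup Y| = n+2$, which is unbounded as $n$ grows.

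\textbf{Step 1: compute $\opt(C_n)$ and the shape of optimal solutions.} A forest has a $K_{1,3}$ minor if and only if it has a vertex of degree at least $3$. The vertices of degree at least $3$ in $C_n$ are exactly $v_{3i}$, $i \in [n]$. The closed neighbourhoods $N_i := N[v_{3i}] = \{v_{3i-1}, v_{3i}, v_{3i+1}, p_i\}$ are pairwise disjoint, because consecutive centres are three apart. A solution $S$ must intersect every $N_i$, since otherwise $v_{3i}$ keeps degree $3$. Hence $\opt(C_n) \geq n$. Deleting all $v_{3i}$ shows $\opt(C_n) = n$. Consequently every optimal solution contains exactly one vertex of each $N_i$. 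Conversely, any set containing one vertex of each $N_i$ is a solution: no other vertex has degree $3$, and deleting vertices never raises degrees.

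\textbf{Step 2: $(X,Y)$ is an extended blocking set.} Let $S$ be optimal with $X \subseteq S$. Since $p_i \in N_i$, Step 1 forces $S = X$. Then $C_n - S$ still contains the whole path $v_1,\dots,v_{3n+2}$, so $v_1$ and $v_{3n+2}$ lie in the same component. This violates condition 4 of \cref{def:extended_blocking_sets}.

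\textbf{Step 3: minimality.} First consider a proper subset $X' \subsetneq X$, and pick some $p_i \notin X'$. The set $S = (X \setminus \{p_i\}) \cup \{v_{3i}\}$ has one vertex per $N_i$, so by Step 1 it is an optimal solution. It contains $X'$, and it separates $v_1$ from $v_{3n+2}$ on the path. Hence $(X',Y)$ is not an extended blocking set. Next consider a proper subset $Y' \subsetneq Y$. Then $|Y'| \leq 1$, so condition 4 holds vacuously, and $S = X$ is an optimal solution containing $X$. Hence $(X,Y')$ is not an extended blocking set either.

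The only step requiring care is Step 1, specifically the rigidity claim that every optimal solution picks exactly one vertex from each $N_i$. This follows directly from the disjointness of the $N_i$ and the degree characterisation of claw-minors in forests.
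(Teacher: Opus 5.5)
Your proposal is correct and follows essentially the same route as the paper: the same caterpillar $C_n$ and pair $(X,Y)$, the lower bound from $n$ disjoint claws, the observation that the only optimal solution containing $X$ is $X$ itself, swapping a missing pendant for its centre to prove minimality in $X$, and vacuity for $Y' \subsetneq Y$. Your version simply makes the rigidity of optimal solutions explicit via the pairwise disjoint closed neighbourhoods $N_i$, which the paper leaves implicit.
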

\begin{proof}
	For any integer $n \geq 1$, let $C_n$ be the caterpillar consisting of the path $v_1,\dots,v_{3n +2}$, and for each $v_i$ with $i \equiv_3 0$, that is, for every 3rd vertex of the path, we add a pendant $v_i'$.
	Set $X = \{v_i' \mid i \in [3n+2], i \equiv_3 0\}$ to be the set of all of these additional pendants, and $Y = \{v_1,v_{3n+2}\}$.
	See \cref{fig:caterpillar_C_n} for an illustration of $C_n$ for $n=3$.
	It is easy to see that $C_n$ contains $n$ vertex-disjoint copies of $K_{1,3}$, and therefore one must delete at least $n$ vertices from $C_n$ to make it $K_{1,3}$-minor free.

	Clearly, any set $S \subseteq V(C_n)$ that contains all of $X$ and disconnects all of $Y$ has size larger than $n$, because $|X| = n$ and deleting all of $X$ does not disconnect $Y$.
	Moreover, clearly $X$ is a set such that $C_n - X$ is $\{K_{1,3}\}$-minor free.
	Therefore, $(X,Y)$ is an extended $\{K_{1,3}\}$-blocking set.

	Moreover, $(X,Y)$ is a minimal extended $\{K_{1,3}\}$-blocking set.
	To see this, consider some $X' \subsetneq X$.
	Then, the set $X' \cup \{v_i \mid v_i' \in X \setminus X'\}$ is a solution of size $n$ that contains $X'$ and disconnects $Y$.
	Similarly, $(X,Y')$ with $Y' \subsetneq Y$ is not an extended $\{K_{1,3}\}$-blocking set, because the solution $X$ has optimal size, and as $|Y'| \leq 1$ we vacuously have that no two distinct vertices of $Y'$ remain connected.
\end{proof}
 
\section{The Dichotomy for Biconnected Families and Minimal Extended Blocking Sets}
\label{sec:biconnected_dichotomy}
In this section, we explain that the known dichotomies for biconnected families $\mathcal{F} $ and minor-closed $\GG$ can also be framed in terms of whether $\mathcal{G}$ has unbounded minimal extended $\mathcal{F}$-blocking sets.
That is, we establish \cref{main_thm:biconnected_graphs_characterization_blocking_sets}.

\thmBiconnectedDichotomy*{}

This characterization follows relatively directly from the literature, but since it has not previously been stated explicitly, we consider it worthwhile to record it here.

\subparagraph*{Vertex cover.}
For the case $\FF = \{K_2\}$, that is for \textsc{Vertex Cover}, and minor-closed $\GG$, it is already known that \fDeletionDistToG{} has a polynomial kernel if and only if $\mathcal{G}$ has bounded minimal $\FF$-blocking sets~\cite{bougeretBridgedepthCharacterizesWhich2022}.
From this, we can easily obtain that \fDeletionDistToG{} has a polynomial kernel if and only if $\GG$ has bounded minimal extended $\FF$-blocking sets.
Essentially, the notion of extended $\FF$-blocking sets collapses back to the basic notion of $\FF$-blocking sets for \textsc{Vertex Cover}.

\begin{lemma}
    \label{thm:vc_extended_nonextended_blocking_sets_same}
    Let $\FF = \{K_2\}$, and $\GG$ be a graph class.
    Then, $\GG$ has bounded minimal $\FF$-blocking sets if and only if $\GG$ has bounded minimal extended $\FF$-blocking sets.
\end{lemma}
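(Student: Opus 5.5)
The plan is to show that, for $\FF = \{K_2\}$, minimal extended $\FF$-blocking sets are essentially minimal $\FF$-blocking sets, up to at most one extra vertex. The key observation is that if $S$ is a vertex cover of $G$, then $G - S$ is edgeless. Hence condition~4 of \cref{def:extended_blocking_sets} (no two vertices of $Y \setminus S$ in a common component of $G - S$) holds automatically for every solution $S$. A pair $(X,Y)$ is therefore an extended $\FF$-blocking set if and only if $X$ is an $\FF$-blocking set in the sense of \cref{def:blocking_set}. Concretely, both say that no minimum vertex cover contains $X$; the pair definition's condition~1 requires $|S| = \opt_\FF(G)$, which matches \cref{def:blocking_set}.

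Direction 1: bounded extended sets imply bounded basic sets. Let $X$ be a minimal $\FF$-blocking set of $G \in \GG$. Then $(X,\emptyset)$ is an extended blocking set by the equivalence above. Every proper subset $X'$ is not a blocking set, so $(X',\emptyset)$ is not an extended blocking set. The $Y$-part of the minimality condition is vacuous. Hence $(X,\emptyset)$ is minimal, and $|X| \le c$, where $c$ is the bound on minimal extended blocking sets.

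Direction 2: bounded basic sets imply bounded extended sets. Let $(X,Y)$ be a minimal extended blocking set. Then $X$ is a blocking set, and every proper subset $X'$ is not (since $(X',Y)$ is not extended blocking), so $X$ is a minimal blocking set and $|X| \le b$. For $Y$, minimality requires that $(X,Y')$ fails to be extended blocking for every proper subset $Y' \subsetneq Y$. By the equivalence, whether $(X,Y')$ is extended blocking depends only on $X$, and $X$ is a blocking set, so $(X,Y')$ is always extended blocking. If $Y \neq \emptyset$, then $Y' = \emptyset$ is a proper subset giving an extended blocking set, which contradicts minimality. So $Y = \emptyset$ and $|X \cup Y| \le b$.

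The only delicate point is checking the edge cases of the definition: that condition~4 is genuinely vacuous whenever $G - S$ is edgeless, including the case where vertices of $Y$ lie in $S$, and that the minimality clause quantifies over proper subsets, including the empty set. I expect no real obstacle beyond this bookkeeping.
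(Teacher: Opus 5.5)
Your proof is correct and follows essentially the same route as the paper. For the first direction, a minimal $\FF$-blocking set $X$ yields the minimal extended blocking set $(X,\emptyset)$. For the second, $G - S$ is edgeless for every vertex cover $S$, so condition~4 is vacuous and minimality forces $Y = \emptyset$. The difference is only in packaging: you state the equivalence ``$(X,Y)$ is extended blocking iff $X$ is blocking'' up front, while the paper argues by contradiction from a minimum vertex cover containing $X$.
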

\begin{proof}
    For the first direction of the proof, observe that a minimal $\FF$-blocking set $X$ immediately gives rise to a good minimal extended $\FF$-blocking set, namely $(X,\emptyset)$.
    Hence, from $\GG$ having unbounded minimal $\FF$-blocking sets, we obtain that $\GG$ has unbounded minimal extended $\FF$-blocking sets.

    Now, assume that $\GG$ has unbounded minimal extended $\FF$-blocking sets, but minimal $\FF$-blocking sets of $\GG$ have size at most $c$.
    Let $(X,Y)$ be a minimal extended $\FF$-blocking set of some graph $G \in \GG$ with $|X \cup Y| > c$.
    We cannot have $Y = \emptyset$, as then $X$ would be a minimal $\FF$-blocking set of size larger than $c$.
    Since $(X,Y)$ is a minimal extended $\FF$-blocking set with $Y \neq \emptyset$, we have that $(X,\emptyset)$ is not an extended $\FF$-blocking set.
    Therefore, there exists some vertex cover $S \subseteq V(G)$ of size $\opt_{\FF}(G)$ that contains all of $X$.
    However, because $\FF = \{K_2\}$, it is clear that $G - S$ has no edges, and therefore all vertices of $Y$ are disconnected.
    This contradicts that $(X,Y)$ is an extended $\FF$-blocking set.
\end{proof}

\subparagraph*{Graphs on at least three vertices.}
Now, let us proceed to the families of graphs on at least three vertices.
For this purpose, we recall a graph family defined by Dekker and Jansen~\cite{dekkerKernelizationFeedbackVertex2024}, which they use as gadgets in a kernelization lower bound.
Although not explicitly mentioned by them, we can easily show that the considered family has unbounded minimal extended $\FF$-blocking sets.

\defnecklace*{}

The crucial property is that such necklaces are guaranteed to appear if minor-closed $\GG$ has unbounded elimination distance to being $\FF$-minor free.

\begin{lemma}[Follows from {\cite[Lemma 7]{dekkerKernelizationFeedbackVertex2024}}]
    \label{thm:$b$-biconnected_graphs_necklaces_exist}
    Let $\FF$ be a family of connected graphs that contains at least one planar graph, and $\GG$ be a minor-closed graph class with unbounded elimination distance to the graph class of $\FF$-minor free graphs.
    Then, $\GG$ contains an $\FF$-necklace of length $\ell$ for all $\ell \geq 1$.
\end{lemma}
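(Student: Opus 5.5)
The plan is to derive the statement from \cite[Lemma 7]{dekkerKernelizationFeedbackVertex2024}, so the work lies in matching hypotheses and conclusion rather than in a new combinatorial argument. That lemma says: if $\FF$ contains a planar graph and $\GG$ is minor-closed with unbounded elimination distance to $\FF$-minor free graphs, then $\GG$ contains $\FF$-necklaces of every length. It is stated for families of planar biconnected graphs. Following the remark of Bougeret et al.~\cite{bougeretKernelizationDichotomiesHitting2026STACS}, I expect the proof to use planarity only through one planar graph $F_0 \in \FF$.

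The underlying argument goes as follows. Since $F_0$ is planar, the $F_0$-minor free graphs have bounded treewidth by the grid minor theorem. A graph of large elimination distance to the $\FF$-minor free graphs therefore contains many disjoint $\FF$-minor models that must be arranged in a long chain; otherwise a bounded-treewidth decomposition would give a short elimination. Using minor-closedness, we contract each model to a copy of the corresponding graph of $\FF$ and delete the extra edges. What remains is exactly one edge between consecutive copies and no other edges, which is an $\FF$-necklace. Each intermediate graph is a minor of a graph in $\GG$, so the necklace lies in $\GG$.

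Concretely, I will check three things in order.
\begin{enumerate}
\item The hypotheses: connectivity of $\FF$ is the only structural property used in the chain and contraction step, and biconnectivity is needed only later in their lower bound.
\item The necklace notion in \cite[Lemma 7]{dekkerKernelizationFeedbackVertex2024} agrees with \cref{def:necklace_graph}: each part is isomorphic to a graph of $\FF$, with a single edge between consecutive parts.
\item Lengths: once arbitrarily long necklaces exist in $\GG$, every length $\ell$ is obtained by deleting parts $V_{\ell+1},\dots$ from a longer necklace. This uses that $\GG$ is minor-closed.
\end{enumerate}

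The main obstacle is step (1), namely confirming that the original proof never uses biconnectivity or planarity of all graphs of $\FF$. If a step does seem to need biconnectivity, for example when extracting exactly one edge between consecutive models, I would first try to redo it using connectivity alone. Because each $F \in \FF$ is connected, any connected subgraph joining two consecutive models can be reduced to a single edge by contracting a shortest connecting path into one endpoint's branch set. This does not alter the isomorphism type of the contracted parts.
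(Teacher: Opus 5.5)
Your proposal is correct and follows the paper, which likewise proves the lemma only by invoking \cite[Lemma 7]{dekkerKernelizationFeedbackVertex2024} together with the remark that its proof needs just one planar graph in $\FF$; your extra checks on the necklace definition and on truncating a longer necklace to length exactly $\ell$ are fine. One small correction to your heuristic sketch: planarity enters because an $\{F_0\}$-necklace is itself a planar $\FF$-necklace, so a minor-closed $\GG$ lacking long necklaces excludes a planar graph and therefore has bounded treewidth---what becomes bounded is the treewidth of the graphs in $\GG$, whereas your sketch bounds only the treewidth of the $F_0$-minor free graphs.
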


The statement in Dekker and Jansen~\cite[Lemma 7]{dekkerKernelizationFeedbackVertex2024} assumes that all graphs in $\FF$ are planar, but as also pointed out in~\cite{bougeret2026kernelizationdichotomieshittingminorsv2}, the proofs actually only require at least one planar graph.
Dekker and Jansen~\cite{dekkerKernelizationFeedbackVertex2024} use a (slightly more controlled) variant of these $\FF$-necklaces to show a kernelization lower bound.
They give a polynomial parameter transformation from \textsc{CNF-SAT} parameterized by the number of variables.
We show that, when minor-minimal $\FF$ is a family of biconnected graphs on at least three vertices, $\FF$-necklaces have large good minimal extended $\FF$-blocking sets, which is in fact the key property that makes the lower bound of Dekker and Jansen work.

\begin{lemma}
    \label{thm:biconnected_graphs_large_extended_blocking_sets}
    Let $\FF$ be a minor-minimal family of biconnected graphs on at least three vertices each.
    Then, for any integer $t \geq 1$, an $\FF$-necklace $G$ of length $t$ has a good minimal extended $\FF$-blocking set $(X,Y)$ with $|X| = t$ and $|Y| = 2$.
\end{lemma}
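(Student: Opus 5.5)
Let $V_1,\dots,V_t$ be the partition of the necklace $G$. For $i\in[t-1]$ let $a_iv_{i+1}'$ be the unique edge between $V_i$ and $V_{i+1}$, with $a_i\in V_i$ and $v'_{i+1}\in V_{i+1}$. The plan is as follows.

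\textbf{Choice of $(X,Y)$.} For each $i$ pick a vertex $x_i\in V_i$ that is not an endpoint of an inter-block edge. Such a vertex exists since $|V_i|\ge 3$ and at most two vertices of $V_i$ are incident to connecting edges. Set $X=\{x_1,\dots,x_t\}$. For $Y=\{y_1,y_2\}$, take $y_1\in V_1$ and $y_2\in V_t$ distinct from the $x_i$ and from the attachment vertices where possible; if not possible, use the attachment vertices. I would then verify that $\opt_\FF(G)=t$.

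\textbf{Lower bound $\opt_\FF(G)\ge t$.} Each $G[V_i]$ is isomorphic to a graph of $\FF$, and these are vertex-disjoint, so every solution has at least $t$ vertices.

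\textbf{Upper bound and the structure of solutions.} Deleting one vertex from each $V_i$ gives an $\FF$-minor-free graph. To see this, note that every block of $G-S$ lies inside a single $V_i$, because the connecting edges are bridges. By \cref{proposition:biconnected_minors} and the fact that $\FF$ is minor-minimal, no biconnected $F\in\FF$ is a minor of a proper subgraph of some $G[V_i]$. Hence every optimum $S$ picks exactly one vertex per $V_i$, and every such choice is optimal. Consequently $(X,Y)$ is an extended blocking set: the solution containing $X$ is exactly $X$, and since the $x_i$ avoid the attachment vertices and $G[V_i]-x_i$ stays connected (each $G[V_i]$ is biconnected), $y_1$ and $y_2$ remain connected.

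\textbf{Minimality and goodness.} Take $Y'\subsetneq Y$; then $S=X$ witnesses that $(X,Y')$ is not blocking. Take $X'\subsetneq X$ and pick $x_j\notin X'$. Replace $x_j$ by an attachment vertex of $V_j$ (say $a_j$, or $v'_j$ if $j=t$). This is still optimal and cuts the chain, so $y_1,y_2$ are separated. This handles minimality, and it also gives the sets $S_v$ for goodness, conditions (1)--(4). Condition (5) asks that the component containing each $y\in Y\setminus S_v$ be $\LL$-minor free. Here $\LL=\FF$, because the graphs are biconnected, and $G-S_v$ is $\FF$-minor free, so this holds. All leaf blocks have at least three vertices by hypothesis. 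Alternatively, goodness follows directly from \cref{obs:biconnected_families_extended_blocking_sets_good}.

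\textbf{Main obstacle.} The delicate step is choosing $y_1,y_2$ so that they survive and stay connected after deleting $X$, while each $S_v$ still disconnects them. This needs care when $|V_i|=3$ and the attachment vertices may coincide: for instance, $a_i=v'_i$ in a middle block leaves only two other vertices. I would resolve this by choosing $x_i\notin\{a_i,v'_i\}$, which is possible since $|V_i|\ge 3$. I would take $y_1=a_1$ and $y_2=v'_t$ (or other vertices when $t=1$). For the replacement I would delete $a_j$ or $v'_j$ to cut the chain. If $y_1$ itself ends up deleted, that is allowed: condition (4) only concerns the undeleted vertices of $Y$.
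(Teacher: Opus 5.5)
Your proof is correct and is essentially the paper's argument. The ingredients match: $X$ takes one non-attachment vertex per bead, $\opt_\FF(G)=t$ follows from the packing plus the one-vertex-per-bead solutions, the only optimum containing $X$ is $X$ itself, which leaves the chain intact, and swapping any $x_j$ for an attachment vertex cuts the chain. Goodness comes from \cref{obs:biconnected_families_extended_blocking_sets_good}, and your $Y=\{a_1,v'_t\}$ works as well as the paper's free end vertices $\ell_1,r_t$, since deleting a vertex of $Y$ is allowed.

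There are two small slips. First, surviving connecting edges are themselves $K_2$-blocks of $G-S$; this is harmless because every graph of $\FF$ has at least three vertices. Second, for $t=1$ there is no attachment vertex to swap in, so use $S=\{y_1\}$ instead.
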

\begin{proof}
    Let $V_1,\dots,V_t$ be the partition of the vertices of $G$, such that $G[V_i]$ is isomorphic to some graph in $\FF$ for each $i$.
    By the definition of an $\FF$-necklace, for each $i \in [t-1]$ there is exactly one edge from $V_i$ to $V_{i+1}$.
    If $t = 1$ let $r_1$ and $\ell_1$ be two distinct but arbitrarily chosen vertices of $V_1$.
    If $t > 1$,
    for each $i \in [t-1]$ let us denote the vertex of $V_i$ that has an edge to $V_{i+1}$ as $r_i$, and for each $i \in \range[2]{t}$ let us denote the vertex of $V_{i}$ that has an edge to $V_{i-1}$ as $\ell_i$.
    Additionally, let us denote some vertex of $V_1 \setminus \{r_1\}$ as $\ell_1$, and some vertex of $V_t \setminus \{\ell_t\}$ as $r_t$.
    Moreover, since each graph of $\FF$ has at least three vertices, for each $i \in [t]$ there is some vertex $z_i \in V_i \setminus \{r_i,\ell_i\}$.

    Set $X = \{z_i \mid i \in [t]\}$ and $Y = \{\ell_1,r_t\}$.
    We will show that $(X,Y)$ is a minimal extended $\FF$-blocking set.

    First, notice that $\opt_{\FF}(G) \geq t$, because $G$ contains $t$ vertex-disjoint subgraphs, each of them isomorphic to a graph in $\FF$.
    Then, observe that any set $S$ selecting one vertex of each set $V_i$ is such that $G - S$ is $\FF$-minor free.
    Indeed, by the minor-minimality of $\FF$, then $G[V_i] - S$ cannot contain an $\FF$-minor for any $i \in [t]$.
    Moreover, any biconnected component of $G - S$ is either a single edge between sets $V_i$ and $V_{i+1}$, or isomorphic to a proper subgraph of a graph in $\FF$.
    As all graphs in $\FF$ are biconnected on at least three vertices, $G - S$ must be $\FF$-minor free.
    Hence, $\opt_{\FF}(G) = t$.

    Now, we show that $(X,Y)$ is an extended $\FF$-blocking set.
    Observe that any set $S\subseteq V(G)$ of size $\opt_{\FF}(G)$ that contains all of $X$ must be identical to $X$.
    However, $G - X$ has a path from $\ell_1$ to $r_t$.
    Indeed, for all $i \in [t]$ neither vertex $r_i$ nor vertex $\ell_i$ is part of $X$.
    So, since each graph of $\FF$ is biconnected on at least three vertices, for each $i \in [t]$ there is a path from $\ell_i$ to $r_i$ using only vertices of $V_i \setminus \{z_i\}$.
    This means that in $G - S$ there is a path from $\ell_1$ to $r_1$, from $r_1$ to $\ell_2$, from $\ell_2$ to $r_2$, and so forth, until vertex $r_t$ is reached.

    Next, observe that it follows from our previous arguments that $X$ is itself an optimal solution.
    Therefore, each pair $(X,Y')$ where $Y' \subsetneq Y$ is not an extended $\FF$-blocking set.

    Finally, we need to show that for all sets $X' \subsetneq X$ the pair $(X',Y)$ is not an extended $\FF$-blocking set.
    Let $z_i$ be one of the vertices in $X \setminus X'$.
    We set $S = X \setminus \{z_i\} \cup \{\ell_i\}$.
    Since $S$ selects one vertex from each set $V_i$, the graph $G - S$ is $\FF$-minor free.
    Moreover, there is no path from $\ell_1$ to $r_t$ in $G - S$, since $G - \{\ell_i\}$ either disconnects the vertices $\ell_1$ and $r_t$, or $i = 1$ and therefore $S$ contains $\ell_1$.

    Hence, $(X,Y)$ is a minimal extended $\FF$-blocking set, and by \cref{obs:biconnected_families_extended_blocking_sets_good} it is also good.
\end{proof}

This will allow us to establish one direction of \cref{main_thm:biconnected_graphs_characterization_blocking_sets}.
For the second direction, we explain that if $\GG$ has bounded elimination distance to the graph class of $\FF$-minor free graphs, then it has bounded minimal extended $\FF$-blocking sets.
Interestingly, merely having a bound on the size of (extended) minimal $\FF$-blocking sets seems to be difficult to exploit algorithmically for general \fDeletion{} problems.
The known state-of-the-art approaches instead use a different notion (which in fact can also be seen as a type of blocking set), based on the notion of labeled minors~\cite{jansen2020polynomial,bougeretKernelizationDichotomiesHitting2026STACS}.
Let us now define the relevant notions for these approaches.

\begin{definition}[Labeled Minors]
    \label{def:labeled_minor}
    Let $Z$ be a set.
    A $Z$-labeled graph $G$ is a graph where each vertex $v$ is assigned a set of labels $X_v \subseteq Z$.
    A $Z$-labeled graph $H$ is a labeled minor of a labeled graph $G$ if $H$ can be obtained from $G$ by repeatedly applying the following operations:
    \begin{enumerate}
        \item deleting a vertex,
        \item deleting an edge,
        \item contracting an edge $uv$, where the label set of the vertex $z$ obtained by the contraction is $X_z = X_u \cup X_v$,
        \item deleting a label from the label set of a vertex.
    \end{enumerate}
\end{definition}

Given a set $Q$ of $Z$-labeled graphs, and an integer $s$, the set $Q$ is \emph{$s$-saturated} if $Q$ contains the graph on a single vertex with label set $Z'$ for each $Z' \subseteq Z$ of size exactly $s$.
Now, we present the ``main lemma'' used to obtain polynomial kernels in this setting.

\begin{lemma}[{\cite[Lemma 6]{bougeretKernelizationDichotomiesHitting2026STACS}}]
    \label{thm:labeled_blocking_sets_bounded_ed}
    Let $\FF$ be a nonempty finite set of connected graphs.
    Let $Q$ be a set of connected $Y$-labeled graphs that is $\min_{F \in \FF} |V(F)|$-saturated, such that each graph of $Q$ is on at most $\max_{H \in \FF} |E(H)| + 1$ vertices.
    Moreover, let $G$ be a $Y$-labeled graph with elimination distance $\eta$ to the class of $\FF$-minor free graphs.

    Assume that, for each minimum solution $S$ to \fDeletion{} on the unlabeled graph underlying $G$, we have that $G - S$ has a graph of $Q$ as a labeled minor.
    Then, there exists a set $Q^* \subseteq Q$, whose size only depends on $\FF$ and $\eta$, such that for each minimum solution $S$ to \fDeletion{} on the unlabeled graph underlying $G$ we have that $G - S$ has a graph of $Q^*$ as a labeled minor.
\end{lemma}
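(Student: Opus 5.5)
The plan is induction on the elimination distance $\eta$, after first strengthening the statement so that the induction goes through. The obvious inductive step deletes a vertex $v$ with $\upsf{ed}(G-v)\le \eta-1$. The trouble is that a minimum solution $S$ of $G$ need not be a minimum solution of $G-v$, or of $G$ with $v$ forced into it: it can be off by one. I would therefore prove the stronger claim that, for each fixed $d\ge 0$, the conclusion holds for all solutions $S$ with $|S|\le \opt(G)+d$. The bound on $|Q^*|$ may then depend on $\FF$, $\eta$ and $d$. I would also allow the labeled family $Q$ to be replaced during the induction by a derived family of connected labeled graphs. These derived graphs still have at most $\max_{H\in\FF}|E(H)|+1$ vertices, or a bound depending only on $\FF$, and are still saturated.

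\textbf{Base case $\eta=0$.} Here $G$ is $\FF$-minor free. Every near-optimal solution then has at most $d$ vertices, and for each such $S$ the graph $G-S$ contains some member of $Q$ as a labeled minor. A single witness only survives the deletion of $S$ if it avoids $S$. I would build $Q^*$ by a sunflower/branching argument: pick a minimal model of some $q_1\in Q$, which uses at most $\max|E(H)|+1$ vertices. Every $S$ that destroys this model hits one of its vertices, so we branch on that vertex and recurse with budget $d-1$. This gives a tree of depth $d$ and bounded branching, hence boundedly many graphs in $Q^*$.

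\textbf{Disconnected graphs.} Here I use that both $\FF$ and $Q$ consist of connected graphs. Near-optimal solutions then split into near-optimal solutions of the components, and the total excess over the components is at most $d$. Suppose that for every component $C$ and every excess $d_C$, some solution of $C$ with excess $d_C$ leaves no $Q$-minor. Combining these per-component solutions gives a near-optimal solution of $G$ with no $Q$-minor, contradicting the hypothesis. Hence only boundedly many patterns of excesses matter, and each is handled by induction on one component; this needs some care, but the arguments are standard.

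\textbf{Connected case, $\upsf{ed}(G-v)\le\eta-1$.} I split the near-optimal solutions by whether they contain $v$. If $v\in S$, then $S\setminus\{v\}$ is a solution of $G-v$ with excess at most $d+1$, since $\opt(G-v)\ge\opt(G)-1$, so the induction hypothesis on $G-v$ applies directly. If $v\notin S$, then $S$ is a solution of $G-v$ with excess at most $d+1$, but the witnessing labeled minor may use $v$. To handle this, I give the neighbors of $v$ a fresh label $\ell_v$. I replace $Q$ by the family $Q_v$ of all connected labeled graphs obtained from some $q\in Q$ as follows: remove the branch vertex containing $v$, and mark its neighbors with $\ell_v$. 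To keep the graphs connected, I take components and record that they must reattach through $v$; this is where connectedness is used. Saturation is preserved because single vertices with $s$ labels remain in the family. Applying induction to $G-v$ with $Q\cup Q_v$ gives a bounded subfamily, and pulling it back gives $Q^*$.

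I expect this last case to be the main obstacle. The difficulty is that a minor of $q$ through $v$ may split into several components of $G-v-S$ that must be witnessed simultaneously. The single-graph formulation of the lemma does not directly allow this, so I would need either to allow disjoint unions in $Q_v$ (bounded in number by the maximum degree of $q$) or to recurse separately per component. Keeping the saturation and size conditions intact through this transformation is the delicate part.
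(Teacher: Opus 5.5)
The paper does not prove this lemma; it imports it from Bougeret et al. So I assess your sketch on its own terms. It has genuine gaps, and they start in the base case.

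\textbf{Base case.} A minimal model of $q_1$ does not use at most $\max_{H}|E(H)|+1$ vertices of $G$. That bound is on $|V(q_1)|$, while branch sets can be arbitrarily long. So ``branch on a vertex of the model'' has unbounded fan-out and gives no bound on $|Q^*|$. Your argument never uses saturation, so it would also prove the non-saturated version, which is false already for $\eta=0$, $d=1$. Take $\mathcal{F}=\{K_4\}$ and $G=C_N$ with $N\ge 5$, where $v_j$ carries the label $\{y_j\}$. Let $q_i$ be the path $x_1x_2x_3$ labeled $\{y_{i+1}\},\{y_{i+3}\},\{y_{i-1}\}$, indices mod $N$. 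Every model of $q_i$ contains all vertices except possibly $v_i$, and $C_N-v_i$ contains $q_i$. Hence every $S$ with $|S|\le 1$ leaves some $q_i$, yet every proper subfamily is destroyed by deleting one suitable $v_j$, so $Q^*$ must contain all $N$ graphs. Only the saturated single vertices rescue this example: for large $N$, two single vertices with disjoint $4$-sets of labels suffice. So a correct base case must genuinely exploit saturation.

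\textbf{Inductive step.} There are problems beyond the one you flag.
\begin{itemize}
\item \emph{Missing premise when $v\notin S$.} The near-optimal solutions of $G$ avoiding $v$ are only those solutions $S'$ of $G-v$ for which $G-S'$ stays $\mathcal{F}$-minor free. For the other near-optimal $S'$ of $G-v$, where reinserting $v$ creates an $\mathcal{F}$-minor, the hypothesis says nothing. So you cannot assert that every such $S'$ leaves a member of $Q\cup Q_v$, which is what induction on $G-v$ requires. Repairing this forces you to add witnesses of ``an $\mathcal{F}$-minor through $v$''. Like $Q$-minors through $v$, these break into several pieces in $G-v$.
\item \emph{Fragmentation, which you leave open, is the core of the lemma.} A single piece found in $G-v-S$ does not pull back to a member of $Q$ in $G-S$ unless all pieces are present at once in disjoint branch sets. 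The branch set containing $v$ may itself split. Admitting disconnected witnesses breaks your component-wise treatment of disconnected graphs, since the pieces can spread over unboundedly many components of $G-v-S$.
\item \emph{Saturation is not preserved.} Over $Y\cup\{\ell_v\}$ you would need single vertices with $s$ labels that include $\ell_v$, and these pull back to nothing in $Q$.
\item \emph{Excess bookkeeping when $v\in S$.} You may invoke induction on $G-v$ at excess $d$ only if $\opt(G-v)=\opt(G)-1$, and at excess $d-1$ otherwise. Using $d+1$ is wrong, because the premise at that excess is not implied.
\end{itemize}

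Passing to bounded excess is a sensible idea, since a minimum solution avoiding $v$ has excess at most one over $G-v$. As written, however, the proposal does not establish the lemma.
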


As already pointed out by Bougeret et al.~\cite{bougeretKernelizationDichotomiesHitting2026STACS}, this set $Q^*$ can also be seen as a type of bounded-size blocking set.
Towards completing the proof of \cref{main_thm:biconnected_graphs_characterization_blocking_sets}, we show the not necessarily surprising result that a bound on the blocking sets in the sense of \cref{thm:labeled_blocking_sets_bounded_ed} yields a bound on the size of minimal extended $\FF$-blocking sets.

\begin{lemma}
    \label{thm:bounded_elim_dist_bounds_extended_blocking_sets}
    Let $\FF$ be a nonempty finite family of connected graphs, and $\GG$ be a minor-closed graph class that has elimination distance at most $\eta$ to being $\FF$-minor free.
    Then, there is a constant $c$ depending only on $\eta, \FF$ such that for any $G \in \GG$ and any minimal extended $\FF$-blocking set $(X,Y)$ of $G$ we have $|X \cup Y| \leq c$.
\end{lemma}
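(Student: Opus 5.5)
Given a minimal extended $\FF$-blocking set $(X,Y)$ of $G \in \GG$, we encode it as a labeled-minor condition and apply \cref{thm:labeled_blocking_sets_bounded_ed}. A single label per vertex of $X$ does not work: the label-set graphs in $Q$ must have size $s = \min_{F\in\FF}|V(F)|$ to make $Q$ saturated, and ``$x \notin S$'' is a singleton event. We therefore blow up instead of labeling singly. Let $s = \min_{F \in \FF}|V(F)|$ and $Z = X \cup Y$, assuming for now $|X \cup Y| \geq s$ (otherwise the bound is trivial). Give each $x \in X$ the label set $\{x\}$ and each $y \in Y$ the label set $\{y\}$. Take $Q$ to be the union of three parts: (i) all single-vertex graphs whose label set has size exactly $s$; (ii) for each $x \in X$, the single vertex labeled $\{x\}$; (iii) for each pair $y \neq y' \in Y$, the single vertex labeled $\{y,y'\}$. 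Part (ii) is a labeled minor of $G - S$ exactly when $x \notin S$. Part (iii) is a labeled minor exactly when $y$ and $y'$ lie in a common component of $G-S$, since we contract a path between them. Part (i) is harmless, since no vertex, even after contractions along components, can gather more labels than $(ii)$/$(iii)$ already detect; I will verify this but expect it to be fine. All graphs in $Q$ are connected and have one vertex, so the size condition $\leq \max|E(H)|+1$ holds.

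Since $(X,Y)$ is an extended blocking set, every minimum solution $S$ either misses some $x \in X$ or leaves two vertices of $Y \setminus S$ connected. So $G - S$ contains some graph of $Q$ as a labeled minor, and the hypothesis of \cref{thm:labeled_blocking_sets_bounded_ed} holds; here $\GG$ having elimination distance at most $\eta$ supplies the parameter. The lemma yields $Q^* \subseteq Q$ of size $f(\FF,\eta)$ that still works for every minimum solution. Let $X^*$ be the vertices of $X$ occurring in labels of graphs of $Q^*$, and $Y^*$ the corresponding vertices of $Y$, so $|X^* \cup Y^*| \leq 2|Q^*|$. Then $(X^*,Y^*)$ is an extended blocking set. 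Indeed, any minimum $S$ containing $X^*$ and separating $Y^* \setminus S$ realizes no graph of type (ii)/(iii) in $Q^*$. It also realizes none of type (i), because a type-(i) vertex would need $s$ labels merged into one component, which forces either an unremoved $x$ or two connected $y$'s, both already excluded.

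Minimality finishes the proof. Since $X^* \subseteq X$ and $Y^* \subseteq Y$, minimality of $(X,Y)$ gives $X^* = X$, after first shrinking $X$ with $Y$ fixed and then $Y$, using that $(X,Y^*)$ is also blocking by monotonicity. It also gives $Y^* = Y$. Hence $|X\cup Y| \leq \max(s, 2f(\FF,\eta)) =: c$.

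I expect the main obstacle to be the type-(i) saturated graphs, together with the careful check that each labeled-minor event corresponds exactly to the failure conditions. If the saturation clause causes trouble, I would first add dummy labels, or assign each vertex of $X$ and $Y$ a private block of $s$ fresh labels, so that the saturated singletons coincide with the intended events.
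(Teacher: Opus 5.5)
Your proof is correct. It uses the same engine as the paper: label each vertex of $X\cup Y$ by itself, add the saturating singletons, invoke \cref{thm:labeled_blocking_sets_bounded_ed}, and contradict minimality. Encoding ``$y,y'$ connected'' as a single vertex labeled $\{y,y'\}$, instead of the paper's labeled edge, is equivalent, since you can contract a $y$--$y'$ path.

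The genuine difference is the endgame. The paper argues by pigeonhole with a case split. If $|X|>\alpha$, some singleton $\{v\}$ with $v\in X$ is missing from $Q^*$. If $|X|\le\alpha$, then $|Y|\ge 2\alpha+1$, so some $u\in Y$ lies in no labeled edge of $Q^*$. In each case the paper takes the optimal solution witnessing that the shrunken pair is not blocking. It then shows that no graph of $Q^*$ is a labeled minor of $G-S$. The saturating graphs are excluded by a count of at most two labels per component, plus a separate edgeless argument when $\min_{F\in\FF}|V(F)|\le 2$.

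You instead take the label support $(X^*,Y^*)$ of $Q^*$ and show directly that it is an extended blocking set. Upward monotonicity in both coordinates together with minimality then forces $(X^*,Y^*)=(X,Y)$. This is shorter and treats all $s\ge 2$ uniformly. The price is that your bound depends on $s$, whereas the paper's bound $3\alpha+1$ does not.

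There are three small slips to fix.
\begin{enumerate}
\item Type-(i) graphs can lie in $Q^*$ and each carries $s$ labels, so the correct count is $|X^*\cup Y^*|\le \max(2,s)\,|Q^*|$, not $2|Q^*|$. This is still a constant depending only on $\FF$ and $\eta$.
\item Your exclusion of type-(i) realizations needs $s\ge 2$. When $s=1$, a singleton labeled $\{y\}$ with $y\in Y^*$ is realized whenever $y\notin S$. This case ($K_1\in\FF$) is vacuous, because $V(G)$ is then the unique optimal solution and no extended blocking sets exist, but you should say so.
\item Your opening worry that single labels clash with saturation is unfounded. Saturation only requires $Q$ to contain all size-$s$ singletons, not that every graph of $Q$ carries $s$ labels. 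Your construction in fact labels singly, so the ``blow-up'' fallback is never needed.
\end{enumerate}
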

\begin{proof}
    Assume the contrary.
    Let $\alpha \geq 1$ be the constant from \cref{thm:labeled_blocking_sets_bounded_ed} that bounds the size of the set $Q^*$. 
    Note that $\alpha$ is indeed a constant since both $\eta$ and $\FF$ are fixed.
    By assumption, there is some $G' \in \GG$ such that $G'$ has a minimal extended $\FF$-blocking set $(X,Y)$ where $|X \cup Y| > 3\alpha + 1$.

    If $\FF$ contains $K_1$, then it is easy to see that there cannot be a minimal extended $\mathcal{F}$-blocking set of the described size, so we shall assume that each graph of $\FF$ contains at least two vertices.

    We create a labeled graph $G$ from $G'$ by assigning the label set $X_v = \{v\}$ to each $v \in X \cup Y$, and the label set $X_v = \emptyset$ to each $v \in V(G) \setminus (X \cup Y)$.
    Create a family $Q_0$ that contains, for each $v \in X$, a graph on a single vertex labeled $\{v\}$, and for each pair of two distinct vertices $u,v \in Y$ a graph on a single edge, with one vertex labeled $\{u\}$ and the other vertex labeled $\{v\}$.
    
    Let $Q$ be a copy of $Q_0$ at this point.
    Then, for each subset $Z$ of $X \cup Y$ of size $\min_{F \in \FF} |V(F)|$, add a graph on a single vertex with label set $Z$ to $Q$; this ensures that $Q$ is $\min_{F \in \FF} |V(F)|$-saturated.

    Let us observe that the set $Q$ and the graph $G$ now fulfill the conditions of \cref{thm:labeled_blocking_sets_bounded_ed}.
    Indeed, because $(X,Y)$ is an extended $\FF$-blocking set, no optimal solution $S$ for $G'$ selects all of $X$ and disconnects each pair of vertices of $Y$.
    It is clear that in either case, this means that $G - S$ has a labeled minor of $Q_0$, namely either the minor we added for the unselected vertex of $X$, or the minor we added for the non-disconnected pair of vertices of $Y$.
    Hence, from \cref{thm:labeled_blocking_sets_bounded_ed} we obtain a set $Q^* \subseteq Q$ of size at most $\alpha$, such that for any optimal solution $S$ of graph $G'$ we have that $G - S$ has a graph of $Q^*$ as a labeled minor.

    We will conclude the proof by showing that this contradicts that $(X,Y)$ is a minimal extended $\FF$-blocking set.

    First, consider that $|X| > \alpha$.
    Then, there is some $H \in Q_0 \setminus Q^*$ that is a single labeled vertex.
    Let the label set of the vertex of $H$ be $\{v\}$. 
    We have $v \in X$.
    Since $(X,Y)$ is a minimal extended $\FF$-blocking set of $G'$, we have that $(X \setminus \{v\}, Y)$ is not an extended $\FF$-blocking set of $G'$.
    So, there is some set $S \subseteq V(G)$ such that $G - S$ is $\FF$-minor free, $|S| = \opt_{\FF}(G)$, and $S$ contains all of $X \setminus \{v\}$ and disconnects all of $Y$.
    Then $G - S$ has a graph of $Q^*$ as a labeled minor.
    Clearly, $G -S$ cannot have a graph of $Q_0 \setminus \{H\}$ as a labeled minor.
    Therefore, $G - S$ must have some graph of $Q \setminus Q_0$ as a labeled minor.
    Now, any connected labeled minor of $G - S$ can use at most two labels: indeed, the only vertices of $G$ that have labels are those of $X \cup Y$, and the only vertex of $X$ present in $G - S$ is $v$.
    Moreover, the vertices of $Y$ are disconnected in $G - S$.
    Hence, all connected labeled minors of $G-S$ can use an empty label set, or a label set that contains a single vertex of $Y \cup \{v\}$, or a label set that contains vertex $v$ together with a single vertex of $Y$.
    If $\min_{F \in \FF} |V(F)| \geq 3$, we are done since then no graph of $Q \setminus Q_0$, and therefore also no graph of $Q^*$, can be a labeled minor of $G - S$.
    So, let us consider that $\min_{F \in \FF} |V(F)| \leq 2$.
    Then, obviously $G - S$ is an independent set (or the empty graph), and cannot contain any graph of $Q \setminus Q_0$ as a labeled minor either.
    This contradicts the properties of $Q^*$.

    Now, consider that $|X| \leq \alpha$.
    Since $|X \cup Y| > 3\alpha + 1$, this implies $|Y| \geq 2\alpha + 1$.
    The set $Q^*$ can contain at most $\alpha$ labeled graphs, and clearly each labeled graph of $Q^*$ is a single labeled vertex, or a single labeled edge.
    Each labeled edge in $Q^*$ touches $2$ vertices of $Y$.
    Therefore, there exists some vertex $u \in Y$ such that $Q^*$ contains no graph on a single edge that has $u$ in its label set.
    We can use arguments essentially analogous to those in the previous case.
    Observe that $(X,Y \setminus \{u\})$ is not an extended $\FF$-blocking set, and define an optimal solution $S$ that selects all of $X$, and disconnects all of $Y \setminus \{u\}$.
    The only vertices of $G - S$ that have labels are those of $Y$.
    Now, notice that at most one of $Y \setminus \{u\}$ may have a path to $u$ in $G - S$. 
    Otherwise, two vertices of $Y \setminus (\{u\} \cup S)$ would remain connected through $u$ in $G - S$.
    Therefore, we again find that any connected labeled minor of $G - S$ uses a label set containing at most two vertices.
    Hence, if $\min_{F \in \FF} |V(F)| \geq 3$, no graph of $Q^* \setminus Q_0$ is a labeled minor of $G - S$; if $\min_{F \in \FF} |V(F)| \leq 2$ then $G - S$ is an independent set (or the empty graph), and cannot contain a graph of $Q \setminus Q_0$ as a labeled minor either.
    But, by the properties of $Q^*$, $G -S$ must have some graph of $Q^*$ as a labeled minor, and hence a graph of $Q_0$ is a labeled minor of $G - S$.
    Clearly, no graph of $Q_0$ that is a single labeled vertex can be a labeled minor of $G - S$, as all such vertices are part of $X$ and thus of $S$.
    Hence, the labeled minor is a graph on a single edge with two labeled vertices of $Y$.
    However, because $G - S$ disconnects $Y \setminus \{u\}$, this means the labeled minor must have a vertex labeled $\{u\}$.
    This contradicts our choice of $u$.
    Therefore, $G - S$ cannot have a graph of $Q^*$ as a labeled minor, contradicting the properties of $Q^*$ and concluding the proof.
\end{proof}

Now, we are ready to prove \cref{main_thm:biconnected_graphs_characterization_blocking_sets}.
As a quick remark, let us mention the same dichotomy remains valid if boundedness is only tested for minimal extended $\mathcal{F}$-blocking sets $(X,Y)$ with $|Y| \leq 2$.
In fact, the respective dichotomy could be proved without \cref{thm:bounded_elim_dist_bounds_extended_blocking_sets}.
We allow arbitrary sets $Y$ as in the literature precedent of Dekker and Jansen~\cite[Lemma 1]{dekkerKernelizationFeedbackVertex2024}.

\thmBiconnectedDichotomy*{}
\begin{proof}
    We can assume that $\mathcal{F}$ is minor-minimal and finite.

    For $\FF = \{K_1\}$, the statement is correct since in this case making a graph $\FF$-minor free requires us to delete all vertices of the input graph, and thus \fDeletionDistToG{} can be solved in polynomial time.
    Since $\GG$ is minor-closed, we can also check if $M$ is a modulator to $\GG$ in polynomial time by using an appropriate minor-testing algorithm~\cite{DBLP:journals/jct/RobertsonS95b}.
    Hence, there is a polynomial kernel.
    Similarly, $\GG$ cannot have unbounded minimal extended $\FF$-blocking sets since all graphs have a single unique set that makes the graph $\FF$-minor free.
    
    Next, consider that $\FF = \{K_2\}$.
    In this case, Bougeret et al.~\cite[Theorems 1.1 and 1.2]{bougeretBridgedepthCharacterizesWhich2022} explicitly show that \fDeletionDistToG{} has a polynomial kernel if and only if $\GG$ has bounded minimal $\FF$-blocking sets.
    By \cref{thm:vc_extended_nonextended_blocking_sets_same} $\GG$ has bounded minimal $\FF$-blocking sets if and only if $\GG$ has bounded minimal extended $\FF$-blocking sets, concluding the proof in this case.

    Now, any family $\FF$ that contains $K_1$ or $K_2$ collapses to one of the two previous cases when making it minor-minimal.
    Hence, for the rest of the proof assume that each graph of $\FF$ has at least three vertices.
    First, consider that $\GG$ has bounded elimination distance to the class of $\FF$-minor free graphs.
    Then, by~\cite[Theorem 1]{bougeretKernelizationDichotomiesHitting2026STACS}, and the fact that \fDeletion{} admits a polynomial kernel parameterized by the solution size when $\FF$ contains a planar graph~\cite{fominPlanarFdeletionApproximation2012,guptaLosingTreewidthSeparating2019},
    there is a polynomial kernel for \fDeletionDistToG{}.
    Moreover, by \cref{thm:bounded_elim_dist_bounds_extended_blocking_sets}, $\GG$ has bounded minimal extended $\FF$-blocking sets.

    Now, consider that $\GG$ has unbounded elimination distance to the class of $\FF$-minor free graphs.
    Then, by \cref{thm:$b$-biconnected_graphs_necklaces_exist}, $\GG$ contains an $\FF$-necklace $N_\ell$ of length $\ell$ for all $\ell \geq 1$.
    By \cref{thm:biconnected_graphs_large_extended_blocking_sets} the necklace $N_\ell$ has a good minimal extended $\FF$-blocking set $(X,Y)$ with $|X| = \ell$ and $|Y| = 2$.
    Therefore, $\GG$ has unbounded minimal extended $\FF$-blocking sets.
    Moreover, there is no polynomial kernel by the arguments of~\cite[Theorem 2]{dekkerKernelizationFeedbackVertex2024}.
\end{proof}
 
\section{Parameterizing by the Distance to Bounded-Pathwidth Graphs}
\label{sec:applications}
In this section, we exhibit new kernelization lower bounds that can be obtained using our framework.
The parameter of interest is the deletion distance to graphs of bounded pathwidth.
Recall that the graph class $\GG_{\pw \leq \eta}$ contains exactly the graphs with pathwidth at most $\eta$.

Although our overall result will be slightly stronger, let us mention two important corollaries of it here first.
Recall that \pwEtaDeletion{} is \fDeletion{} where $\FF$ consists of all minor-minimal graphs with pathwidth $\eta+1$.

\begin{restatable}{corollary}{thmPWEtaEtaPlusOne}
	\label{corollary:pwEtaEtaPlusOne}
	Let $\eta \geq 1$ be an integer.
	Then, \pwEtaDeletionEtaPlusOne{} admits no polynomial kernel, unless $\np \subseteq \conppoly$.
\end{restatable}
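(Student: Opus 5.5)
The plan is to apply \cref{thm:lb_from_extended_blocking_sets} with $\GG = \pwLeqEtaClass[\eta+1]$. Let $\FF$ be the (finite, minor-minimal) family of minor-minimal graphs of pathwidth $\eta+1$. The class $\pwLeqEtaClass[\eta+1]$ is closed under disjoint union, since the pathwidth of a disjoint union is the maximum over its components; so $c \cdot G \in \GG$ whenever $G \in \GG$. Every graph in $\FF$ is connected: if some $F \in \FF$ were disconnected, one of its components would already have pathwidth $\eta+1$, contradicting minor-minimality. Also every $F \in \FF$ has at least two vertices, since $\eta \geq 1$ forces $\pw(F) \geq 2$. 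It therefore suffices to exhibit, for every $\ell \geq 1$, a graph of pathwidth at most $\eta+1$ with a good minimal extended $\FF$-blocking set $(X,Y)$ with $|Y| \in \{0,2\}$ and $|X| + |Y|/2 \geq \ell$. Given such graphs, \cref{thm:lb_from_extended_blocking_sets} rules out compressions of size $O(|M|^{\ell-\varepsilon})$ for every $\ell$, hence any polynomial kernel.

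For the gadget, I use \cref{thm:nonBiconnectedPW} with $Y = \emptyset$. The key point is that $\FF$ contains a tree $T$ on at least three vertices. The trees of pathwidth exactly $\eta+1$ that are minor-minimal among trees (built recursively by joining three copies of the previous obstruction at a new centre) have the property that every proper minor has pathwidth at most $\eta$; hence they belong to $\FF$. Moreover, their smallest leaf blocks are single edges ($K_2$), and these are the smallest leaf blocks among all graphs in $\FF$, since every leaf block of a connected graph with at least two vertices has at least two vertices. I then check that $T$ has a tidy leaf block. Take a leaf edge $cb$ with $b$ a leaf vertex. This needs a decomposition $\{c\},\{c,b\},\{b\}$ of width $1 \leq \pw(T)$, together with $b$ being pathwidth critical. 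Criticality holds because $T - b$ is a proper minor of $T$ and thus has pathwidth $\eta < \pw(T)$. Hence $\omega = \pw(T) = \eta+1$, and \cref{thm:nonBiconnectedPW} gives $G^\star_\ell$ of pathwidth at most $\eta+1$ with a minimal $\{T\}$-blocking set $X$ of size $\ell$.

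The main obstacle is transferring this from $\{T\}$ to the whole family $\FF$, i.e., showing that $X$ is a minimal $\FF$-blocking set of $G^\star_\ell$. After that, \cref{obs:minimal_blocking_set_good_extended_blocking_sets} makes $(X,\emptyset)$ good. I would use the extra guarantee in \cref{thm:nonBiconnectedPW}. For each $v \in X$, the set $S_v$ leaves a graph whose components after $\prune{2}{\cdot}$ are minors of $T$. The $2$-prune deletes nothing, since leaf blocks have at least two vertices, so every component of $G^\star_\ell - S_v$ is a minor of $T$. If such a component equals $T$ it contains a $T$-minor, which is excluded; so each component is a proper minor of $T$ and has pathwidth at most $\eta$. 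As $\FF$ is connected, $G^\star_\ell - S_v$ is then $\FF$-minor free. Since $\{T\} \subseteq \FF$, it follows that $\opt_{\FF}(G^\star_\ell) = \opt_{\{T\}}(G^\star_\ell)$, and each $S_v$ is an optimal $\FF$-solution. Every $\FF$-solution is a $\{T\}$-solution, so $X$ remains blocking for $\FF$, and the sets $S_v$ witness minimality. Combining this with the first paragraph yields the corollary.
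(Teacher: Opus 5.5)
Your proof is correct and follows the paper's route: take a tree $T \in \FF$ whose $K_2$ leaf block is tidy, lift the single-graph construction of \cref{thm:nonBiconnectedPW} to all of $\FF$, and then apply the blocking-set lower bound. The paper packages these steps as \cref{thm:pwModulatorBlockingSet} and \cref{main_thm:blocking_sets_kernel_lb}; you inline them, and since the leaf block is $K_2$ the lifting follows directly from the minor-minimality of $T$. One harmless slip: $\prune{2}{\cdot}$ does delete isolated vertices (they are $K_1$ leaf blocks), but such components are trivially minors of $T$, so your conclusion stands.
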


Note that \pwEtaDeletion{} when $\eta = 0$ is the same as \textsc{Vertex Cover}, and it is known that \textsc{Vertex Cover} admits no polynomial kernel when parameterized by the distance to $\GG_{\pw \leq 2}$~\cite{cyganHardnessLosingWidth2014}, unless $\np \subseteq \conppoly$.
On the other hand, when parameterizing by the distance to $\GG_{\pw \leq 1}$, \textsc{Vertex Cover} does admit a polynomial kernel~\cite{JansenVertexCoverKernelization2013}.
Moreover, \pwEtaDeletion{} parameterized by the distance to $\GG_{\pw \leq \eta}$ admits a polynomial kernel~\cite{fominPlanarFdeletionApproximation2012} since the corresponding family $\FF$ contains a planar graph, and $\GG_{\pw \leq \eta}$ is $\FF$-minor free.\footnote{Since the input contains a modulator to $\GG_{\pw \leq \eta}$, the randomized step in the kernel of Fomin et al.~\cite{fominPlanarFdeletionApproximation2012} can be omitted. If no modulator is supplied, a constant-factor approximate modulator can be computed deterministically in polynomial time using the algorithm of Gupta et al.~\cite{guptaLosingTreewidthSeparating2019}.}
Hence, \cref{corollary:pwEtaEtaPlusOne} together with these results from the literature yields a tight dichotomy for the problem, which we state in \cref{thm:pwEtaDichotomy}.

\thmPwEtaDichotomy*{}

Interestingly, exactly the same dichotomy where pathwidth is replaced with treewidth for the \twEtaDeletion{} problem has been previously established when combining the lower bound of Cygan et al.~\cite{cyganHardnessLosingWidth2014} with the known upper bounds~\cite{JansenVertexCoverKernelization2013,fominPlanarFdeletionApproximation2012}.
Moreover, we obtain the following lower bound result for all connected families $\FF$ containing graphs with at least two vertices each.

\thmPWMaxPlusOneLB*{}

Also recall that, in general, one cannot strengthen the result in \cref{corollary:paramDistPW} to a lower bound parameterized by the distance to graphs with pathwidth at most $\max_{F \in \FF} \pw(F)$, at least when assuming $\np \not\subseteq \conppoly$.
This follows from the work of Greilhuber and Sharma~\cite{DBLP:conf/mfcs/GreilhuberS26}.
Contrasting this, it may be possible to improve the lower bound to rule out polynomial kernels parameterized by the distance to graphs with pathwidth at most $\min_{F \in \FF} \pw(F) + 1$.

For our proof, we begin by considering the case that $\FF = \{H\}$ for a single non-biconnected graph $H$.
In this case, we can show that a simplification of a construction due to Donkers and Jansen~\cite{donkersTuringKernelizationDichotomy2021} yields a lower bound.
Our contribution here is a detailed analysis of the pathwidth of the provided construction.
Then, we consider biconnected families $\FF$ in \cref{sec:biconnected_families}.
In this case, we use a simple observation about necklaces.
Later, we combine these results to obtain lower bounds for connected families $\FF$ in \cref{sec:pw_lb_families}.

We expect that it should also be possible to obtain \cref{corollary:paramDistPW} when $\mathcal{F}$ contains graphs that are not connected.
Donkers and Jansen~\cite{donkersTuringKernelizationDichotomy2021} are able to handle disconnected families $\FF$ in which every graph of $\FF$ contains a connected component on at least three vertices.
However, this requires some additional nontrivial arguments, and as we could not use \cref{thm:lb_from_extended_blocking_sets} for this scenario, the proof would not decompose as nicely.
Therefore, we stick to connected families in this work.

\subsection{Families Containing a Single Non-Biconnected Graph}
\label{sec:non_biconnected_families}

Let us first consider the case that $\FF = \{H\}$ for a single graph $H$ that is not biconnected.
In this case, we can show that there is no polynomial kernel parameterized by the distance to graphs with pathwidth $\pw(H) + 1$.
In fact, in some cases there is not even a polynomial kernel parameterized by the distance to graphs with pathwidth $\pw(H)$.

To show this, we provide a minor adaptation of a construction due to Donkers and Jansen~\cite{donkersTuringKernelizationDichotomy2021} and give a tight analysis of the pathwidth of the construction.
First, let us restate their result.

\begin{theorem}[See {\cite[Theorem 1]{donkersTuringKernelizationDichotomy2021}}]
	Let $\FF$ be a finite set of graphs, such that each graph in $\FF$ has a connected component on at least three vertices.
	Let $\mathcal{G}$ be the class of graphs with treewidth at most $\min_{F \in \FF} \tw(F)$.
	Then, \fDeletionDistToG{} admits no polynomial kernel, unless $\np \subseteq \conppoly$.
\end{theorem}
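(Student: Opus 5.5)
This is the theorem of Donkers and Jansen, restated as a cited result. Within our framework I would reprove it for connected families $\FF$ by applying \cref{main_thm:blocking_sets_kernel_lb} to $\GG = \GG_{\tw \le \tau}$ with $\tau = \min_{F\in\FF}\tw(F)$. This class is closed under disjoint union, so it suffices to show the following: for every $\ell$ there is a graph of treewidth at most $\tau$ with a minimal $\FF$-blocking set of size $\ell$. \Cref{main_thm:blocking_sets_kernel_lb} then rules out compressions of size $O(|M|^{\ell-\varepsilon})$ for every $\ell$, hence polynomial kernels. For disconnected families I would not reprove anything and would simply defer to the original argument.

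For the construction, first make $\FF$ minor-minimal and pick $F\in\FF$ with $\tw(F)=\tau$. If $F$ is biconnected on at least three vertices, I would instead take an $\{F\}$-necklace. Its blocks are copies of $F$ and single edges, so its treewidth is $\tw(F)$. \Cref{thm:biconnected_graphs_large_extended_blocking_sets} gives it large good extended blocking sets, and \cref{thm:lb_from_extended_blocking_sets} then applies in place of \cref{main_thm:blocking_sets_kernel_lb}.

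If $F$ is not biconnected, I would follow the Donkers--Jansen chain. Take a smallest leaf block $L$ of $F$ with cut vertex $c$, and glue $\ell$ copies of $F$ in a path-like fashion. Each consecutive pair shares the part playing the role of the leaf block, so that deleting one designated vertex $x_i$ per copy kills every copy, while the interfaces overlap enough to force the behaviour below. The required property is this: every optimal solution has size $\ell$, and no optimal solution contains all of $X=\{x_1,\dots,x_\ell\}$. For each $i$, however, there is an optimal solution containing $X\setminus\{x_i\}$, obtained by shifting one deletion to a neighbouring interface vertex. This makes $X$ a minimal blocking set. Treewidth stays at $\tau$ because the construction is a tree-like gluing of copies of $F$ along single vertices or along subgraphs of $L$, and treewidth does not increase under clique-sums of size at most one.

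The main obstacle is verifying that the shifted solutions really are $\FF$-minor free. Every copy has lost a vertex, but new $F$-minors could still arise that span several copies. I would handle this with \cref{thm:kill_minors_by_leaf_blocks} and the prune operation. After deleting $S_v$, each dangling piece attached at a cut vertex has fewer than $|V(L)|$ vertices, so it contains no leaf block of any graph in $\FF$ and can be pruned away. What remains is a minor of a proper subgraph of $F$, which is $\FF$-minor free by minor-minimality.

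The choice of $F$ has to be coordinated across the whole family, using a graph whose smallest leaf block is smallest among all graphs of $\FF$. This matters because \cref{thm:kill_minors_by_leaf_blocks} must hold for every member of $\FF$ simultaneously. I expect reconciling this choice with the treewidth requirement $\tw(F)=\tau$ to be the delicate point. If the two choices conflict, I would fall back on the original proof of Donkers and Jansen.
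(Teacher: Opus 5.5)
\textbf{There is a genuine gap: your reproof does not stand on its own, and its biconnected branch fails as written.} For comparison, the paper gives no proof of this statement. It imports it from Donkers and Jansen and only re-derives a simplified single-graph chain (\cref{thm:nonBiconnectedPW}). You also fall back on that citation for disconnected families and whenever your choices of $F$ conflict. In the biconnected branch, \cref{thm:biconnected_graphs_large_extended_blocking_sets} requires every graph of $\FF$ to be biconnected on at least three vertices. Goodness (\cref{def:good_minimal_extended_blocking_sets}) requires every leaf block of every member of $\FF$ to have at least three vertices, including members of larger treewidth.

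Take $\FF = \{C_{10}, K_4^{+}\}$, where $K_4^{+}$ is $K_4$ with a pendant edge. This family is minor-minimal, and $\tau = 2$ is attained only by $C_{10}$. The necklace's extended blocking set is then not good, and this is not a mere formality:
\begin{itemize}
\item In the proof of \cref{thm:lb_from_extended_blocking_sets}, the universe gadget comes from \cref{thm:univ_gadget_non_biconnected} applied to $K_4^{+}$.
\item Its optimal solution through $v$ is forced to be $\{v,w\}$.
\item So for $i = \max S_j \notin W$, a surviving vertex of $L_i$ lies in a $K_4$.
\item When the element of $W \cap S_j$ indexes a middle bead of the necklace, both vertices of $Y_j$ survive, so the forward direction produces a $K_4^{+}$.
\end{itemize}
Excluding high-treewidth members by treewidth does not help, because $G^\star$ contains the selection gadgets, which have arbitrary treewidth. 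The paper's \cref{thm:pwModulatorBlockingSet} avoids this only by choosing $H$ with a globally smallest leaf block. That choice gives $\max_F$ instead of $\min_F$ (cf.\ \cref{corollary:paramDistPW}). A biconnected minimum-treewidth $H$ in a mixed family therefore needs a different gadget, such as a plain minimal blocking set in a treewidth-$\tau$ graph, or the original argument.

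\textbf{In the non-biconnected branch, the conflict you fear disappears with one observation you are missing.} A member $F'$ with $\tw(F') > \tau$ is never a minor of a graph of treewidth $\tau$. So it suffices to choose $H$ with a smallest leaf block among the members of treewidth exactly $\tau$. Then take $G^\star_\ell$ from \cref{thm:nonBiconnectedPW}.
\begin{itemize}
\item All blocks of $G^\star_\ell$ are copies of blocks of $H$, so $\tw(G^\star_\ell) = \tau$.
\item The prune argument in the proof of \cref{thm:pwModulatorBlockingSet}, with the treewidth exclusion, makes each $G^\star_\ell - S_v$ free of $\FF$-minors.
\item Hence $X$ is a minimal $\FF$-blocking set, and \cref{main_thm:blocking_sets_kernel_lb} applies, since plain blocking sets need no goodness.
\end{itemize}
You should import that construction rather than the chain you sketch, which does not match it:
\begin{itemize}
\item In the real construction the optimum is $3\ell-2$, not $\ell$.
\item The blocking property comes from $2\ell-1$ disjoint copies of $H$ in $G^\star_\ell - X$, not from one deletion per copy.
\item Gluing copies along subgraphs of $L$ is not a clique-sum of order at most one, so your treewidth argument does not cover the chain as you describe it.
\end{itemize}
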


Although Donkers and Jansen do not consider the distance to graphs with bounded pathwidth as their parameter, the linear nature of their construction makes it a prime candidate for also showing lower bounds for this parameter.
First, we introduce some definitions we will need for the analysis of their construction.

\defPWCritical*{}

\defTidyLeafBlock*{}

Then, we need a further notion of Donkers and Jansen~\cite{donkersTuringKernelizationDichotomy2021}, which is later needed to obtain the lower bound for families containing more than one graph.

\defPrune*{}

We also need the following elementary lemma.
\begin{lemma}
	\label{thm:pathwidth_starting_ending_decomp}
	Let $G$ be a graph, and $u,v \in V(G)$ two (not necessarily distinct) vertices of $G$.
	Then, there is a path decomposition $\mathcal{X} = X_1,\dots,X_t$ of $G$, such that $X_1 = \{u\}$ and $X_t = \{v\}$, and the width of $\mathcal{X}$ is at most $\pw(G) + 1$. 
\end{lemma}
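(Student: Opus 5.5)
We take an optimal path decomposition $\mathcal{Y} = Y_1,\dots,Y_s$ of $G$ of width $\pw(G)$ and make two modifications: we add $u$ to every bag before the one where it first appears, and we add $v$ to every bag after the one where it last appears. We then attach singleton bags at both ends. This is the natural plan because adding a single vertex to a contiguous range of bags raises the width by at most one there. The only care needed is that the ranges for $u$ and $v$ do not overlap in a way that raises the width by two.

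Concretely, let $i_u$ be the first index with $u \in Y_{i_u}$ and $j_v$ the last index with $v \in Y_{j_v}$. First we treat the case $i_u \le j_v$. Define $X'_\ell = Y_\ell \cup \{u\}$ for $\ell < i_u$, $X'_\ell = Y_\ell$ for $i_u \le \ell \le j_v$, and $X'_\ell = Y_\ell \cup \{v\}$ for $\ell > j_v$. Each bag gains at most one vertex, so the width is at most $\pw(G)+1$. The occurrences of $u$ form the interval from $1$ to its last original occurrence, and likewise $v$ occupies a suffix interval, so the interval property holds. Edge coverage is inherited from $\mathcal{Y}$. Finally we prepend the bag $\{u\}$ and append the bag $\{v\}$. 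The set of bags containing $u$ is still an interval because $u \in X'_1$, and similarly for $v$. The empty-graph corner case is excluded since $u,v$ exist, and when $u = v$ the same construction works.

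Next we treat the case $i_u > j_v$, where $v$ disappears before $u$ appears. Here we reverse the decomposition: $Y_s,\dots,Y_1$ is again a path decomposition of the same width. In the reversed order the first occurrence of $u$ is at the position of its last original occurrence, and the last occurrence of $v$ is at the position of its first original occurrence. Since the occurrence interval of $v$ lies entirely before that of $u$ in the original order, in the reversed order $u$'s interval comes first. So $u$'s first index is at most $v$'s last index, and we are back in the first case.

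The only potential obstacle is the overlap issue handled by the case split. If both the prefix extension of $u$ and the suffix extension of $v$ hit the same bag, that bag's width would grow by $2$. Ensuring $i_u \le j_v$, via reversal if necessary, rules this out. Everything else is routine verification of the path decomposition axioms.
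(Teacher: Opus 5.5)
Your proof is correct and follows essentially the same route as the paper: take an optimal path decomposition, extend $u$ over a prefix and $v$ over a suffix on disjoint ranges of bags (reversing the decomposition if needed), and then add the singleton bags $\{u\}$ and $\{v\}$ at the two ends. The only cosmetic difference is the normalization: the paper compares the first occurrences of $u$ and $v$ and extends $v$ from its first occurrence, whereas you compare the first occurrence of $u$ with the last occurrence of $v$, and either choice ensures that no bag gains more than one vertex.
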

\begin{proof}
	Consider a path decomposition $Y_1,\dots,Y_{t'}$ of $G$ of width $\pw(G)$.
	Let $i$ be the lowest integer such that $u \in Y_i$, and $j$ be the lowest integer such that $v \in Y_j$.
	
	Without loss of generality, we have $i \leq j$.
	Indeed, if this is not the case we can swap $u$ and $v$, and reverse the obtained path decomposition at the end.
	Add vertex $u$ to each bag $Y_\ell$ with $\ell \in [i]$, and add $v$ to each bag $Y_\ell$ with $\ell \in \range[j+1]{t'}$.
	Clearly, we have added at most one new vertex to each bag, and now bag $Y_1$ contains $u$, and bag $Y_{t'}$ contains $v$.
	We can then build the desired decomposition $\mathcal{X} = \{u\},Y_1,\dots,Y_{t'},\{v\}$.
\end{proof}

Now, we are ready to prove the main result of this subsection.

\thmOneHPW*{}
\begin{proof}
	We first repeat the construction of Donkers and Jansen~\cite{donkersTuringKernelizationDichotomy2021} with a slight modification that we elaborate on.
	Apart from this modification, we add an analysis of the pathwidth of the resulting graph to the proof.

	\proofsubparagraph{Graph $\mathbf{H}$.}

	If one of the smallest leaf blocks of $H$ is a tidy leaf block, then let $L$ be this leaf block, $c$ the cut vertex of $H$ contained in leaf block $L$, and $b \in V(L) \setminus \{c\}$ be the corresponding pathwidth critical vertex of $H$.
	Otherwise, let $L$ be an arbitrary leaf block of $H$ with the fewest vertices, $c$ the cut vertex contained in $L$, and $b$ an arbitrary vertex of $V(L) \setminus \{c\}$.
	In either case, let $a$ be some vertex of $H - L$.
	Such a vertex exists because $H$ is not biconnected.
	Let us also denote $R = H - (V(L) \setminus \{c\})$.
	See \cref{fig:H_decomposition} for an illustration.

	\begin{figure}[t]
		\centering
		\begin{subfigure}[b]{0.18\linewidth}
			\centering
			\includegraphics[page=1]{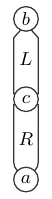}
			\subcaption{The graph $H$.}
			\label{fig:H_decomposition}
		\end{subfigure}
		\hfill
		\begin{subfigure}[b]{0.34\linewidth}
			\centering
			\includegraphics[page=2]{figures/blocks.pdf}
			\subcaption{Gadget $B_1$.}
			\label{fig:block_B1}
		\end{subfigure}
		\hfill
		\begin{subfigure}[b]{0.36\linewidth}
			\centering
			\includegraphics[page=3]{figures/blocks.pdf}
			\subcaption{Gadget $B_2$.}
			\label{fig:block_B2}
		\end{subfigure}
		\caption{The graph $H = R \cup L$ and the two gadget types $B_1$ (blue) and $B_2$ (orange) of the construction, following Donkers and Jansen~\cite{donkersTuringKernelizationDichotomy2021}; the shading matches \cref{fig:necklace_chain}, and leaf blocks are drawn as pentagons pointing toward their free vertex. In~(b) the copy $H_1$ lies on the chaining axis $u$--$s$--$w$ (the order in which $B_1$ is chained later) and $H_2$ branches off at $s = b_2 = c_1$. In~(c) the three leaf blocks lie on the axis $d$--$e$--$f$--$g$ (with $L_3$ the extra copy of $L$, $e = c_3 = c_1$ and $f = b_3 = c_2$) and the subgraphs $R_1$, $R_2$ dangle from $e$, $f$.}
		\label{fig:blocks}
\end{figure}

	Now, we establish important properties about the pathwidth of $L$.
	\begin{claim}
		There is a path decomposition $\LL_{c,b}$ of $L$ that begins with bag $\{c\}$, ends with bag $\{b\}$, and has width at most $\omega$.
	\end{claim}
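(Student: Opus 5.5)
The claim splits into two cases, depending on how $L$, $c$, and $b$ were chosen in the construction of graph $\mathbf{H}$.

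\emph{Tidy case.} Suppose $H$ has a tidy leaf block. Then $L$ was chosen to be exactly the smallest leaf block that witnesses tidiness, and $c$ and $b$ are the vertices from that definition. By the definition of a tidy leaf block, $L$ has a path decomposition $X_1,\dots,X_t$ with $X_1=\{c\}$, $X_t=\{b\}$, and width at most $\pw(H)$. In this case $\omega=\pw(H)$, so we set $\LL_{c,b}$ to be this decomposition and are done. The only thing to check is that the choice of $L$, $c$, and $b$ matches the definition word for word, in particular that $c$ is the cut vertex of $H$ in $L$ and $b \neq c$. This holds by construction.

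\emph{Non-tidy case.} Otherwise $\omega=\pw(H)+1$, and $c$ and $b$ are arbitrary, with $b\neq c$ and both in $L$. We apply \cref{thm:pathwidth_starting_ending_decomp} to the graph $L$ with $u=c$ and $v=b$. This gives a path decomposition of $L$ whose first bag is $\{c\}$, whose last bag is $\{b\}$, and whose width is at most $\pw(L)+1$. It remains to show $\pw(L)\le\pw(H)$. This follows because $L$ is a subgraph of $H$: restricting every bag of an optimal path decomposition of $H$ to $V(L)$ gives a valid path decomposition of $L$ of no larger width. Hence the width is at most $\pw(H)+1=\omega$.

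Neither case contains a real obstacle. The argument consists only of the subgraph monotonicity of pathwidth and a direct use of \cref{thm:pathwidth_starting_ending_decomp}. The one point to be careful about is bookkeeping: the tidy case must use exactly the leaf block and vertices fixed in the construction, which it does by the way $L$, $c$, and $b$ were selected.
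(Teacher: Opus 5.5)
Your proposal is correct and follows essentially the same argument as the paper. The paper also settles the tidy case directly from the definition of a tidy leaf block, and the non-tidy case via \cref{thm:pathwidth_starting_ending_decomp}. You additionally make explicit the step $\pw(L) \leq \pw(H)$ (subgraph monotonicity), which the paper leaves implicit in writing $\pw(L)+1 \leq \omega$.
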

	\begin{claimproof}
		If $L$ is a tidy leaf block, then there is such a path decomposition of width at most $\pw(H) \leq \omega$ by definition.
		Otherwise, by \cref{thm:pathwidth_starting_ending_decomp} there is such a path decomposition of width at most $\pw(L) + 1 \leq \omega$.
	\end{claimproof}

	Next, we establish a similar claim for $R$.
	\begin{claim}
		There is a path decomposition $\RR_{a,c}$ of $R$ that begins with bag $\{a\}$, ends with bag $\{c\}$, and has width at most $\omega$.
	\end{claim}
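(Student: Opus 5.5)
The claim follows directly from \cref{thm:pathwidth_starting_ending_decomp}, applied to the graph $R$ with the two designated vertices $a$ and $c$. Both vertices lie in $R$: the cut vertex $c$ is kept in $R = H - (V(L) \setminus \{c\})$ by definition, and $a$ was chosen as a vertex of $H - L$, so it is not removed. The lemma gives a path decomposition $\RR_{a,c}$ of $R$ that starts with $\{a\}$, ends with $\{c\}$, and has width at most $\pw(R) + 1$. It therefore suffices to show $\pw(R) + 1 \leq \omega$, splitting according to how $\omega$ was defined.

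\emph{Case 1: $H$ has no tidy leaf block.} Then $\omega = \pw(H) + 1$. Since $R$ is a subgraph of $H$ and pathwidth is monotone under taking subgraphs, $\pw(R) \leq \pw(H)$. Hence $\pw(R) + 1 \leq \omega$.

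\emph{Case 2: $H$ has a tidy leaf block.} Then $L$ is that tidy leaf block, $b \in V(L) \setminus \{c\}$ is pathwidth critical for $H$, and $\omega = \pw(H)$. Since $b \neq c$ and $b \in V(L)$, the vertex $b$ is deleted when forming $R$, so $R$ is a subgraph of $H - b$. Monotonicity and pathwidth criticality of $b$ then give
\[
\pw(R) \leq \pw(H - b) < \pw(H).
\]
As pathwidth is an integer, $\pw(R) \leq \pw(H) - 1$, and so $\pw(R) + 1 \leq \pw(H) = \omega$.

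The only step that is not pure bookkeeping is Case 2. There the additive $+1$ from \cref{thm:pathwidth_starting_ending_decomp} must be absorbed, and this works precisely because the pathwidth critical vertex $b$ sits on the $L$-side and is therefore absent from $R$. This is the reason the tidy-leaf-block definition requires $b$ to be pathwidth critical and distinct from $c$.
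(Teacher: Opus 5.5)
Your proof is correct and is essentially the paper's argument: apply \cref{thm:pathwidth_starting_ending_decomp} to $R$, then absorb the extra $+1$ using that the pathwidth-critical vertex $b$ lies outside $R$. The only difference is cosmetic: the paper splits cases on whether $b$ is pathwidth critical rather than on whether $H$ has a tidy leaf block. This amounts to the same thing, because having a tidy leaf block forces the chosen $b$ to be critical, and your split follows the definition of $\omega$ slightly more directly.
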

	\begin{claimproof}
		By \cref{thm:pathwidth_starting_ending_decomp} such a path decomposition of width at most $\pw(R) + 1$ exists.
		If $b$ is pathwidth critical, then, because $b \notin V(R)$ we have $\pw(R) < \pw(H)$, and therefore $\pw(R) + 1 \leq \pw(H) \leq \omega$.
		Otherwise, if $b$ is not pathwidth critical, then we directly have $\pw(R)+1 \leq \pw(H)+1 = \omega$.
	\end{claimproof}

	Using an analogous proof we also get the following.
	\begin{claim}
		There is a path decomposition $\RR_{c}$ of $R$ of width at most $\omega$, that starts with bag $\{c\}$, and ends with bag $\{c\}$.
	\end{claim}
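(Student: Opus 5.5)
The claim asks for a path decomposition of $R$ of width at most $\omega$ that starts and ends with the bag $\{c\}$. I would derive it from \cref{thm:pathwidth_starting_ending_decomp}, exactly as for $\RR_{a,c}$, with the two endpoint vertices chosen equal.

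The first step is to apply \cref{thm:pathwidth_starting_ending_decomp} to $R$ with $u = v = c$. The lemma explicitly allows the two vertices to coincide, so this gives a path decomposition $\RR_c$ of $R$ with first bag $\{c\}$, last bag $\{c\}$, and width at most $\pw(R)+1$.

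It then remains to show $\pw(R) + 1 \leq \omega$. I would use the same case split as in the previous claim, which does not depend on the choice of endpoints.
\begin{itemize}
\item If $b$ is pathwidth critical for $H$: since $b \in V(L) \setminus \{c\}$, we have $b \notin V(R)$. Hence $R$ is a subgraph of $H - b$, and pathwidth is monotone under taking subgraphs, so $\pw(R) \leq \pw(H-b) < \pw(H)$. This gives $\pw(R)+1 \leq \pw(H) \leq \omega$.
\item Otherwise, $L$ was not chosen as a tidy leaf block, so $\omega = \pw(H)+1$. Since $R$ is a subgraph of $H$, $\pw(R)+1 \leq \pw(H)+1 = \omega$.
\end{itemize}

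The only point needing care is the case distinction on $\omega$. In the tidy case, $b$ was chosen pathwidth critical, so the first bullet applies whenever $\omega = \pw(H)$. In every other case $\omega = \pw(H)+1$ and the second bullet applies. So the bound holds in all cases and no real obstacle remains.
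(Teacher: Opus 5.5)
Your proposal is correct and matches the paper's argument. The paper proves this claim ``analogously'' to the one for $\RR_{a,c}$: it applies \cref{thm:pathwidth_starting_ending_decomp} (here with $u = v = c$) to get width at most $\pw(R)+1$, then bounds $\pw(R)+1 \leq \omega$ by the same case split on whether $b$ is pathwidth critical. Your justification that a non-critical $b$ forces $\omega = \pw(H)+1$ (a tidy leaf block would have been chosen with a critical $b$) is a point the paper leaves implicit.
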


	We can also obtain a path decomposition of $H$ such that vertex $b$ is in each bag.
	\begin{claim}
		There is a path decomposition $\HH_b$ of $H$ that begins with bag $\{b\}$, ends with bag $\{b\}$, and has width at most $\omega$.
	\end{claim}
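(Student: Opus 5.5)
The plan is to assemble $\HH_b$ from the decompositions of $L$ and $R$ from the previous claims, exploiting that $H$ is the union of $L$ and $R$ glued at the single vertex $c$. Two cases arise.

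If $b$ is pathwidth critical (in particular, in the tidy case), then $\pw(H - b) = \pw(H) - 1$. Take an optimal path decomposition of $H - b$ and add $b$ to every bag. This is a decomposition of $H$ of width at most $\pw(H) \leq \omega$, since every edge at $b$ is covered because $b$ lies in all bags. Finally, prepend and append the singleton bag $\{b\}$; the occurrences of $b$ still form an interval.

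If $b$ is not pathwidth critical, then $\omega \geq \pw(H)$, and we only know $\pw(H-b) \leq \pw(H)$. In the non-tidy case, however, $\omega = \pw(H)+1$, so take an optimal decomposition of $H - b$ (width at most $\pw(H)$), add $b$ to every bag to get width at most $\pw(H)+1 = \omega$, and again pad with $\{b\}$ at both ends.

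It remains to rule out the tidy case with $b$ not critical. This cannot happen, since in the tidy case $b$ is chosen pathwidth critical by definition. Hence in every case we obtain $\HH_b$ of width at most $\omega$, starting and ending with $\{b\}$.

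The only delicate point is checking that the case split matches the definition of $\omega$. When $H$ has a tidy leaf block we set $\omega = \pw(H)$ and chose $b$ critical; otherwise $\omega = \pw(H)+1$, and the argument above needs no criticality. No further obstacle is expected.
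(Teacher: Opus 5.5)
Your argument is correct and matches the paper's proof: take an optimal path decomposition of $H-b$, add $b$ to every bag, and pad with $\{b\}$ at both ends. In the tidy case $b$ is pathwidth critical, so the width is $\pw(H)\le\omega$; otherwise $\omega=\pw(H)+1$ absorbs the extra vertex, and you state this case distinction more explicitly than the paper does. Your opening sentence about assembling $\HH_b$ from the decompositions of $L$ and $R$ is never used and can be dropped.
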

	\begin{claimproof}
		We can create such a path decomposition by taking a path decomposition of $H - b$ of width $\pw(H - b)$.
		Then, we add vertex $b$ to all bags, and add a bag $\{b\}$ to the start of the decomposition, and a bag $\{b\}$ to the end of the decomposition.
		Clearly, the resulting decomposition $\HH_b$ has width at most $\pw(H - b) + 1$.
		If $b$ is pathwidth critical for $H$, then $\pw(H - b) + 1 = \pw(H) \leq \omega$.
		Otherwise, $\pw(H - b) + 1 = \pw(H) + 1 = \omega$.
	\end{claimproof}

	\proofsubparagraph{Gadgets of type $\mathbf{1}$.}
	The construction proceeds by defining two types of gadgets.
	The first type of gadget, $B_1$, is created as follows.
	Begin with two disjoint copies of graph $H$; denote them as $H_1$ and $H_2$.
	For $Q \in \{1,2\}$ the copy of $c$ in $H_Q$ is denoted as $c_Q$; the copy of $b$ is denoted as $b_Q$, and the copy of $a$ is denoted as $a_Q$.
	Then, the gadget $B_1$ is created by identifying the vertices $b_2$ and $c_1$; call the resulting vertex $s$.
	Three further vertices of this gadget play a special role later, and are given further names.
	The vertex $a_1$ is called $u$, vertex $b_1$ is called $w$, and vertex $c_2$ is called $t$.
	See \cref{fig:block_B1} for an illustration.

	The construction of gadgets of type $1$ is the only place where our construction differs slightly from the one of Donkers and Jansen: they have an additional copy of the leaf block $L$ attached to vertex $c_2$.
	However, as we show, this additional leaf block is not actually necessary for our result.

	Now, we analyze the pathwidth of gadgets of type $1$.
	\begin{claim}
		Graph $B_1$ has a path decomposition $\BB_1$ that starts with bag $\{u\}$, ends with bag $\{w\}$, and has width at most $\omega$.
	\end{claim}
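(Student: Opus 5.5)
The plan is to build $\BB_1$ by concatenating three of the path decompositions from the claims above, one for each of three pieces of $B_1$ that pairwise meet only in the vertex $s$.

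First, decompose $B_1$. The copy $H_1$ of $H$ is the union of its copy $R_1$ of $R$ and its copy $L_1$ of the leaf block $L$. These two share only the vertex $c_1 = s$. The copy $H_2$ is attached to $H_1$ only by identifying $b_2$ with $c_1$, so it also meets the rest of $B_1$ only in $s$. Hence $B_1 = R_1 \cup H_2 \cup L_1$, where these three subgraphs pairwise intersect exactly in $\{s\}$. Every edge of $B_1$ lies inside one of them, because $B_1$ is obtained from disjoint copies purely by the identification, and $H_1 = R_1 \cup L_1$ edge-wise.

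Next, take the three decompositions:
\begin{itemize}
\item the copy of $\RR_{a,c}$ for $R_1$, which starts with $\{a_1\} = \{u\}$ and ends with $\{c_1\} = \{s\}$;
\item the copy of $\HH_b$ for $H_2$, which starts and ends with $\{b_2\} = \{s\}$ and has $s$ in every bag;
\item the copy of $\LL_{c,b}$ for $L_1$, which starts with $\{c_1\} = \{s\}$ and ends with $\{b_1\} = \{w\}$.
\end{itemize}
Each of these has width at most $\omega$ by the preceding claims. Define $\BB_1$ as their concatenation in this order. Then $\BB_1$ starts with $\{u\}$, ends with $\{w\}$, and has width at most $\omega$.

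Finally, verify the path-decomposition axioms. Every vertex other than $s$ lies in exactly one of the three pieces, so its bags form an interval inside that piece's decomposition. Every edge is covered by the decomposition of the piece containing it. The only delicate point is that the bags containing $s$ must form one contiguous interval. In the $R_1$-part, $s$ lies in the final bag $\{s\}$ and, by the interval property of that decomposition, in a contiguous suffix of it. In the $H_2$-part, $s$ lies in every bag. In the $L_1$-part, $s$ lies in the initial bag $\{s\}$ and hence in a contiguous prefix of it. These three intervals are consecutive, so the occurrences of $s$ form one interval. This step is exactly why $\HH_b$ is needed, namely a decomposition of $H_2$ with $b_2$ in every bag, and it is where the claim about $\HH_b$ enters.
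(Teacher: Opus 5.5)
Your proposal is correct and matches the paper's proof: you concatenate the renamed copies of $\RR_{a,c}$ for $R_1$, $\HH_b$ for $H_2$, and $\LL_{c,b}$ for $L_1$, in that order, relying on $s$ being the only vertex shared by these pieces. Your explicit check that the bags containing $s$ form one contiguous interval is the detail the paper leaves as ``easy to confirm.''
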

	\begin{claimproof}
		Recall that $H_1$ contains a leaf block isomorphic to $L$ that contains $w$.
		Call this leaf block $L_1$, and call the graph obtained from $H_1$ by deleting the vertices $V(L_1)\setminus\{c_1\}$ graph $R_1$.
		Clearly, $L_1$ is the ``natural'' copy of $L$ in $H_1$, and $R_1$ the ``natural'' copy of $R$ in $H_1$.

		Let $\RR_1$ be a path decomposition of $R_1$ that begins with bag $\{u\}$ and ends with bag $\{s\}$ of width at most $\omega$.
		This is just the path decomposition $\RR_{a,c}$ where vertices are renamed.
		Let $\HH_2$ be a path decomposition of $H_2$ of width at most $\omega$ that begins with bag $\{s\}$, and ends with bag $\{s\}$.
		This is just the path decomposition $\HH_b$ where vertices are renamed.
		Finally, let $\LL_1$ be a path decomposition of $L_1$ that begins with bag $\{s\}$ and ends with bag $\{w\}$ of width at most $\omega$.
		This is just the path decomposition $\LL_{c,b}$ where vertices are renamed.

		Then, since $s$ is a cut vertex that disconnects $R_1 - s$, $L_1 - s$, and $H_2 - s$, it is easy to confirm that $\RR_1,\HH_2,\LL_1$ is the desired path decomposition $\BB_1$.
	\end{claimproof}

	\proofsubparagraph*{Gadgets of type $\mathbf{2}$.}
	The second type of gadget, $B_2$, has a slightly more complex construction.
	We again start by taking the disjoint union of two copies of $H$, denoted as $H_1$ and $H_2$ with the same convention for $c_Q,b_Q,a_Q$ (for $Q \in \{1,2\}$) we used for gadgets of the first type.
	Then, we add another copy of the graph $L$ to the construction.
	Let us denote the vertex $c$ of this copy as $c_3$ and vertex $b$ of this copy as $b_3$.
	The gadget $B_2$ is then created by identifying the vertices $c_3$ and $c_1$ into a new vertex called $e$ as well as by identifying the vertices $b_3$ and $c_2$ into a new vertex called $f$.
	We again have some additional special names for some further vertices.
	Call vertex $b_1$ vertex $d$ and vertex $b_2$ vertex $g$.
	See \cref{fig:block_B2} for an illustration.

	We again analyze the pathwidth of such a gadget.

	\begin{claim}
		There is a path decomposition $\mathcal{B}_2$ of $B_2$ that starts with bag $\{d\}$, ends with bag $\{g\}$, and has width at most $\omega$.
	\end{claim}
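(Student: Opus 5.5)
We build $\BB_2$ by concatenating path decompositions of the pieces of $B_2$ in the order in which they appear along the axis $d$--$e$--$f$--$g$, using the claims already established for $L$, $R$ and $H$. The structure of $B_2$ is as follows. $H_1$ splits at $e=c_1$ into its leaf block $L_1$, containing $d=b_1$, and $R_1$, containing $e$. The extra copy $L_3$ of $L$ runs from $e=c_3$ to $f=b_3$. $H_2$ splits at $f=c_2$ into $R_2$, containing $f$, and $L_2$, containing $g=b_2$. The vertices $e$ and $f$ are cut vertices separating these pieces: $L_1-e$, $R_1-e$ and the rest of the graph lie in different components of $B_2-e$, and similarly for $f$.

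We take the following decompositions, each of width at most $\omega$.
\begin{enumerate}
\item $\LL_1$: the reverse of $\LL_{c,b}$ for $L_1$, starting with $\{d\}$ and ending with $\{e\}$.
\item $\RR_1$: the decomposition $\RR_c$ for $R_1$, starting and ending with $\{e\}$.
\item $\LL_3$: the decomposition $\LL_{c,b}$ for $L_3$, starting with $\{e\}$ and ending with $\{f\}$.
\item $\RR_2$: the decomposition $\RR_c$ for $R_2$, starting and ending with $\{f\}$.
\item $\LL_2$: the decomposition $\LL_{c,b}$ for $L_2$, starting with $\{f\}$ and ending with $\{g\}$.
\end{enumerate}
Reversing a path decomposition preserves validity and width, so the first item is legitimate. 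We then set $\BB_2=\LL_1,\RR_1,\LL_3,\RR_2,\LL_2$.

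It remains to verify that $\BB_2$ is a path decomposition. Every edge of $B_2$ lies inside exactly one of the five pieces, so it is covered by that piece's decomposition. For the interval property, a vertex other than $e$ or $f$ belongs to exactly one piece. The vertex $e$ occurs only in $L_1$, $R_1$ and $L_3$, and these three consecutive decompositions meet in the bags $\{e\}$, so $e$ appears in a contiguous interval. The same argument applies to $f$ in $L_3$, $R_2$ and $L_2$. The first bag of $\BB_2$ is $\{d\}$, the last is $\{g\}$, and the width is the maximum over the pieces, hence at most $\omega$.

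The main point requiring care is using $\RR_c$ for $R_1$ and $R_2$ rather than $\RR_{a,c}$. Because $R$ hangs off the axis at its cut vertex rather than lying on it, a decomposition that starts and ends at $c$ is what makes the concatenation consistent. Its width bound $\pw(R)+1\le\omega$ is already given by the earlier claim about $\RR_c$, so no additional argument is needed.
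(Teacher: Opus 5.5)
Your proposal is correct and matches the paper's proof: the same five pieces are concatenated in the same order $\LL_1,\RR_1,\LL_3,\RR_2,\LL_2$, using renamed (and, for $L_1$, reversed) copies of $\LL_{c,b}$ and $\RR_c$. Your explicit check of edge coverage and of the interval property at $e$ and $f$ spells out what the paper leaves as ``easy to confirm.''
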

	\begin{claimproof}
		Let $L_1$ be the copy of $L$ in $H_1$ that contains vertex $d$, and $L_2$ be the copy of $L$ in $H_2$ that contains vertex $g$.
		Moreover, let $L_3$ be the copy of $L$ used in the creation of $B_2$ that contains $e$ and $f$.
		Similarly, let $R_1$ be $H_1 - (V(L_1) \setminus \{c_1\})$, and $R_2$ be $H_2 - (V(L_2) \setminus \{c_2\})$.

		Let $\LL_1$ be a path decomposition of $L_1$ starting with bag $\{d\}$, and ending with bag $\{e\}$; let $\LL_2$ be a path decomposition of $L_2$ starting with bag $\{f\}$, and ending with bag $\{g\}$; let $\LL_3$ be a path decomposition of $L_3$ starting with bag $\{e\}$, and ending with bag $\{f\}$, such that each of these three path decompositions has width at most $\omega$.
		These path decompositions exist since each of them is just the decomposition $\LL_{c,b}$ with renamed vertices and a possibly reversed order.

		Similarly, let $\RR_1$ be a path decomposition of $R_1$ starting and ending with bag $\{e\}$, and $\RR_2$ be a path decomposition of $R_2$ starting and ending with bag $\{f\}$, such that both decompositions have width at most $\omega$.
		These decompositions exist since they are just copies of the decomposition $\RR_{c}$ with renamed vertices.

		Then, it is easy to confirm that $\LL_1,\RR_1,\LL_3,\RR_2,\LL_2$ is a path decomposition of $B_2$ of width at most $\omega$.
	\end{claimproof}

	\proofsubparagraph*{The complete construction.}
	
	For each $\ell$ the construction works as follows.
	We create $\ell$ gadgets of type $1$, denoted $B_1^1,\dots,B_1^\ell$, and $\ell - 1$ gadgets of type $2$, denoted $B_2^1,\dots,B_2^{\ell - 1}$.
	For gadget $B_1^i$, let us denote vertex $u$ of the gadget as $u^i$, vertex $s$ as $s^i$, vertex $w$ as $w^i$, and vertex $t$ as $t^i$.
	Similarly, for gadget $B_2^i$ let us denote vertex $d$ as $d^i$, vertex $e$ as $e^i$, vertex $f$ as $f^i$, and vertex $g$ as $g^i$.
	For each $i \in [\ell -1]$, we identify vertex $w^i$ of gadget $B_1^i$ with vertex $d^{i}$ of gadget $B_2^i$; and we identify vertex $u^{i+1}$ of gadget $B_1^{i+1}$ with vertex $g^i$ of gadget $B_2^i$.
	Intuitively, this creates a chain of gadgets $B_1^1,B_2^1,B_1^2,\dots,B_2^{\ell-1},B_1^\ell$, where each two subsequent gadgets share exactly one vertex.
	See \cref{fig:necklace_chain} for an illustration.
	Call the constructed graph $G^\star$.

	\begin{figure}[t]
		\centering
		\includegraphics{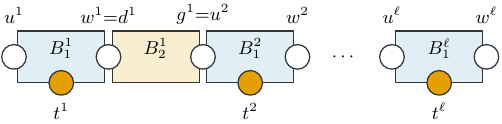}
		\caption{The chain of gadgets forming $G^\star$: gadgets of type $B_1$ are shaded blue and of type $B_2$ orange, and consecutive gadgets share exactly one vertex ($w^i = d^i$, or $g^i = u^{i+1}$), drawn as a circle glued onto both gadgets. The orange vertices form the set $X = \{t^i \mid i \in [\ell]\}$ considered in the blocking-set argument.}
		\label{fig:necklace_chain}
\end{figure}

	\proofsubparagraph*{The pathwidth and treewidth.}
	Now, we bound the pathwidth of $G^\star$.
	\begin{claim}
		We have $\pw(G^\star) \leq \omega$.
	\end{claim}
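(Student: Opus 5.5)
The plan is to concatenate the gadget decompositions along the chain. By the two preceding claims, every $B_1^i$ has a path decomposition $\BB_1^i$ of width at most $\omega$ whose first bag is $\{u^i\}$ and whose last bag is $\{w^i\}$. Likewise every $B_2^i$ has a path decomposition $\BB_2^i$ of width at most $\omega$ whose first bag is $\{d^i\}$ and whose last bag is $\{g^i\}$. I take the sequence
\[
\BB_1^1,\ \BB_2^1,\ \BB_1^2,\ \BB_2^2,\ \dots,\ \BB_2^{\ell-1},\ \BB_1^\ell
\]
and argue that it is a path decomposition of $G^\star$. Its width is then at most $\omega$, because no bag is changed.

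I would check the two conditions of \cref{def:pathwidth} in turn. \emph{Vertex and edge coverage.} Every vertex and edge of $G^\star$ belongs to exactly one gadget, apart from the shared vertices. No new edges are created by the identifications, so each edge is covered inside the decomposition of its gadget.

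\emph{Contiguity.} A vertex that occurs in a single gadget appears in an interval of that gadget's decomposition, so it is fine. The only vertices occurring in two gadgets are the glued ones, $w^i = d^i$ and $g^i = u^{i+1}$. Consider $w^i = d^i$. In $\BB_1^i$ it lies in a contiguous block of bags that ends at the final bag $\{w^i\}$. In $\BB_2^i$ it lies in a contiguous block that starts at the first bag $\{d^i\}$. These two blocks are adjacent in the concatenation, so their union is an interval. The same argument applies to $g^i = u^{i+1}$, using the last bag of $\BB_2^i$ and the first bag of $\BB_1^{i+1}$.

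Finally, I need that no other vertex is shared between gadgets. In particular, distinct gadgets have disjoint vertex sets except at the identified vertices, and each gadget shares a vertex only with its neighbours in the chain. This holds by construction, since consecutive gadgets share exactly one vertex and non-consecutive gadgets share none.

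I expect no real obstacle. The only care needed is the bookkeeping of which endpoint bag corresponds to which glued vertex: $w$ is the end of $B_1$ and $d$ is the start of $B_2$, while $g$ is the end of $B_2$ and $u$ is the start of $B_1$.
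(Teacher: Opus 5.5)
Your proposal is correct and follows essentially the same route as the paper: concatenate $\BB_1^1,\BB_2^1,\BB_1^2,\dots,\BB_2^{\ell-1},\BB_1^\ell$ and use that each glued vertex forms the last bag of one gadget's decomposition and the first bag of the next. Your explicit contiguity and coverage checks simply spell out the paper's one-line justification in more detail.
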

	\begin{claimproof}
		For each $i \in [\ell]$ and gadget $B_1^i$, let $\BB_1^i$ be a path decomposition of $B_1^i$ of width at most $\omega$ that begins with bag $\{u^i\}$, and ends with bag $\{w^i\}$; this is just the path decomposition $\BB_1$ with renamed vertices.
		Similarly, for each $i \in [\ell-1]$ and gadget $B_2^i$, let $\BB_2^i$ be a path decomposition of $B_2^i$ that begins with bag $\{d^i\}$ and ends with bag $\{g^i\}$ and has width at most $\omega$.
		This is just the path decomposition $\BB_2$ with renamed vertices.

		Then, we have that $\BB_1^1,\BB_2^1,\BB_1^2,\dots,\BB_2^{\ell-1},\BB_1^\ell$ is a path decomposition of $G^\star$ of width at most $\omega$.
		Indeed, for each $i \in [\ell-1]$ the last bag of $\BB_1^i$ is bag $\{w^i\}$, which is identical to the first bag $\{d^i\}$ of $\BB_2^i$ since vertices $w^i$ and $d^i$ were identified.
		Similarly, for each $i \in [\ell-1]$ the last bag of $\BB_2^i$ is bag $\{g^i\}$, which is identical to the first bag $\{u^{i+1}\}$ of $\BB_1^{i+1}$ since $g^i$ and $u^{i+1}$ were identified by the construction.
	\end{claimproof}

	In fact, it is not difficult to show that $\pw(G^\star) = \omega$ (for large enough $\ell$).
	Although this is not necessary to obtain the result, we provide a proof of this fact here since it shows that our analysis of the pathwidth of the construction is tight.

	\begin{claim}
		If $\ell \geq 3$ we have $\pw(G^\star) \geq \omega$.
	\end{claim}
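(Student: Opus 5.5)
The plan splits according to the definition of $\omega$. If $H$ has a tidy leaf block, then $\omega=\pw(H)$. Here the claim is immediate: $G^\star$ contains a copy of $H$ as a subgraph (for instance $H_1$ inside $B_1^1$), and pathwidth is monotone under taking subgraphs, so $\pw(G^\star)\ge\pw(H)=\omega$. The real content is the case without a tidy leaf block. There $\omega=\pw(H)+1$, and we must show $\pw(G^\star)\ge p+1$ with $p=\pw(H)$. This is where $\ell\ge 3$ is used.

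For the second case I would prove a ``three branches'' lemma, in the spirit of the classical fact that a tree with three branches of pathwidth $p$ at a vertex has pathwidth $p+1$.

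\emph{Lemma.} Let $A_1,A_2,A_3$ be pairwise vertex-disjoint connected subgraphs of $G$, each of pathwidth at least $p$. Suppose that for every $j\in\{1,2,3\}$ the two other subgraphs lie in a common connected subgraph of $G-V(A_j)$. Then $\pw(G)\ge p+1$.

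\emph{Proof sketch.} Fix a path decomposition of $G$. Restricting it to $A_i$ gives a decomposition of $A_i$, so there is a bag index $\beta_i$ whose bag contains $p+1$ vertices of $A_i$. Sort the three indices and let $j$ be the one in the middle. By hypothesis there is a connected subgraph $Z$ of $G-V(A_j)$ containing the other two $A_i$. The bags meeting $Z$ form a contiguous interval, and this interval contains both outer indices. Hence bag $\beta_j$ contains a vertex of $Z$, which is not in $A_j$. That bag therefore has at least $p+2$ vertices, so the width is at least $p+1$.

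It remains to find $A_1,A_2,A_3$ in $G^\star$, and this is where I expect the main difficulty. The natural candidates are the branching copies $H_2$ of $B_1^1,B_1^2,B_1^3$. Each of these hangs off the chain only at the single vertex $s^i$, and the gadgets are pairwise disjoint. The problem is that the chain passes through $s^i$, since $s^i=c_1$ lies on every $u^i$--$w^i$ path inside $H_1$ of $B_1^i$. So the connecting subgraph $Z$ is forced to use $s^i\in A_i$, and the lemma does not apply directly.

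I would first try the refined version of the lemma: $A_j$ may share exactly one cut vertex $x_j$ with $Z$. The argument then only fails if the full bag $\beta_j$ uses $x_j$ as its extra vertex. To rule this out, I would choose $\beta_j$ more carefully: take a bag containing $p+1$ vertices of $A_j-x_j$, or else exploit that $A_j-x_j$ is disconnected from the rest of the graph. This is where non-tidiness should enter. The needed fact is $\pw(H-b)=\pw(H)$, i.e.\ that $b$ is not pathwidth critical, since the attaching vertex $s=b_2$ of $H_2$ is a copy of $b$.

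If $b$ may still be pathwidth critical without $H$ being tidy, I would swap roles. As the $A_i$ I would use branch pieces $R_1,R_2$ dangling from $e^i,f^i$ in the $B_2$ gadgets together with the copies $H_1$ in $B_1^i$. Then the pathwidth bound on $R$ is what fails when $L$ is not tidy. In either variant, what needs checking is which of $\pw(H-b)=\pw(H)$ or $\pw(L)+1\le\pw(H)$-type facts non-tidiness actually guarantees. That bookkeeping is the step I expect to be hardest.
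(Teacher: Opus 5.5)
\textbf{There is a genuine gap: the case where $H$ has no tidy leaf block but $b$ is pathwidth critical is left open.}

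\emph{What works.} Your tidy case is fine. Your three-branch lemma is correct, and the case where $b$ is \emph{not} pathwidth critical also works, without the refined lemma. Take $A_i$ to be the copy $H_2-s^i\cong H-b$ hanging at $s^i$ in $B_1^i$, for $i\in\{1,2,3\}$. It is connected, since $b\neq c$ lies in a leaf block and is therefore not a cut vertex of $H$. Moreover $G^\star-V(A_j)$ is connected, so your lemma gives $\pw(G^\star)\geq\pw(H-b)+1=\omega$. This is the paper's argument for that case.

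\emph{Where it breaks.} Non-tidiness does not by itself give $\pw(H-b)=\pw(H)$; that is a separate case split, and the paper handles the critical case separately. Your fallback cannot close the critical case. In your lemma every branch must avoid separating the other two. The copies $H_1$ contain separating chain vertices ($u^i,s^i,w^i$), so at most two branches can be of that type, and any admissible triple from the pieces you list must include a piece hanging off the chain at a cut vertex, with that vertex removed. Every such piece is a copy of $H-b$ (at $s^i$) or of $R-c$ (at $e^i,f^i$). Since $b\notin V(R)$, both have pathwidth at most $\pw(H-b)=\pw(H)-1$ in this case. So branching cannot supply the extra unit of width here.

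\emph{The missing idea.} In this case the extra width comes from an end condition, not from branching.
\begin{itemize}
\item Since $b$ is critical and the smallest leaf block $L$ is not tidy, $L$ has no path decomposition of width $\pw(H)$ with $c$ in the first bag and $b$ in the last; padding with bags $\{c\}$ and $\{b\}$ is free. By the lemma on decompositions with prescribed end bags, this forces $\pw(L)=\pw(H)$.
\item Glue three copies $L_1,L_2,L_3$ of $L$ via $b^1=c^2=:w_L$ and $b^2=c^3=:w_R$. This graph is a minor of $G^\star$. Take the copies of $L$ inside $H_1$ of $B_1^1,B_1^2,B_1^3$ (running from $s^i$ to $w^i$), and contract the chain segments from $w^1$ to $s^2$ and from $w^2$ to $s^3$. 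This is where $\ell\geq 3$ is used.
\item Fix a decomposition of width $\pw(L)$ of this graph, and pick bags $X_p\subseteq V(L_1)$ and $X_q\subseteq V(L_3)$ of size $\pw(L)+1$, say $p<q$. The graph $(L_1\cup L_2)-w_R$ misses $X_q$, the graph $(L_2\cup L_3)-w_L$ misses $X_p$, and both are connected. Hence every vertex of $L_2-\{w_L,w_R\}$ occurs only strictly between $p$ and $q$.
\item Restrict to $X_p,\dots,X_q$ and replace vertices of $L_1$ by $w_L$ and vertices of $L_3$ by $w_R$. This gives a decomposition of $L_2$ of width $\pw(L)=\pw(H)$ with $c$ in the first bag and $b$ in the last, contradicting non-tidiness.
\end{itemize}
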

	\begin{claimproof}
		We have $\pw(G^\star) \geq \pw(H)$ since $H$ is a subgraph of $G^\star$.
		Hence, the claim is already proven if $H$ has a tidy leaf block.
		So, assume that $H$ does not have a tidy leaf block.

		Now, consider that vertex $b$ is not pathwidth critical.
		Then, set $H' = H - \{b\}$.
		Observe that $G^\star$ contains three vertex-disjoint copies of $H'$, one in each of three gadgets of type $1$.
		Moreover, by contracting the remaining vertices of the graph, one obtains the following graph as a minor: three copies of $H'$, and an additional vertex that has an edge to each copy.
		Therefore, by basic properties of pathwidth (see, for example,~\cite[Lemma 12.12]{mohring1995Chapter8VLSI}), we have that $\pw(G^\star) \geq \pw(H') + 1 = \pw(H) + 1 = \omega$.
		
		Finally, consider that vertex $b$ is pathwidth critical.
		Then, it must be the case that there is no path decomposition for $L$ of width at most $\pw(H)$ such that the first bag of the decomposition is $\{c\}$ and the last bag is $\{b\}$; indeed otherwise $H$ would have a tidy leaf block.
		By \cref{thm:pathwidth_starting_ending_decomp}, this implies $\pw(L) = \pw(H)$.
		Now, consider the following graph $L^3$: take $3$ vertex-disjoint copies of $L$. 
		Denote the copies of $L$ as $L_1,L_2,L_3$ and denote the respective vertices $c,b$ as $c^1,c^2,c^3$ and $b^1,b^2,b^3$.
		Identify vertices $b^1$ and $c^2$ into vertex $w_\mathrm{L}$, and identify vertices $b^2$ and $c^3$ into vertex $w_\mathrm{R}$.
		Observe that $G^\star$ contains $L^3$ as a minor.
		Indeed, the respective copies of $L$ can be found in the gadgets of type $1$.

		Now, we argue that $\pw(L^3) \geq \pw(L) + 1$, which concludes the proof.
		Consider a path decomposition $X_1,\dots,X_t$ of $L^3$.
		Towards a contradiction, assume it has width $\pw(L)$.
		Since $L^3$ contains $L_1,L_2$, and $L_3$, there is a bag $X_p$ containing $\pw(L) + 1$ vertices of $L_1$, and a bag $X_q$ containing $\pw(L) + 1$ vertices of $L_3$.
		Clearly, $X_p \subseteq V(L_1)$ and $X_q \subseteq V(L_3)$.
		By reversing the path decomposition if necessary we can assume $p < q$.

		The graph $L_\mathrm{L} = L^3[V(L_1) \cup V(L_2)] - w_\mathrm{R}$ is connected since $L$ is biconnected.
		Therefore, all vertices of $L_\mathrm{L}$ are introduced and forgotten before bag $X_q$.
		An analogous argument shows that all vertices of $L_\mathrm{R} =  L^3[V(L_2) \cup V(L_3)] - w_\mathrm{L} $ are introduced and forgotten after bag $X_p$.
		In particular, all vertices of $L_2 \setminus \{w_\mathrm{L},w_\mathrm{R}\}$ are introduced and forgotten after bag $X_p$ but before bag $X_q$.
		
		There is a path from $w_\mathrm{L}$ to a vertex of $X_p$ in $L^3$, and $w_\mathrm{L}$ is the only vertex of $L_2$ of this path.
		Hence, each bag from $X_p$ to the bag introducing $w_\mathrm{L}$ contains a vertex of $L_1$.
		Similarly, each bag from $X_q$ to the bag forgetting $w_\mathrm{R}$ contains a vertex of $L_3$.
		Restrict the path decomposition to the interval $X_p,\dots,X_q$, and replace each vertex of $L_1$ with $w_\mathrm{L}$, and each vertex of $L_3$ with $w_\mathrm{R}$.
		Then, this results in a path decomposition of $L_2$ of width $\pw(L)$, whose first bag contains $w_\mathrm{L}$, and whose last bag contains $w_\mathrm{R}$.
		Thus, we have reached a contradiction.
	\end{claimproof}

	Of course, we have $\tw(G^\star) = \tw(H)$ by the analysis of Donkers and Jansen~\cite{donkersTuringKernelizationDichotomy2021}.
	This holds because each biconnected component of $G^\star$ is a biconnected component of $H$.

	\proofsubparagraph*{The minimal blocking set.}
	Let us now show that this graph contains a minimal $\{H\}$-blocking set of size $\ell$.
	Note that this is not explicitly shown in the paper of Donkers and Jansen~\cite{donkersTuringKernelizationDichotomy2021}.
	Indeed, their paper does not use the language of minimal blocking sets, but it follows easily from their proof.
	First, we need to give a lower bound on the size of optimal solutions.

	The two packings of vertex-disjoint copies of $H$ underlying the argument are illustrated in \cref{fig:packings}, on the developed structure of a representative segment $B_1^i$--$B_2^i$ of the chain.

	\begin{figure}[t]
		\centering
		\begin{subfigure}{\linewidth}
			\centering
			\includegraphics[page=1]{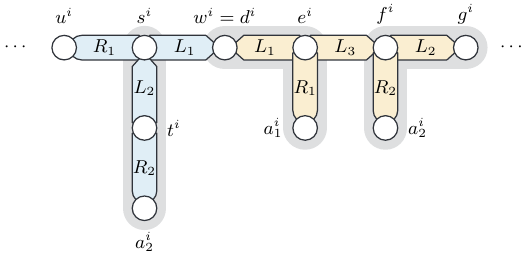}
			\subcaption{A packing of vertex-disjoint copies of $H$ in $G^\star$: each $B_1^i$ contributes one copy (its branch at $s^i$) and each $B_2^i$ two copies, giving $3\ell-2$ in total.}
			\label{fig:packing_opt}
		\end{subfigure}

		\vspace{1em}
		\begin{subfigure}{\linewidth}
			\centering
			\includegraphics[page=2]{figures/packings.pdf}
			\subcaption{A packing in $G^\star - X$ for $X = \{t^i \mid i \in [\ell]\}$: each gadget contributes one copy, giving $2\ell-1$ in total. Red crossed circles mark the former positions of the deleted vertices $t^i$. As $|X| = \ell$, any solution containing $X$ has size at least $3\ell-1 > \opt_\FF(G^\star)$.}
			\label{fig:packing_blocking}
		\end{subfigure}
		\caption{The two packings of vertex-disjoint copies of $H$ underlying the blocking-set argument, each highlighted in gray, on a representative segment $B_1^i$ (blue)--$B_2^i$ (orange) of the chain (continuing on both sides), with the gadget shading as in \cref{fig:necklace_chain}. The construction follows Donkers and Jansen~\cite{donkersTuringKernelizationDichotomy2021}.}
		\label{fig:packings}
\end{figure}

	\begin{claim}
		We have $\opt_{\FF}(G^\star) \geq 3\ell - 2$.
	\end{claim}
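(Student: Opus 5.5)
Since $\{H\}$-minor freeness is destroyed by any copy of $H$, the plan is to exhibit $3\ell-2$ pairwise vertex-disjoint subgraphs of $G^\star$, each containing $H$ as a subgraph (hence as a minor). Any solution $S$ must then contain at least one vertex from each of them, giving $|S| \geq 3\ell-2$. This is exactly the packing depicted in \cref{fig:packing_opt}: one copy of $H$ from each gadget of type $1$, and two copies from each gadget of type $2$.

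Concretely, in each gadget $B_1^i$ I would take the copy $H_2$ of $H$, i.e., the branch hanging off $s^i$. It contains $s^i$ and $t^i$, but it avoids $u^i$ and $w^i$: indeed $u^i = a_1$ and $w^i = b_1$ are vertices of $H_1$ other than $c_1 = s^i$, and $H_1$ and $H_2$ share only $s^i$. Hence this copy avoids both vertices that $B_1^i$ shares with its neighbours in the chain.

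In each gadget $B_2^i$, the naive choice is the two copies $H_1$ and $H_2$ themselves. However, these contain $d^i = b_1$ and $g^i = b_2$, which are shared with $B_1^i$ and $B_1^{i+1}$. Since the chosen copies in the $B_1$ gadgets avoid $u$ and $w$, there is in fact no conflict: $d^i = w^i$ is not in the $H_2$-copy of $B_1^i$, and $g^i = u^{i+1}$ is not in the $H_2$-copy of $B_1^{i+1}$. Inside $B_2^i$, the copies $H_1$ and $H_2$ are disjoint, because the construction identifies $c_1$ with $c_3$ and $c_2$ with $b_3$ through the extra copy $L_3$, never identifying a vertex of $H_1$ with one of $H_2$. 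Gadgets that are not consecutive in the chain share no vertices at all.

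Counting gives $\ell \cdot 1 + (\ell-1)\cdot 2 = 3\ell-2$ pairwise disjoint copies of $H$, which yields the claim. The only step needing care is verifying that the chosen copies avoid the glued vertices $w^i = d^i$ and $g^i = u^{i+1}$. This follows directly from the naming in the gadget definitions, namely $u = a_1$, $w = b_1$, $s = c_1 = b_2$, together with $H_1 \cap H_2 = \{s\}$ in $B_1$.
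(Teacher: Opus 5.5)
Your proposal is correct and uses the same packing as the paper: the copy $H_2$ hanging off $s^i$ in each $B_1^i$, and the copies containing $d^i$ and $g^i$ in each $B_2^i$, which gives $\ell + 2(\ell-1) = 3\ell-2$ vertex-disjoint copies of $H$. Your explicit check that the chosen copies avoid the glued vertices $w^i = d^i$ and $g^i = u^{i+1}$ spells out what the paper leaves as ``not difficult to confirm''.
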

	\begin{claimproof}
		It is not difficult to confirm that $G^\star$ contains $3\ell - 2$ vertex-disjoint copies of $H$.
		In each gadget $B_1^i$ consider the copy of $H$ attached at vertex $s^i$, and observe that this copy does not contain any vertex of a different gadget.
		In each gadget $B_2^i$ consider the copy of $H$ containing vertex $d^i$, and the copy of $H$ containing vertex $g^i$.
		This results in a packing of exactly $\ell + 2(\ell-1)$ vertex-disjoint copies of $H$.
		In fact, the packing we describe here is exactly the packing Donkers and Jansen~\cite{donkersTuringKernelizationDichotomy2021} illustrate to prove their property 3 of their Lemma 11.
	\end{claimproof}

	Our eventual goal is arguing that the set $X = \{t^i \mid i \in [\ell]\}$ is a minimal $\{H\}$-blocking set.
	We begin by showing that any proper subset of $X$ is not a $\{H\}$-blocking set.

	\begin{claim}
		For any $v \in X$ there is a set $S_v$ of size $3\ell - 2$ containing $X \setminus \{v\}$ such that $G^\star - S_v$ is $H$-minor free, and such that each connected component of $\prune{|V(L)|}{G^\star - S_v}$ is a minor of $H$.
	\end{claim}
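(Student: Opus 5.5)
Fix $v = t^p \in X$ for some $p \in [\ell]$. I will build $S_v$ explicitly from choices local to each gadget: $\ell - 1$ vertices in the $B_1$ gadgets other than $B_1^p$, two vertices in each $B_2$ gadget, and one extra vertex in $B_1^p$. Recall that $t^i = c_2$ in $B_1^i$. This is the cut vertex of the copy $H_2$ hanging at $s^i$, and deleting it kills that copy, since $H$ minus a vertex is $H$-minor free. For every $i \neq p$ I put $t^i$ into $S_v$.

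In $B_1^p$ I do not delete $t^p$. Instead I delete $s^p$, which lies in both $H_1$ and $H_2$ of $B_1^p$. For each $B_2^i$ I must delete two vertices, and I choose them so that no $H$-copy survives that uses the chain structure. If $i < p$ I delete $e^i$ and $g^i$; if $i \geq p$ I delete $d^i$ and $f^i$. The intent is that the chain is cut so that each remaining piece on the left of $p$ and on the right of $p$ consists of at most a copy of $R$ or $L$ together with pruneable leaf blocks. The count is $(\ell - 1) + 1 + 2(\ell - 1) = 3\ell - 2$, as required.

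Then I verify that $G^\star - S_v$ is $H$-minor free, and in fact the stronger pruning statement. I list the components of $G^\star - S_v$; each is a union of pieces glued at single vertices. The pieces are:
\begin{itemize}
\item copies of $R$ with one vertex removed;
\item $H_2$ minus its cut vertex, which is $L$ minus $c$ together with $R$ minus $c$ falling apart;
\item copies of $L$, possibly missing a vertex;
\item copies of $R$ attached to a surviving $L$ at a vertex other than $c$.
\end{itemize}
The orientation of the $B_2$ deletions is chosen so that $t^i$ removal plus the chosen pair in the neighbouring $B_2$ leaves, near each identification vertex ($w^i = d^i$ or $g^i = u^{i+1}$), at most one copy of $R$ joined to leaf-block copies of size $|V(L)|$ or smaller.

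Applying $\prune{|V(L)|}{\cdot}$ removes dangling blocks smaller than $|V(L)|$. I then have to check that each remaining component embeds into $H$. The natural target is that the component is a subgraph of $R$ with at most one $L$ attached at the image of $c$. Once each pruned component is a minor of $H$, $H$-minor freeness follows. Components that are proper minors of $H$, or strict subgraphs of them, cannot contain $H$. For a pruned component equal to exactly $H$ I would instead need a different choice, so I must ensure every component is a proper minor.

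To pass from the pruned graph back to $G^\star - S_v$ I would use the argument of \cref{thm:kill_minors_by_leaf_blocks}. The pruned parts consist of blocks with fewer than $|V(L)|$ vertices attached at cut vertices, and $L$ is a smallest leaf block of $H$. Hence no leaf block of $H$ is a minor of them, and iterating \cref{thm:kill_minors_by_leaf_blocks} shows any $H$-minor survives pruning.

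The main obstacle is the exact bookkeeping at the seams around $B_1^p$. There $t^p$ survives, so the copy $H_2$ of $B_1^p$ is intact except for $s^p = b_2$. This leaves $H - b$, which together with its neighbours through $t^p$ must not reassemble $H$. I would first try the deletion choice above and check this seam by case analysis, comparing with Donkers and Jansen's Lemma 11 solutions. If that fails, I would swap which endpoint of each $B_2^i$ is deleted, or delete $w^p$ instead of $s^p$.
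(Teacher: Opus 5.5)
\textbf{Gap: the $B_2$ deletions are oriented the wrong way, so your set fails for every $\ell \geq 2$.}

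Suppose $p > 1$ and look at the copy $H_1 = R_1 \cup L_1$ of $H$ inside $B_1^1$. It meets the rest of $G^\star$ only in $u^1$, $s^1$ and $w^1$:
\begin{itemize}
\item $u^1$ is a free end of the chain, so it is never deleted;
\item $s^1 \notin S_v$, since $p \neq 1$;
\item $w^1 = d^1 \notin S_v$, since for $i = 1 < p$ you delete $e^1$ and $g^1$.
\end{itemize}
So this copy of $H$ survives intact in $G^\star - S_v$. Symmetrically, if $p < \ell$ the copy $H_1$ of $B_1^\ell$ survives: $w^\ell$ is a free end, and $u^\ell = g^{\ell-1} \notin S_v$ because you delete $d^{\ell-1}$ and $f^{\ell-1}$.

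Meanwhile $H_1$ of $B_1^p$ is hit three times, by $u^p = g^{p-1}$, by $s^p$, and by $w^p = d^p$. The seam at $B_1^p$, which you single out as the main obstacle, is actually over-covered. The failure is at the two ends of the chain. Your invariant about the identification vertices breaks exactly at $u^1$ and $w^\ell$, where a full $R$ meets a full $L$ at $c$, giving $H$ itself.

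\textbf{The correct orientation is forced by counting.} Since $|S_v| = 3\ell - 2$ equals the packing from the preceding claim, every packed copy contains exactly one vertex of $S_v$. The packed copies are the copy at $s^i$ in each $B_1^i$, and the copies through $d^i$ and through $g^i$ in each $B_2^i$. Consequently, no vertex outside the packed copies can be used; this rules out $u^1$, $w^\ell$ and the interior vertices of $L_3$.

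For $i \neq p$, the copy at $s^i$ already contains $t^i$, so $s^i \notin S_v$. Hence $H_1$ of $B_1^i$ must be hit through $u^i = g^{i-1}$ or through $w^i = d^i$. Propagating inward from the free ends gives $d^i \in S_v$ for $i < p$ and $g^i \in S_v$ for $p \leq i \leq \ell-1$. One then takes $f^i$, respectively $e^i$, as the second deletion in $B_2^i$; this also destroys the copy $L_3 \cup R_1$ of $H$ at $e^i$.

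So the right set is
\[ S_v = \{t^i : i \neq p\} \cup \{s^p\} \cup \{d^i, f^i : i < p\} \cup \{e^i, g^i : p \leq i \leq \ell-1\}. \]
This is your first fallback (swapping the orientation). Your second fallback, deleting $w^p$ instead of $s^p$, fails outright: the copy at $s^p$ would then be untouched, since neither $s^p$ nor $t^p$ is deleted.

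With this corrected set, every component of $G^\star - S_v$ is a subgraph of $R$, $H - a$, $H - b$ or $L - c$, with copies of $L - b$ or $L - c$ glued on at single vertices. Its $|V(L)|$-prune is therefore a proper subgraph of $H$, and your iterated use of \cref{thm:kill_minors_by_leaf_blocks} then gives $H$-minor freeness.

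\textbf{Comparison with the paper.} Once corrected, your explicit route is a self-contained alternative. The paper does not build $S_v$ at all. It takes Donkers and Jansen's solution of size $3\ell-1$ for their larger graph, which has an extra copy of $L$ at each $t^i$. Exactly one vertex of that solution lies outside $G^\star$, so restricting it gives size $3\ell - 2$. The prune property is then transferred using their monotonicity of prune under minors (their Proposition~9).
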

	\begin{claimproof}
		For this claim, let us refer to the proof of Donkers and Jansen~\cite[Lemma 11]{donkersTuringKernelizationDichotomy2021}, and denote the graph they build as $G^{\star\prime}$.
		As elaborated earlier, we actually have that $G^\star$ is a subgraph of $G^{\star\prime}$.

		Recall that $X = \{t^j \mid j \in [\ell]\}$, and fix an arbitrary $v \in X$.
		Donkers and Jansen define a set $S_v'$ of size $3\ell - 1$, which can easily be verified to contain all vertices of $X \setminus \{v\}$.
		This set is such that $G^{\star\prime} - S_v'$ is $H$-minor free, and such that each connected component of $\prune{|V(L)|}{G^{\star\prime}- S_v'}$ is a minor of $H$.
		Slightly closer inspection of their proof reveals that the set they consider contains exactly one vertex which is \emph{not} part of our construction.
		Hence, the set $S_v = S_v' \cap V(G^\star)$ has size only $3\ell - 2$.

		Finally, we establish that each connected component of $\prune{|V(L)|}{G^\star - S_v}$ is a minor of $H$.
		We have that $G^\star - S_v$ is a minor of $G^{\star\prime} - S_v'$.
		By~\cite[Proposition 9]{donkersTuringKernelizationDichotomy2021} we then have that $\prune{|V(L)|}{G^\star - S_v}$ is a minor of $\prune{|V(L)|}{G^{\star\prime} - S_v'}$.
		Since each connected component of $\prune{|V(L)|}{G^{\star\prime} - S_v'}$ is a minor of $H$, and each connected component of $\prune{|V(L)|}{G^\star - S_v}$ is a minor of some connected component of $\prune{|V(L)|}{G^{\star\prime} - S_v'}$, we obtain that each connected component of $\prune{|V(L)|}{G^{\star} - S_v}$ is a minor of $H$.
	\end{claimproof}

	Now, we can show that $X$ is a $\{H\}$-blocking set.
	\begin{claim}
		The set $X$ is a $\{H\}$-blocking set.
	\end{claim}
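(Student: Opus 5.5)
To show that $X = \{t^i \mid i \in [\ell]\}$ is an $\{H\}$-blocking set, I will prove that every set $S \supseteq X$ with $G^\star - S$ being $H$-minor free has size at least $3\ell - 1$. The preceding claim gives a solution $S_v$ of size $3\ell-2$, and the earlier claim gives $\opt_{\FF}(G^\star) \geq 3\ell-2$. Together these yield $\opt_{\FF}(G^\star) = 3\ell - 2$, so a lower bound of $3\ell - 1$ for solutions containing $X$ is exactly what the definition requires.

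The lower bound will come from a packing argument, as in \cref{fig:packing_blocking}. I will exhibit $2\ell - 1$ pairwise vertex-disjoint subgraphs of $G^\star - X$, each isomorphic to $H$. Any set $S \supseteq X$ hitting all $H$-minors must then contain $X$ together with at least one vertex from each of these copies, which are disjoint from $X$. Hence $|S| \geq \ell + (2\ell - 1) = 3\ell - 1 > \opt_{\FF}(G^\star)$.

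The copies are chosen one per gadget.
\begin{itemize}
	\item In each type-$1$ gadget $B_1^i$, take the copy $H_1$, which consists of $R_1$ together with the leaf block $L_1$ running from $u^i$ through $s^i$ to $w^i$. It does not contain $t^i = c_2$, since $t^i$ lies in $H_2 - s^i$.
	\item In each type-$2$ gadget $B_2^i$, use the extra copy $L_3$ of $L$ (with endpoints $e^i$, $f^i$) in place of the leaf block of $H_1$. That is, take $R_1 \cup L_3$, which is isomorphic to $H$ because $L_3$ is attached to $R_1$ at $e^i = c_1$, exactly as $L$ is attached to $R$ at $c$.
\end{itemize}

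The main point to verify is vertex-disjointness, and this is where I expect care is needed. The copy in $B_2^i$ uses only vertices of $R_1 \cup L_3$ and so avoids $d^i$ and $g^i$, which are the only vertices $B_2^i$ shares with its neighbouring gadgets. Within $B_1^i$, the chosen copy $H_1$ does contain the shared vertices $u^i$ and $w^i$. However, these are identified with $g^{i-1}$ and $d^i$, which the type-$2$ copies avoid. Two type-$1$ copies never meet, since distinct $B_1$ gadgets share no vertices. Finally, no chosen copy contains any $t^i$, because $t^i$ belongs only to the branch $H_2$ of $B_1^i$. This gives the required $\ell + (\ell - 1) = 2\ell - 1$ disjoint copies in $G^\star - X$, completing the claim.
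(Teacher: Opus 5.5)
Your proposal is correct and follows the paper's proof. The paper uses the same packing of $2\ell-1$ vertex-disjoint copies of $H$ in $G^\star - X$: the copy through $u^i$ in each $B_1^i$, and the copy through $e^i,f^i$ (that is, $R_1 \cup L_3$) in each $B_2^i$. It then concludes that any solution containing $X$ has size at least $3\ell-1$. Your explicit use of the size-$(3\ell-2)$ solution $S_v$ from the preceding claim to get $\opt_{\FF}(G^\star) \le 3\ell-2$ states a step the paper leaves implicit.
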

	\begin{claimproof}
		Set $Z = G^\star - X$.
		We will argue that $Z$ contains $2\ell - 1$ vertex-disjoint copies of $H$.
		Then, it is clear that any solution containing $X$ must have size at least $2\ell - 1 + |X| = 3\ell-1$.

		We describe the packing next.
		In each gadget $B_1^i$ we pack the copy of $H$ that contains vertex $u^i$.
		In each gadget $B_2^i$ we pack the copy of $H$ that contains the vertices $e^i$ and $f^i$.
		This packed graph contains no vertices of a different gadget.
		Clearly, we pack exactly one copy of $H$ per gadget, and hence the packing has size $2\ell-1$.

		This packing is again similar to a packing described by Donkers and Jansen in their proof of Lemma 11.
		Concretely, in the proof of property 2 of their lemma, they provide a packing of $3\ell-1$ vertex-disjoint subgraphs of their construction, where $\ell$ graphs contain a vertex of the leaf blocks we do not use in our construction, and the same $\ell$ graphs contain one vertex of $X$ each.
		Hence, when omitting these graphs from the packing, one obtains the desired packing we sketched above.
	\end{claimproof}
	Combining the claims above concludes the proof.
\end{proof}

\subsection{Families of Biconnected Graphs}
\label{sec:biconnected_families}

Next, we consider the case that all graphs of $\FF$ are biconnected on at least three vertices.
In this case, we can show that there is no polynomial kernel when parameterizing by the distance to graphs with pathwidth $\min_{H \in \FF} \pw(H)$.
On a technical level, this result is obtained by combining the result that necklaces have unbounded minimal extended $\FF$-blocking sets (when minor-minimal $\FF$ contains only biconnected graphs on at least three vertices) with the observation that there are arbitrarily long $\{H\}$-necklaces with the same pathwidth as $H$ for all $H \in \FF$.

\thmBiconnectedPW*{}
\begin{proof}
	Let $F \in \FF$ be a graph of $\FF$, and $X_1,\dots,X_t$ be a path decomposition of $F$ of width $\pw(F)$.
	Without loss of generality, $X_1 = \{v\}$ and $X_t = \{w\}$ for some $v,w \in V(F)$ (we may have $v = w$).

	Fix an arbitrary integer $\ell \geq 1$.
	We create a graph $G_\ell^F$ as follows.
	Take $\ell$ disjoint copies of $F$, and denote them as $F_1,\dots,F_\ell$.
	Let us denote vertex $v$ of copy $F_i$ as $v_i$, and similarly let us denote vertex $w$ of copy $F_i$ as $w_i$.
	For each $i \in [\ell -1]$ add an edge from $w_i$ to $v_{i+1}$.
	See \cref{fig:necklace_GlF} for an illustration of $G_\ell^F$.

	\begin{figure}[t]
		\centering
		\includegraphics{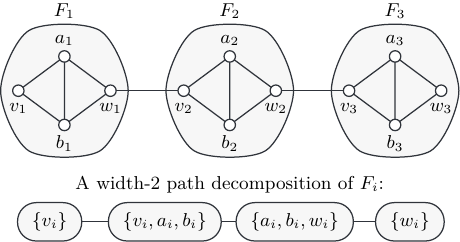}
		\caption{The $\{F\}$-necklace $G_3^F$ for the diamond $F$, obtained from $K_4$ by deleting one edge.
		In each copy $F_i$, vertices $v_i$ and $w_i$ are the two nonadjacent vertices; the edges $w_1v_2$ and $w_2v_3$ join consecutive copies.
		The displayed width-$2$ path decomposition of a copy $F_i$ starts with bag $\{v_i\}$ and ends with bag $\{w_i\}$.
		Concatenating these decompositions and adding $w_i$ to the first bag $\{v_{i+1}\}$ of the next copy, for $i \in \{1,2\}$, gives a path decomposition of $G_3^F$ of width $2$.}
		\label{fig:necklace_GlF}
\end{figure}

	It is easy to see that $G_\ell^F$ is an $\FF$-necklace of length $\ell$, and therefore by \cref{thm:biconnected_graphs_large_extended_blocking_sets}, it has a good minimal extended $\FF$-blocking set $(X,Y)$ with $|X| = \ell$ and $|Y| = 2$.

	The property about sets $S$ making $G_\ell^F$ free of an $F$-minor is also easy to see, because each such set $S$ must contain a vertex from each of the $\ell$ copies of $F$, and this immediately yields the stated property about biconnected components of $G_\ell^F - S$.

	To conclude the proof, it remains to argue that the pathwidth of $G_\ell^F$ is at most $\pw(F)$.
	Let $F_1,\dots,F_\ell$ be the $\ell$ disjoint copies of $F$ within $G_\ell^F$.
	For each $i \in [\ell]$ and each $j \in [t]$ let $X_j^i$ be the bag $X_j$, where the vertices of $X_j$ are replaced with their respective copy within graph $F_i$.
	Then, 
	\[\XX_0 = X_1^1,\dots,X_t^1,X_1^2,\dots,X_t^2,\dots,X_1^\ell,\dots,X_t^\ell\] is clearly a path decomposition of the graph obtained by taking the disjoint union of $F_1,\dots,F_\ell$ of width $\pw(F)$; indeed this sequence of bags is obtained from appending path decompositions of the graphs $F_i$ of width $\pw(F)$.
	We modify this sequence $\XX_0$ by adding the vertex $w_{i}$ to the bag $X_1^{i+1}$ for each $i \in [\ell-1]$.
	Then, this results in a path decomposition $\XX$ of $G_\ell^F$ since for each $i \in [\ell-1]$ the endpoints of the edge from $F_i$ to $F_{i+1}$ are now contained in bag $X_1^{i+1}$.
	Moreover, since $|X_1| = 1$, each bag $X_1^i$ contains at most $2$ vertices.
	As $F$ is connected with at least two vertices, $\pw(F) \geq 1$, and therefore the width of $\XX_0$ is the same as the width of $\XX$.
	Hence, $\pw(G_\ell^F) = \pw(F)$, concluding the proof.
\end{proof}

\subsection{Families Containing Multiple Graphs}
\label{sec:pw_lb_families}

After having \cref{thm:biconnectedPW,thm:nonBiconnectedPW} at hand, we want to extend these results to families $\FF$ containing multiple graphs that are not necessarily biconnected.
For the result of Donkers and Jansen~\cite{donkersTuringKernelizationDichotomy2021}, they show that one can choose a graph $H \in \FF$ with $\tw(H) = \min_{F \in \FF} \tw(F)$ such that the construction for the single graph $H$ essentially works for the full family.
However, because the construction of \cref{thm:nonBiconnectedPW} for a graph $H$ can have pathwidth $\pw(H)+1$, we cannot get the same exact result here with treewidth replaced with pathwidth.
Nevertheless, we obtain the following \namecref{thm:pwModulatorBlockingSet}.

\begin{restatable}{lemma}{thmPathwidthBlockingSet}
	\label{thm:pwModulatorBlockingSet}
	Let $\FF$ be a minor-minimal family of connected graphs on at least three vertices each, and $H$ be a graph of $\FF$ that has a smallest leaf block among all graphs of $\FF$.
	Moreover, let $\ell \geq 1$ be an arbitrary integer.

	If $H$ is biconnected, then there is a graph $G_\ell$ with a good minimal extended $\FF$-blocking set $(X,Y)$ with $|X| = \ell$ and $|Y| = 2$, and $\pw(G_\ell) = \pw(H)$.

	If $H$ is not biconnected, then there is a graph $G_\ell$ with a minimal $\FF$-blocking set of size $\ell$.
	Moreover, the pathwidth of $G_\ell$ is at most $\pw(H) + 1$,
	and even at most $\pw(H)$ if one of the smallest leaf blocks of $H$ is tidy.
\end{restatable}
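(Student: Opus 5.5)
We split into the two cases of the statement, and in each case we take the single-graph construction built earlier for $H$ and argue that it still works for the whole family $\FF$.

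\emph{Biconnected case.} Since $H$ has a smallest leaf block among all graphs of $\FF$ and $H$ is itself its only leaf block, every leaf block of every $F\in\FF$ has at least $|V(H)|\ge 3$ vertices. We take $G_\ell = G_\ell^H$, the necklace from \cref{thm:biconnectedPW}, which has pathwidth $\pw(H)$. It consists of $\ell$ copies of $H$, each pair of consecutive copies joined by a single bridge. By minor-minimality, any set meeting every copy of $H$ leaves a graph whose blocks are bridges or proper subgraphs of $H$, hence all of size less than $|V(H)|$. By \cref{proposition:biconnected_minors}, such a graph cannot contain any $F\in\FF$ as a minor: every block of $F$, in particular a leaf block with at least $|V(H)|$ vertices, would have to be a minor of a block with fewer vertices. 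So $\opt_\FF(G_\ell)=\ell$, and the optimal $\FF$-solutions are exactly the optimal $\{H\}$-solutions. We then rerun the proof of \cref{thm:biconnected_graphs_large_extended_blocking_sets} with $X=\{z_i\}$ and $Y=\{\ell_1,r_\ell\}$. For goodness, the sets $S_v = X\setminus\{z_i\}\cup\{\ell_i\}$ must additionally leave each component containing a surviving vertex of $Y$ $\LL$-minor free. Every block of $G_\ell - S_v$ has fewer than $|V(H)|$ vertices while every graph of $\LL$ has at least $|V(H)|$ vertices and is biconnected, so this holds.

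\emph{Non-biconnected case.} Let $L$ be a smallest leaf block of $H$, so $|V(L)|\le|V(L')|$ for every leaf block $L'$ of every $F\in\FF$. Take $G_\ell=G^\star_\ell$ from \cref{thm:nonBiconnectedPW}. It has the required pathwidth bound, and its set $X$ is a minimal $\{H\}$-blocking set of size $\ell$. We lift this to $\FF$ in two steps.

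First, $\opt_\FF(G^\star_\ell)=\opt_{\{H\}}(G^\star_\ell)$. The inequality $\ge$ is clear. For $\le$, it suffices that the sets $S_v$, and an optimal $\{H\}$-solution, are also $\FF$-solutions. For $S_v$, every component of $\prune{|V(L)|}{G^\star_\ell-S_v}$ is a minor of $H$. Suppose some $F\in\FF$ were a minor of $G^\star_\ell-S_v$. Every leaf block of $F$ has at least $|V(L)|$ vertices, so repeated application of \cref{thm:kill_minors_by_leaf_blocks} shows that $F$ survives pruning. The pruned leaf blocks have fewer than $|V(L)|$ vertices, so none of them can host a leaf block of $F$. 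Hence $F$ is a minor of a component of the prune, and therefore a minor of $H$. By minor-minimality $F=H$, contradicting that $S_v$ is an $\{H\}$-solution.

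Second, $X$ is an $\FF$-blocking set. The lower bound $3\ell-1$ on $\{H\}$-solutions containing $X$, obtained from the packing of copies of $H$, also applies to $\FF$-solutions because $H\in\FF$. Since $\opt_\FF=\opt_{\{H\}}=3\ell-2$, no optimal $\FF$-solution contains $X$. Minimality follows because each $S_v$ is an optimal $\FF$-solution containing $X\setminus\{v\}$.

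The main obstacle is the pruning argument in the non-biconnected case. One must check carefully that a deleted leaf block can never support a leaf block of $F$, including the situation where $F$ enters the block only through its cut vertex. This is essentially \cite[Proposition 9/Lemma 11]{donkersTuringKernelizationDichotomy2021}, combined with \cref{thm:kill_minors_by_leaf_blocks}, and we would invoke that reasoning directly.
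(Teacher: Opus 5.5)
Your proposal is correct and follows essentially the same route as the paper. In the biconnected case you lift the extended blocking set of the necklace $G_\ell^H$ from $\{H\}$ to $\FF$ by noting that every block left after hitting each copy has fewer than $|V(H)|$ vertices (the paper packages this as the ``moreover'' clause of \cref{thm:biconnectedPW}); in the non-biconnected case you use the sets $S_v$ together with iterated \cref{thm:kill_minors_by_leaf_blocks} on the prune to get $\opt_\FF = \opt_{\{H\}}$, exactly as the paper does, the only cosmetic difference being that you quote the explicit counts $3\ell-2$ and $3\ell-1$ from inside the proof of \cref{thm:nonBiconnectedPW} rather than just its statement.
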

\begin{proof}
	We use a different construction depending on whether $H$ is biconnected.
	\proofsubparagraph*{$H$ is biconnected.}

	If $H$ is biconnected, $G_\ell$ is the graph $G_\ell^H$ we obtain by applying \cref{thm:biconnectedPW} with family $\{H\}$.
	Then, $G_\ell$ has a good minimal extended $\{H\}$-blocking set $(X,Y)$ with $|X| = \ell$ and $|Y| = 2$.
	We need to argue that $(X,Y)$ is also a good minimal extended $\FF$-blocking set.
	
	Consider an arbitrary pair $(X',Y')$ with $X' \subseteq X$ and $Y' \subseteq Y$, such that at least one of the two inclusions is proper.
	Since $(X,Y)$ is a minimal extended $\{H\}$-blocking set of $G_\ell$, there is a set $S$ such that $G_\ell - S$ is $\{H\}$-minor free, $|S| = \opt_{\{H\}}(G_\ell)$, $S$ contains all vertices of $X'$, and no two vertices of $Y'$ are connected in $G_\ell - S$.
	Moreover, by \cref{thm:biconnectedPW}, each biconnected component of $G_\ell - S$ contains fewer than $|V(H)|$ vertices.
	Let $\LL$ be the set of leaf blocks of graphs in $\FF$.
	Then, since $H$ is biconnected and has the smallest leaf block of any graph of $\FF$, we know that $G_\ell - S$ is $\LL$-minor free.
	Clearly, this implies that $G_\ell - S$ is $\FF$-minor free.
	Hence, we have that $\opt_{\{H\}}(G_\ell) = \opt_{\FF}(G_\ell)$.

	Now, observe that since $(X,Y)$ is a minimal extended $\{H\}$-blocking set, any set $S$ such that $G_\ell - S$ is $\FF$-minor free that contains $X$ and disconnects the vertices $Y \setminus S$ has size larger than $\opt_{\{H\}}(G_\ell) = \opt_{\FF}(G_\ell)$.
	Hence, $(X,Y)$ is an extended $\FF$-blocking set.
	By the existence of the solution described in the previous paragraph, it is also a good minimal extended $\FF$-blocking set.
	The pathwidth bound also follows from \cref{thm:biconnectedPW}.

	\proofsubparagraph*{$H$ is not biconnected.}
	Next, consider that $H$ is not biconnected, and let $L$ be a smallest leaf block of $H$, where $L$ is a tidy leaf block if possible.
	In this case, we use the construction of \cref{thm:nonBiconnectedPW}. 
	That is, $G_\ell = G_\ell^\star$ of \cref{thm:nonBiconnectedPW} used with graph $H$.

	The pathwidth bound is explicitly stated in the lemma.
	So, it remains to show that $G_\ell$ has a minimal $\FF$-blocking set of size $\ell$.

	By \cref{thm:nonBiconnectedPW}, $G_\ell$ has a minimal $\{H\}$-blocking set $X$ of size $\ell$.
	Moreover, for each $v \in X$ there is a set $S_v$ of size $\opt_{\{H\}}(G_\ell)$ containing $X \setminus \{v\}$, such that $G_\ell - S_v$ is $\{H\}$-minor free, and each connected component of $\prune{|V(L)|}{G_\ell - S_v}$ is a minor of $H$.
	Consider an arbitrary $H' \in \FF$, and assume that $H'$ is a minor of $G_\ell - S_v$.
	By our choice of $H$, each leaf block of $H'$ has at least $|V(L)|$ vertices.
	Now, by an iterated application of \cref{thm:kill_minors_by_leaf_blocks} we find that $H'$ must be a minor of a connected component of $\prune{|V(L)|}{G_\ell - S_v}$: leaf blocks of $G_\ell - S_v$ that contain fewer than $|V(L)|$ vertices clearly do not have any leaf block of $\FF$ as a minor.
	Since each connected component of $\prune{|V(L)|}{G_\ell - S_v}$ is a minor of $H$, we must have that $H'$ is a minor of $H$.
	If $H' \neq H$, this contradicts that $\FF$ is minor-minimal.
	If $H' = H$, this contradicts that $G_\ell - S_v$ is $\{H\}$-minor free.
	Hence, we have reached a contradiction, and we obtain that $G_\ell - S_v$ is $\FF$-minor free.
	In particular, the argument shows that $\OPT_{\FF}(G_\ell) = \OPT_{\{H\}}(G_\ell)$.

	We also know that $X$ is a $\{H\}$-blocking set of $G_\ell$.
	Hence, there is no set $S \subseteq V(G_\ell)$ of size $\OPT_{\{H\}}(G_\ell)$ such that $G_\ell - S$ is $\FF$-minor free and $X \subseteq S$.
	From the arguments above, we now obtain that $X$ is a minimal $\FF$-blocking set.
\end{proof}

Let us now extract two interesting corollaries from \cref{thm:pwModulatorBlockingSet} and the fact that for $\FF = \{K_2\}$ the problem \fDeletion{} has no polynomial kernel when parameterized by the distance to graphs with pathwidth at most $2$~\cite{cyganHardnessLosingWidth2014}.

\thmPWMaxPlusOneLB*{}
\begin{proof}
	If $K_2 \in \mathcal{F}$, then the minor-minimality of $\FF$ implies $\FF = \{K_2\}$, and the result
	follows from~\cite{cyganHardnessLosingWidth2014}.
	Otherwise, 
	the statement follows directly from \cref{thm:pwModulatorBlockingSet,thm:lb_from_extended_blocking_sets}.
\end{proof}

Now, for a second corollary, and in fact the main motivation for the considered parameterization, we consider the \pwEtaDeletionEtaPlusOne{} problem.

\thmPWEtaEtaPlusOne*{}
\begin{proof}
	Recall that \pwEtaDeletion{} is an \fDeletion{} problem, where $\FF$ contains all minor-minimal graphs with pathwidth $\eta + 1$.
	Moreover, it is well-known 
	that $\FF$ contains a tree (start from a tree with pathwidth $> \eta$ and contract edges until it becomes a minor-minimal graph with pathwidth $\eta+1$).
	Hence, there is a graph $H$ with a smallest leaf block of $\FF$ that is a tree, and this leaf block $L$ is isomorphic to $P_2$.
	Moreover, it is clear that all vertices of $H$ are pathwidth critical, and that $L$ is a tidy leaf block.
	Hence, \cref{thm:pwModulatorBlockingSet} provides, for each integer $\ell \geq 1$, a graph $G_\ell$ with a minimal $\FF$-blocking set of size $\ell$, such that $\pw(G_\ell) \leq \pw(H) = \eta + 1$.
	Then, the result follows from \cref{main_thm:blocking_sets_kernel_lb}.
\end{proof}
 
\section*{AI Disclosure}

We used generative AI tools, including models by Anthropic and OpenAI such as Claude Fable 5, Claude Opus 4.7, GPT-5.6 Sol, and GPT-6 Astra.
These tools helped us proofread the paper, improve the exposition, and search the literature.
Moreover, we used them to review mathematical content and suggest corrections, simplifications, and minor generalizations.
They also assisted in preparing the TikZ code used for the figures of the paper.
We verified all incorporated suggestions and take full responsibility for the content of the article.

\newpage
\bibliography{bib}

\end{document}